\newtcolorbox{myblock}{
  colback=blue!10,
  colframe=blue!50!black,
  fonttitle=\bfseries
}
\newtheorem{theorem}{Theorem}[section]
\newtheorem{proposition}[theorem]{Proposition}
\newtheorem{lemma}[theorem]{Lemma}
\theoremstyle{definition}
\newtheorem{definition}[theorem]{Definition}
\theoremstyle{remark}
\newtheorem{remark}[theorem]{Remark}
\newtheorem{problem}{Problem}
\begin{document}

\title{Age of Information Diffusion on Social Networks}
\author{Songhua Li and Lingjie Duan, \IEEEmembership{Senior Member, IEEE}
\thanks{A preliminary version of this work appears 
at MobiHoc'23 \cite{li2023age}. 
Songhua Li and Lingjie Duan are with the Pillar of Engineering Systems and Design, Singapore University of Technology and Design, Singapore.  \\Email: \{songhua\_li,lingjie\_duan\}@sutd.edu.sg.}
}

 \markboth{}%
{Songhua Li and Lingjie Duan: Age of Information Diffusion in Social Networks}


\maketitle

\begin{abstract}
To promote viral marketing, major
social platforms (e.g., Facebook Marketplace and Pinduoduo) repeatedly select and invite different users (as seeds) in online social networks to share fresh information about a product or service with their friends. Thereby, we are motivated to optimize a multi-stage seeding process of viral marketing in social networks, and adopt the recent notions of the peak and the average age of information (AoI) to measure the timeliness of promotion information received by network users.  Our problem is different from the literature on information diffusion in social networks, which limits to one-time seeding and overlooks AoI dynamics or information replacement over time.  
As a critical step, we manage to develop closed-form expressions that characterize and trace AoI dynamics over any social network.
For the peak AoI problem, we first prove the NP-hardness of our multi-stage seeding problem by a highly non-straightforward reduction from the dominating set problem, and
then
present a new polynomial-time algorithm that achieves good approximation guarantees (e.g., less than 2 for linear network topology).  To minimize the average AoI, we also prove that our problem is NP-hard by properly reducing it from the set cover problem. Benefiting from our two-sided bound analysis on the average AoI objective, we build up a new framework for approximation analysis and link our problem to a much simplified sum-distance minimization problem. This intriguing connection inspires us to develop another polynomial-time algorithm that achieves a good approximation guarantee. Additionally, our theoretical results are well corroborated by experiments on a real social network.
\end{abstract}
\begin{IEEEkeywords}
Age of information, social  network, multi-stage seeding, NP-hardness, approximation algorithms
\end{IEEEkeywords}
\section{Introduction}
\IEEEPARstart{T}{oday}, omnipresent online social networks (e.g., Facebook and WeChat) have revolutionized the way that people interact and share information, creating viral marketing opportunities for social commerce platforms (e.g., Facebook Marketplace and Pinduoduo) \cite{chang2020elaboration,chopra2023systematic}. Pinduoduo, for example, promotes its products by selecting and inviting its users to push and share promotion information via their WeChat accounts \cite{Pinduoduo}, and these users are well motivated to share with their friends to earn free products and coupons {\cite{li2022diffusion}.}
As promotion information becomes outdated over time, Pinduoduo periodically selects different users as seeds to update promotions to their friends and friends' friends in the social network timely. Thereby, we are motivated to optimize a practical multi-stage seeding process of viral marketing to keep promotion information that is received by users in social networks as fresh as possible. 

To evaluate information freshness from receivers' perspectives, age of information (AoI), which is coined in \cite{kaul2012real}, is
widely adopted as a standard performance metric in the literature  \cite{tripathi2022optimizing,yates2021age,talak2017minimizing,hsu2017age,xu2020peak}. AoI measures
the time elapsed since the latest information reached its
intended user.
In the literature, most AoI works either study broadcasting networks where a base station sends time-sensitive information
to its clients (e.g., \cite{bastopcu2019minimizing,hsu2017age,kadota2018scheduling,liu2021minimizing}), or focus on monitoring networks where clients coordinate to transmit their collected information to the base station timely (e.g., \cite{kadota2018optimizing,bedewy2019age,li2022scheduling}).  
In the context of broadcasting networks, Hsu et al. \cite{hsu2017age} proposed an MDP-based scheduling algorithm to minimize the long-run average AoI for noiseless channels, and Kadota et al. \cite{kadota2018scheduling}
minimized the expected weighted sum AoI of the clients for unreliable channels via transmission scheduling. In the context of monitoring networks, Tripathi et al. \cite{tripathi2021age} studied
a mobile agent's randomized trajectory to mine the data from ground terminals and optimize average AoI.  
In \cite{tripathi2022optimizing}, Tripathi et al. further studied 
the correlation among multiple coupled sources and provided an approximation solution for minimizing the weighted-sum average AoI. 
Besides average AoI, peak AoI also emerges as another important measure that quantifies the worst case AoI in a fair manner \cite{tripathi2021age,xu2020peak}. In priority queueing systems where a data source sends updates to a single processor, Xu and Gautam \cite{xu2020peak} derived closed-form expressions of the peak AoI, which further allow for analyzing the effects of specific service strategies. It is clear that prior AoI literature does not study fresh information diffusion in a general social network or optimize any seeding strategies over time. 

Additionally, recent works \cite{yates2021agegossip,buyukates2022version} consider the version age metric to assess information freshness in gossip networks, where each update at the source is treated as a version change and the version age indicates how many versions the information at the monitor is outdated. Unlike the original AoI metric, the version age of a monitor remains unchanged in between the version changes at the source. These studies typically consider a single source that is predefined and focus on characterizing the version age in various network structures (e.g., ring and fully connected networks \cite{yates2021agegossip} and community-like network \cite{buyukates2022version} where receiver nodes are grouped into equal-sized clusters). In contrast, our work aims to develop a multi-round seeding strategy to minimize either peak AoI (for full market coverage) or the average AoI (for spreading promotions to the majority of users), which better aligns with viral marketing's goal of increasing the likelihood of users' purchasing promoted products or services.

Our multi-stage seeding problem is also different from the traditional social network literature about information diffusion, which only limits to one-time seeding and overlooks AoI dynamics or information replacement over time. When information diffusion meets social networks, Bakshy et al. \cite{bakshy2012role} found through their extensive field experiments that users are likely to spread information via social networks. 
Lu et al. \cite{lu2015towards} proposed heuristic algorithms that achieve good performances in certain scenarios.  Ioannidis et al. \cite{ioannidis2009optimal} studied the problem of dynamic content dissemination in a complete contact (social) graph among users. Yet our social graph is not restricted to being complete and our problem needs to consider the AoI process of each node over the entire time horizon, 
rather than only at the final moment. These factors make our problem and analysis more intricate. For a more comprehensive understanding of social information diffusion, we refer interested readers to survey works \cite{bartal2021role,banerjee2020survey}.  The most related classical problem to ours is the NP-hard social influence maximization (SIM) problem \cite{kempe2003maximizing}, which seeks a subset $S\subseteq V$ of cardinality $k$ at a time
to maximize the influence spread function $\sigma(S)$ of the given social graph $G=(V, E)$. The submodularity of the objective function yields an approximation ratio of around $1-\frac{1}{e}$ 
\cite{nemhauser1978analysis,kempe2003maximizing}, where $e$ is the base of natural logarithm. However, SIM overlooks multi-round seeding and information propagation within the social network (where new information replaces outdated one) and fails to account for the information freshness experienced by users due to dynamic updates. These simplify both the algorithm design and approximation analysis there. Other related works include the \textit{$k$-median} \cite{cohen2022improved}  and the \textit{$k$-center} problems \cite{panigrahy1998ano,lu2015towards}.
For symmetric \cite{hochbaum1985best} and asymmetric \cite{panigrahy1998ano} graphs, the $k$-center problem achieves an approximation of a constant and an iterated logarithm of the number of nodes, respectively. Since different sequences of dynamically selected seeds make a huge difference in our AoI problem, all the above approximation algorithms and results turn out to be infeasible for our problem.

Our key novelty and main contributions are outlined below.
\begin{itemize}
    \item \textit{Optimizing the Age of Information Diffusion on Social Networks via Multi-stage Seeding Process.} To the best of our knowledge, our work is the first to model and optimize a multi-stage information seeding process on social networks. We practically allow fresh information to replace any outdated information during the network diffusion, and our multi-stage seeding process over any social network topology makes it intricate to trace the network AoI dynamics. By considering two distinct objectives, i.e., the peak and the average AoI of the network, we comprehensively study the dynamic optimization problem.  
    \item \textit{Closed-form Characterization of AoI Tracing and NP-hardness Proofs}. As a critical step, we successfully derive closed-form expressions that trace the average and the peak AoI objectives of any social network, respectively, which is highly non-trivial due to the dynamics in the multi-stage seeding process. By non-trivial reductions from the dominating set and set cover problems, respectively, we prove that both of our problems for peak and average AoI minimization are NP-hard.
      \item \textit{Fast Algorithm for peak AoI minimization with Provable Approximation Guarantee}. By focusing on a fine-tuned set of seed candidates along the social graph diameter, we design a new polynomial-time algorithm that guarantees good approximations as compared to the optimum. Particularly, our algorithm is proven to achieve an approximation of less than 2 for linear network topology. For a general network topology, we equivalently reduce it to a special histogram structure and analytically provide a provable approximation guarantee.
\item \textit{Fast Algorithm for average AoI minimization with Provable Approximation Guarantee.}  As the average AoI explicitly takes into account every user's AoI dynamics, it is more involved to optimize. We further provide two-sided bound analysis to build up a new framework for average AoI approximation analysis. This enables us to link our problem to a remarkably simplified sum-distance minimization problem and helps design a polynomial-time algorithm that achieves a good approximation guarantee. Finally, we validate our theoretical results via extensive experiments on a realistic social network.
\end{itemize}

The rest of this paper is organized as follows. Section~\ref{sec_problem_statement} gives the system model and problem statement. In Section~\ref{sec_aoi_formulation}, we derive closed-form expressions of our objectives and prove the NP-hardness of our problem. Section~\ref{sec_peak_aoi} and Section~\ref{sec_average_aoi} provide our algorithm design and approximation guarantee for our minimization problem under the peak and average AoI objectives, respectively. Finally,  our simulation results are summarized in Section~\ref{sec_experiments}.

%
\section{System Model and Problem Statement}\label{sec_problem_statement}
We consider a budget-aware viral marketing platform (e.g.,  Facebook Marketplace and Pinduoduo) that periodically selects one user as a seed in each of $k$ seeding rounds to dynamically diffuse the latest promotion updates over the social network. In other words, a new seed is selected in each round to propagate the updated promotion through the social network, allowing fresher promotions to replace older ones at each node.
Fig.~\ref{illustration_graph} displays the system configuration.
\begin{figure}[t]
    \centering
\includegraphics[width=6cm]{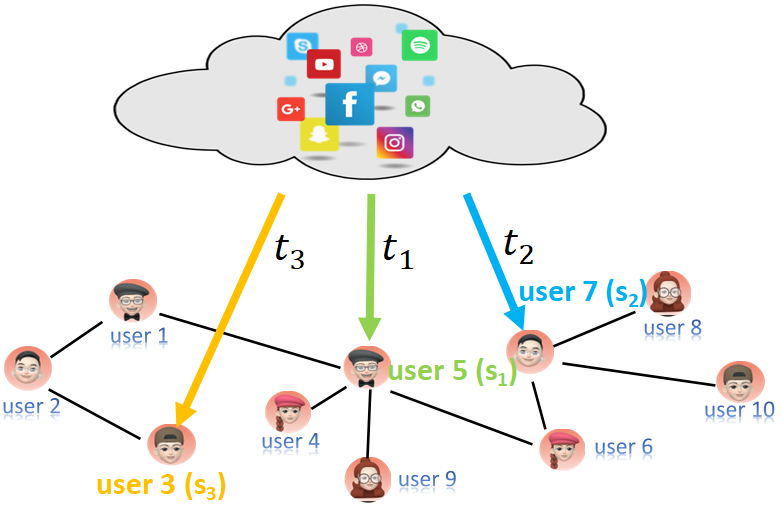}
    \caption{An illustrative example of the viral marketing system where the time gap between two consecutive seeding timestamps is $\Delta=2$. Here, we consider three rounds of seeding as $(s_1,s_2,s_3)=(v_5,v_7,v_3)$ to dynamically choose users 5, 7, 3 at time $t_1=1$, $t_2=3$ and $t_3=5$ for information diffusion.}
    \label{illustration_graph}
\end{figure}

Formally,
we model the social network as an undirected and connected graph $G=(V, E)$, where a node $v\in V$ represents a distinctive user in the network and an edge $(v_i,v_j)\in E$ tells a social connection between two users $v_i$ and $v_j$ to share messages. 
Suppose the network involves $n$ users and $m$ social connections, i.e., $|V|=n$ and $|E|=m$. We consider discretely slotted time horizon of $T$ slots and index current time by $t\in \{0,1,..., T\}$. 

Denote $S_k=(s_1,...,s_k)$ as the sequence of $k$ seeds that the platform dynamically selects from $V$ to diffuse the latest promotion information within the time horizon. Since many viral marketing promotions are practically scheduled 
in a periodic way
\cite{raja2012study}, we denote the time gap between any two consecutive promotion updates as $\Delta$. That is, at each timestamp $t_j=(j-1)\cdot\Delta+1$ with $j\in \{1,...,k\}$, the platform selects a user as a seed for viral marketing. 
Table~\ref{notationinouralgorithm} summarizes the key notation of this paper.

Now, we introduce the process of social information diffusion and our AoI models under multi-stage seeding. Without loss of generality, we normalize the minimum AoI value to one representing the smallest age of information, as in \cite{wang2022dynamic,tripathi2022optimizing}. 
For any user $v_i\in V$ at time $t$, its real-time AoI is denoted as $A(v_i,t)$ and evolves as follows.
\begin{enumerate}
    \item When $t=0$, the AoI of each user $v_i\in V$ is denoted to be a constant $A_{0}$, i.e., $A(v_i,0) = A_0$. Note that we can easily extend our solution to different initial ages, which will be shown via our experiments in  Section \ref{sec_experiments}.
    \item  At
    timestamp $t_j$ when the $j$-th seed $s_j$ is selected as the source to propagate the $j$-th latest promotion information, the AoI of $s_j$ drops to 1 immediately, i.e.,  $ A(s_j,t_j)=1$. This means that at time $t_j$, the $j$-th promotion at seed $s_j$ is the freshest among all promotions propagated up to that time. 
    \item  Each seed node will 
disseminate the new promotion to its neighbors in the social network \cite{bakshy2012role}. Upon receiving the update, each neighbor will then propagate the information to her own neighbors, and this process continues.  For ease of exposition, we suppose that information dissemination through each social connection (i.e., edge) consumes a normalized unit of time. 
\item  At any given time, if a node receives multiple information updates from its neighbors, it takes the freshest one and updates its AoI accordingly to align with the age of the taken information; otherwise, the AoI of the node will 
increase linearly over time.
\end{enumerate}
Let us consider an illustrative example of information diffusion and AoI evolution  in Figure \ref{illustration_graph}: 
user 5 is firstly selected as a seed at timestamp $t_1=1$, resulting in its AoI to be $A(v_5,t_1)=1$ (i.e., user 5's AoI at time $t_1$ is updated to 1). The updated information in green (which is from seed user 5) will reach user 5's social neighbors (i.e., users 1, 4, 6, 9) after a unit time slot, and the AoI of these neighbor users will be updated to two (which is the age of the green information at time $2$) at time $t_1+1=2$. User 5's social neighbors will further disseminate the information to their neighbors accordingly. Furthermore, when some new information reaches a node, the node will replace its old information with the new one. For instance, at time $t_1+2=3$, the green information will reach user 7 by disseminating from user~6. In the meantime of $t_2 = 1+\Delta =3$, user~7 is selected as a new seed to disseminate the new information in blue. Consequently, user~7 will replace its green information with the blue one, yielding $A(v_7,t_2)=1$. 

Due to the complexity of our multi-stage seeding process and the social network topology, directly tracing and analyzing the interrelated AoI dynamics among network nodes is challenging. To address this, we propose a new method that accurately tracks the AoI dynamics of each network node in closed form later in Section~\ref{sec_aoi_formulation}.
Before that, we proceed with problem formulations.
\begin{table}[t]
    \caption{Key notation in this paper.}
\begin{tabular}{ll}
    \hline
    Notation & Physical meaning\\
    \hline
    $n\triangleq |V|$ & The order of $G$ or overall user number.\\
  $m\triangleq |E|$ & The edge size of $G$ or overall social connections.\\
$dist(u,v)$ & The shortest distance between nodes $u$ and $v$ in $G$.\\
$diam(G)$& The diameter of the graph $G$ or diameter path.\\ 
    $T$  &Time horizon considered in this paper.\\ 
        $k$& Number of seeds to be selected within $T$ slots.\\
        $\Delta$ & Time gap between consecutive seeding time.\\
       $t_i$& The $i$th timestamp to select the $i$th seed.\\
    $s_i$& The $i$th seed which is selected at timestamp $t_i$.\\
    $S_i\triangleq (s_1,...,s_i)$& The first $i$ selected seeds.\\
 $A(v_i,t)$ & The real-time AoI of a node $v_i\in V$ at time $t$. \\
 \hline
    \end{tabular}
    \label{notationinouralgorithm}
\end{table}

We examine two optimization objectives comprehensively, focusing on fairness and efficiency: the 
peak and the average AoI, both of which are evaluated over the time horizon $[0,T]$. 

The peak AoI of the network $A^{\rm peak}$ is defined as 
\begin{equation} \label{def_peak_aoi_network}
    A^{\rm peak}= \max\limits_{v_i\in V}\{A_i^{\rm peak}\},
\end{equation}
in which $A_i^{\rm peak}\triangleq \max\limits_{ t\in[0,T]}\{A(v_i,t)\}$ denotes the peak AoI of node $v_i\in V$ over the time horizon. 

The average AoI of the network $A^{\rm avg}$ is defined as 
\begin{equation}\label{def_avg_aoi_network}
    A^{\rm ave} = \frac{1}{n}\sum\limits_{v_i\in V} A_i^{\rm ave},
\end{equation}
where $A_i^{\rm ave}\triangleq \frac{1}{T}\int_{t=0}^T A(v_i,t)dt$ denotes the average AoI of node $v_i\in V$ over the time horizon.  

Accordingly, we aim to minimize the following two objectives via multi-stage seeding with $(s_1,...,s_k)$, respectively:
 
\textbf{Objective 1}: \textit{Peak AoI Minimization Objective}: 
\begin{equation}\label{problem_peak_aoi}
   \min\limits_{(s_1,...,s_k)} \max\limits_{v_i\in V}\max\limits_{0\leq t\leq T}\{A(v_i,t)\}.
\end{equation}

\textbf{Objective 2}:
\textit{Average AoI Minimization Objective}:
\begin{equation}\label{problem_average_aoi}
       \min\limits_{(s_1,...,s_k)} \frac{\sum\limits_{v_i\in V} \int_{t=0}^T A(v_i,t)dt}{n\cdot T}.
\end{equation}

\begin{remark} 
To boost revenue, a viral marketing platform typically seeks to expose users to more promotions earlier, thereby increasing the likelihood of purchasing promoted products or services \cite{ailawadi2006promotion}. In this context, our peak and average AoI objectives (\ref{problem_peak_aoi})-(\ref{problem_average_aoi}) align well with the common goal of viral marketing platforms to maximize revenue for two reasons: First, a smaller AoI for a particular user at any given time directly indicates that the user has received the updated promotion, which the platform prefers to be spread to the public over outdated ones; Second,  a smaller AoI for a particular user at any given time also indicates that the user has received more promotions through the social network.  
\end{remark}

In this paper, we aim at tractable algorithms that guarantee good peak/average AoI performance in the worst-case scenario, which is evaluated via a standard metric \textit{approximation ratio}. Given an instance $I(G,T, \Delta, A_{0})$ of the problem, let $A^{\rm peak}_{\rm ALG}(I)$ and $A^{\rm peak}_{\rm OPT}(I)$ represent the peak AoI performance generated by our algorithm (ALG) and an optimal solution (OPT), respectively, for the same instance $I$. The approximation ratio of our algorithm ALG for the peak AoI  minimization problem in (\ref{problem_peak_aoi}), denoted as $\gamma_{\rm ALG}^{\rm peak}$, is defined as the supremum of the ratio of ALG's peak AoI over OPT's peak AoI among all possible instances $I$, i.e., 
\begin{equation}\label{peak_aoi_def}
    \gamma_{\rm ALG}^{\rm peak}=\sup\limits_{I(G,\Delta)}\frac{A^{\rm peak}_{\rm ALG}(I)}{A^{\rm peak}_{\rm OPT}(I)}.
\end{equation}
Denote $A^{\rm avg}_{\rm ALG}(I)$ and $A^{\rm avg}_{\rm OPT}(I)$ as the average AoIs generated by ALG and OPT, respectively, for the same instance $I$. Similar to (\ref{peak_aoi_def}), the approximation ratio for the average AoI minimization problem in (\ref{problem_average_aoi}) is defined as
\begin{equation}\label{avg_aoi_def}
    \gamma_{\rm ALG}^{\rm avg}=\sup\limits_{I(G,\Delta)}\frac{A^{\rm avg}_{\rm ALG}(I)}{A^{\rm avg}_{\rm OPT}(I)}.
\end{equation}
Then, our aim is to pursue fast (polynomial-time) algorithms that guarantee good approximation ratios of small (\ref{peak_aoi_def}) and (\ref{avg_aoi_def}) for the two problems (\ref{problem_peak_aoi}) and (\ref{problem_average_aoi}), respectively.

Since the AoI optimization objectives in problems (\ref{problem_peak_aoi}) and (\ref{problem_average_aoi}) have not yet been quantified, our first critical step is to quantify AoI objectives in the next section. Following that, we will provide a theoretical foundation by proving the NP-hardness of our problems and designing approximation algorithms. 
\section{AoI Tracing and NP-hardness}\label{sec_aoi_formulation} 
To derive the exact AoI objective expressions necessary for analyzing problems (\ref{problem_peak_aoi}) and (\ref{problem_average_aoi}), we must consider the real-time AoI dynamics of each node. Based on our information diffusion model described in Section~\ref{sec_problem_statement}, the AoI $A(v_i,t)$ of a node $v_i\in V$ drops only when $v_i$ is updated by some of its neighbors with fresher information at time $t$, and increases linearly otherwise.  Note that, at any given time, the real-time AoI of all nodes holding the same information remains the same. This implies that the age of any updated information can be traced back to the seed of that information. 

Hence, it is crucial to determine which seed in $S_k=(s_1,...,s_k)$ contributes to the decrease of node $v_i$' AoI at time~$t$. Specifically, any such seed, say $s_x$, satisfies the following two conditions (\ref{condition_seed_01})-(\ref{condition_seed_02}).
First, the information update from seed $s_x$ reaches $v_i$ exactly at time $t=t_x+dist(s_x,v_i)$,  where $dist(s_x, v_i)$ represents the shortest time required to propagate the $x$-th promotion information to node $v_i$. According to our model in Section~\ref{sec_problem_statement}, this duration $dist(s_x, v_i)$ is numerically equal to the shortest distance between nodes $s_x$ and $v_i$ in the social graph~$G$. This leads to the following condition (\ref{condition_seed_01}), which holds since $dist(s_x,v_i)\geq 0$, implying that $s_x$ is selected no later than time~$t$.
\begin{equation}\label{condition_seed_01}
        1\leq t_x\leq t.
    \end{equation}
Second, the new information propagated from seed $s_x$ must be fresher than any prior information that $v_i$ has received before time $(t-1)$, which results in the following condition (\ref{condition_seed_02}).
\begin{equation}\label{condition_seed_02}
    1+t-t_x\leq A(v_i,t-1).
\end{equation}
Since multiple seeds may simultaneously meet the conditions (\ref{condition_seed_01})-(\ref{condition_seed_02}) at time $t$, we define the following set $\Omega(v_i,t)$ 
to summarize all such seeds that could lead to $v_i$'s information update at time $t$:
\begin{equation*}
\begin{split}
 &\Omega(v_i,t)\\
 &\triangleq{\{s_x|1\leq t_x\leq t,\;t-t_x=dist(v_i,s_x)\leq  A(v_i,t-1)-1\}}.
 \end{split}
\end{equation*}

In light of $\Omega(v_i,t)$, we can trace and formulate $v_i$'s AoI over time $[0,T]$ directly by the following Lemma \ref{lemma_aoiof_vi},
where user $v_i$ will update its AoI at time $t$ by the freshest information it receives up to time $t$.
\begin{lemma}\label{lemma_aoiof_vi}
Given $S_k={s_1,...,s_k}$ as the set of selected seeds, the following holds, where $\Omega$ specifies $\Omega(v_i,\left \lfloor t \right \rfloor)$:
\begin{equation*}
\begin{split}
&A(v_i,t)=\\
    &\left\{\begin{matrix}
    A(v_i,0)+t,&{\rm\;if\;}t\in[0,1),\\
    \min\limits_{s_x\in\Omega}\{1+dist(v_i,s_x)\}+t-\left \lfloor t \right \rfloor, &{\rm\;if\;} \Omega\neq \varnothing,t\in [1,T],\\
A(v_i,\left \lfloor t \right \rfloor-1)+1+t-\left \lfloor t \right \rfloor, &{\rm\;if\;}\Omega=\varnothing, t\in [1,T].\\ 
\end{matrix}\right.
\end{split}
\end{equation*}

\end{lemma}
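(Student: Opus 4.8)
The plan is to prove Lemma~\ref{lemma_aoiof_vi} by induction on integer time, after first reducing the continuous-time statement to an integer-time one. Since every seed $s_j$ is released at an integer timestamp $t_j=(j-1)\Delta+1$ and traversing each edge costs one unit of time, a node can only receive an information update at an integer time; hence on each half-open interval $[\lfloor t\rfloor,\lfloor t\rfloor+1)$ the AoI of $v_i$ grows at unit rate, so $A(v_i,t)=A(v_i,\lfloor t\rfloor)+(t-\lfloor t\rfloor)$. For $t\in[0,1)$ this combines the initialization $A(v_i,0)=A_0$ with unit-rate growth, giving $A_0+t$; for $t\in[1,T]$ it then suffices to verify the two remaining formulas at the integer point $\tau=\lfloor t\rfloor\ge1$, where $t-\lfloor t\rfloor=0$.

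For the integer-time claim I would first establish the auxiliary closed form
\[
A(v,\tau)=\min\Bigl(A_0+\tau,\ \min\{\,1+\tau-t_x:\ 1\le t_x\le\tau,\ dist(s_x,v)\le\tau-t_x\,\}\Bigr)
\]
for every node $v$ and every integer $\tau\ge0$, by induction on $\tau$ carried out simultaneously over all nodes. The base case $\tau=0$ is the initialization. For the inductive step, the diffusion rules imply that the freshest information held by $v$ at time $\tau$ is the freshest among: the information $v$ held at $\tau-1$, aged by one unit; for each neighbor $u$ of $v$, the information $u$ held at $\tau-1$, aged by the one unit of transit; and the fresh age $1$ if $v$ is itself the seed selected at $\tau$. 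Equivalently, $A(v,\tau)=\min\{A(u,\tau-1)+1:u\in N[v]\}$ with $N[v]$ the closed neighbourhood, further reduced to $1$ when $v=s_j$ and $t_j=\tau$ (an override that can only decrease the value, since ages are at least $1$). Substituting the inductive hypothesis and using the elementary fact that some $u\in N[v]$ satisfies $dist(s_x,u)\le\tau-1-t_x$ if and only if $dist(s_x,v)\le\tau-t_x$, one obtains the displayed form at $\tau$, the ``seed at $\tau$'' value being exactly the $t_x=\tau$ term.

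It then remains to rephrase the auxiliary form as the $\Omega$-based dichotomy. Comparing the closed forms at $\tau$ and at $\tau-1$ gives $A(v_i,\tau)=\min\bigl(A(v_i,\tau-1)+1,\ \min\{1+\tau-t_x:\ dist(s_x,v_i)=\tau-t_x,\ 1\le t_x\le\tau\}\bigr)$, because only seeds whose update reaches $v_i$ \emph{exactly} at time $\tau$ can beat the ``keep and age'' value $A(v_i,\tau-1)+1$. Observe that $\Omega(v_i,\tau)$ is precisely the sub-collection of those seeds whose arrival age $1+\tau-t_x=1+dist(s_x,v_i)$ does not exceed $A(v_i,\tau-1)$. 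If $\Omega(v_i,\tau)=\varnothing$, every such seed has arrival age strictly greater than $A(v_i,\tau-1)$, hence at least $A(v_i,\tau-1)+1$ by integrality of the integer-time AoI values (valid since $A_0\in\mathbb{Z}_{\ge1}$), so $A(v_i,\tau)=A(v_i,\tau-1)+1$; if $\Omega(v_i,\tau)\ne\varnothing$, the seed attaining $\min\{1+dist(s_x,v_i):dist(s_x,v_i)=\tau-t_x\}$ already lies in $\Omega(v_i,\tau)$ and its value is $\le A(v_i,\tau-1)<A(v_i,\tau-1)+1$, so $A(v_i,\tau)=\min_{s_x\in\Omega(v_i,\tau)}\{1+dist(v_i,s_x)\}$. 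Re-attaching the linear interpolation of the first paragraph then reproduces all three cases of the lemma.

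I expect the main obstacle to be a careful justification of the one-step recursion $A(v,\tau)=\min_{u\in N[v]}A(u,\tau-1)+1$ directly from the informal diffusion description: one must argue that information is never effectively ``blocked'' at an intermediate node --- whenever a node declines to forward a stale copy it is already forwarding something at least as fresh, which then dominates along every onward shortest path --- and that a node forwarding only upon receipt of fresher information loses nothing in AoI terms. A secondary subtlety is the integrality point used above: because $\Omega(v_i,\tau)$ is defined through comparison with $A(v_i,\tau-1)$ rather than with the instantaneous pre-update age $A(v_i,\tau-1)+1$, the equivalence with the two-case formula requires all integer-time AoI values to be integers, hence $A_0$ to be a positive integer.
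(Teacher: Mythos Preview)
Your proof is correct and follows the same essential route as the paper: reduce to integer times via unit-rate growth on $[\lfloor t\rfloor,\lfloor t\rfloor+1)$, then split at each integer $\bar t$ on whether $\Omega(v_i,\bar t)$ is empty (AoI rises by one) or not (AoI is the minimum arrival age $1+dist(v_i,s_x)$). The paper's own argument is little more than this case split read off the model description.

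Where you go further is in supplying the auxiliary global closed form $A(v,\tau)=\min\bigl(A_0+\tau,\ \min\{1+\tau-t_x:\ dist(s_x,v)\le\tau-t_x\}\bigr)$ and deriving it from the one-step recursion $A(v,\tau)=\min_{u\in N[v]}A(u,\tau-1)+1$; the paper never states either explicitly and simply asserts that $s_x$'s update reaches $v_i$ at $t_x+dist(s_x,v_i)$. Your version is therefore more self-contained. The two technical points you flag---that a node declining to forward a stale copy can never delay a fresher one along any onward shortest path, and that the $\Omega$-dichotomy matches the true minimum only because integer-time AoI values are integers (so $A_0\in\mathbb{Z}_{\ge1}$)---are genuine gaps in the paper's brief write-up that you have correctly identified and addressed.
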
 
Although Lemma~\ref{lemma_aoiof_vi} provides some insights into the AoI evolution over network and time domains, it cannot be directly applied to estimate either peak AoI (\ref{problem_peak_aoi}) or the average AoI (\ref{problem_average_aoi}).  To identify the specific time points at which $v_i$'s AoI drops, we introduce the following definition of \textit{discontinuity points} for each node $v_i$.
\begin{definition}[Discontinuity point]\label{def_discontinuitypoint}
A time point $t$ is called a discontinuity point to node $v_i$ if $\Omega(v_i,t)\neq \varnothing$. In other words, at some time $t$, when $\Omega(v_i,t)\neq \varnothing$, node $v_i$ will experience a decrease in its AoI as a result of receiving a promotion updates from some of its neighbors in $\Omega(v_i,t)$.
\end{definition}

To identify those discontinuity points of a node $v_i$, we have the following Lemmas~\ref{prop_dicontinuity_01} and~\ref{prop_dicontinuity_02} for guidance. For convenience, we use $[k]$ to refer to the index set $\{1,...,k\}$. 
\begin{lemma}\label{prop_dicontinuity_01}
Given the sequence $S_k=(s_1,...,s_k)$ of dynamically selected seeds, each of  $v_i$'s discontinuity points can be found in set $\{t_x+dist(s_x,v_i)|x\in [k]\}$.
\end{lemma}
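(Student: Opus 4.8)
\textbf{Proof proposal for Lemma~\ref{prop_dicontinuity_01}.}

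The plan is to argue directly from the definition of a discontinuity point and the structure of the set $\Omega(v_i,t)$. Suppose $t$ is a discontinuity point of node $v_i$, i.e., $\Omega(v_i,t)\neq\varnothing$. By definition of $\Omega(v_i,t)$, there exists a seed $s_x\in S_k$ with $1\le t_x\le t$ satisfying $t-t_x=dist(v_i,s_x)$, which immediately rearranges to $t=t_x+dist(s_x,v_i)$. Hence $t\in\{t_x+dist(s_x,v_i)\mid x\in[k]\}$, which is exactly the claimed membership. So at the level of a single isolated discontinuity point, the statement is essentially a direct unwinding of definitions.

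The subtlety — and what I expect to be the main obstacle — is that $\Omega(v_i,t)$ is defined in terms of $A(v_i,t-1)$, which itself depends recursively on earlier discontinuity points of $v_i$; so one must make sure the argument is not circular and that \emph{every} discontinuity point (not just the first) lands in the candidate set. The clean way to handle this is induction on the discontinuity points of $v_i$ ordered along the time axis: the base case is the earliest time $t$ at which $\Omega(v_i,t)\neq\varnothing$, and for this one the membership $t=t_x+dist(s_x,v_i)$ follows from the defining condition $t-t_x=dist(v_i,s_x)$ regardless of the value of $A(v_i,t-1)$ (which only gates \emph{whether} $\Omega$ is nonempty, not the arithmetic identity). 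For the inductive step, assume all earlier discontinuity points of $v_i$ have been shown to lie in $\{t_x+dist(s_x,v_i)\}$; then for the next one $t'$ with $\Omega(v_i,t')\neq\varnothing$, pick any $s_y\in\Omega(v_i,t')$ and again read off $t'=t_y+dist(s_y,v_i)$ from the condition $t'-t_y=dist(v_i,s_y)$ in the definition of $\Omega$. Since this arithmetic identity is baked into the membership condition of $\Omega$, it holds no matter what $A(v_i,t'-1)$ equals, so the recursion causes no real trouble — it only affects which indices $x$ survive into $\Omega$, i.e., it may make the containment strict.

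It is worth emphasizing in the write-up that the lemma is a \emph{containment}, not an equality: not every $t_x+dist(s_x,v_i)$ need be a genuine discontinuity point, because the freshness condition (\ref{condition_seed_02}) may fail (a closer or later seed may already have refreshed $v_i$ more). This is precisely why Lemma~\ref{prop_dicontinuity_02} is needed afterwards to prune the candidate set down to the actual discontinuity points. So the proof of Lemma~\ref{prop_dicontinuity_01} should be short: state that $t$ being a discontinuity point means $\Omega(v_i,t)\neq\varnothing$, extract some $s_x$ from $\Omega(v_i,t)$, invoke the defining relation $t-t_x=dist(v_i,s_x)$, and conclude $t=t_x+dist(s_x,v_i)\in\{t_x+dist(s_x,v_i)\mid x\in[k]\}$. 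No heavy calculation is required; the only thing to be careful about is phrasing it so the reader sees the recursion in $\Omega$ is harmless for this direction.
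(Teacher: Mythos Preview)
Your proposal is correct and takes essentially the same approach as the paper: both argue that a discontinuity of $v_i$ can only occur when information from some seed $s_x$ arrives at $v_i$, which happens precisely at time $t_x+dist(s_x,v_i)$. The paper's proof is a two-line physical argument (each seed's information reaches $v_i$ at $t_x+dist(s_x,v_i)$, and drops in $A(v_i,t)$ only occur at such arrivals), whereas you route through the formal definition of $\Omega(v_i,t)$; your own observation that the recursion in $A(v_i,t-1)$ is harmless already shows the inductive scaffolding is unnecessary, so you could safely collapse the write-up to the short direct argument you sketch at the end.
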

Intuitively, Lemma \ref{prop_dicontinuity_01} offers a ground set $\{t_x+dist(s_x,v_i)|x\in [k]\}$ from which we can search to find all possible discontinuity points to each user node $v_i\in V$. To precisely pin down those discontinuity points of a node $v_i$, 
let us consider
two seeds, say $s_j$ and $s_y$, where seed $s_y$'s information is fresher than seed $s_j$'s and reaches node $v_i$ 
earlier than $s_j$'s diffusion. It is clear that $v_i$ will update its AoI by $s_y$'s information rather than $s_j$'s. Based on this observation, we have the following lemma to help us identify non-discontinuity points within the ground set $\{t_x+dist(s_x,v_i)|x\in [k]\}$.
\begin{lemma}\label{prop_dicontinuity_02}
An element $t_j+dist(s_j,v_i)\in\{t_x+dist(s_x,v_i)|x\in [k]\}$ is not $v_i$'s discontinuity point, if 
there exists some other element  $t_y+dist(s_y,v_y)\in \{t_x+dist(s_x,v_i)|x\in [k]\}$) which satisfies $t_y>t_j$ and $t_y+dist(s_y,v_y)\leq t_j+dist(s_j,v_i)$.
\end{lemma}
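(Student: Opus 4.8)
The plan is to show that the arrival time $t:=t_j+dist(s_j,v_i)$ carries no sufficiently fresh update from $s_j$, because the later seed $s_y$ has already delivered strictly fresher information to $v_i$ no later than time $t$; concretely, I will argue $s_j\notin\Omega(v_i,t)$, so that the value $t_j+dist(s_j,v_i)$ may be discarded from the candidate set of Lemma~\ref{prop_dicontinuity_01}. Write $t':=t_y+dist(s_y,v_i)$ for the time $s_y$'s information first reaches $v_i$. By hypothesis $t'\le t$, and since all seeding timestamps are integers, $t_y\ge t_j+1$, hence $1+dist(s_y,v_i)=1+t'-t_y\le t'-t_j$, i.e.\ $s_y$'s information is strictly fresher than $s_j$'s at any common time.

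The key step is a monotone upper bound obtained from the closed form of Lemma~\ref{lemma_aoiof_vi}: for every integer $\tau$ with $t'\le\tau\le T$,
\begin{equation*}
A(v_i,\tau)\le 1+\tau-t_y .
\end{equation*}
I would prove this by induction on $\tau$. For the base case $\tau=t'$, a short case split on whether $s_y\in\Omega(v_i,t')$ settles it: if $s_y\in\Omega(v_i,t')$, then $A(v_i,t')=\min_{s_x\in\Omega(v_i,t')}\{1+dist(v_i,s_x)\}\le 1+dist(v_i,s_y)=1+t'-t_y$; if $s_y\notin\Omega(v_i,t')$, then, because the timing requirements $1\le t_y\le t'$ and $t'-t_y=dist(v_i,s_y)$ already hold for $s_y$, its failure to belong to $\Omega(v_i,t')$ can only come from $dist(v_i,s_y)>A(v_i,t'-1)-1$, and then both relevant branches of Lemma~\ref{lemma_aoiof_vi} at time $t'$ still yield $A(v_i,t')\le 1+dist(v_i,s_y)$ (in the $\Omega=\varnothing$ branch one uses that $A(v_i,t'-1)$ is integer-valued as soon as $v_i$ has ever been updated, and that the ``never updated'' alternative $A(v_i,t'-1)=A_0+t'-1$ contradicts $A_0\ge 1$ together with $dist(v_i,s_y)>A(v_i,t'-1)-1$). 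For the inductive step, assume $A(v_i,\tau)\le 1+\tau-t_y$; if $\Omega(v_i,\tau+1)=\varnothing$ then $A(v_i,\tau+1)=A(v_i,\tau)+1\le 1+(\tau+1)-t_y$, and if $\Omega(v_i,\tau+1)\neq\varnothing$ then every $s_x$ in it satisfies $1+dist(v_i,s_x)=1+(\tau+1)-t_x\le A(v_i,\tau)\le 1+\tau-t_y$ by the membership inequality $dist(v_i,s_x)\le A(v_i,\tau)-1$, so the minimum defining $A(v_i,\tau+1)$ is again at most $1+(\tau+1)-t_y$.

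Evaluating the bound at $\tau=t-1$ (the degenerate boundary $t'=t$ is handled directly: there $dist(v_i,s_y)=t-t_y<t-t_j=dist(v_i,s_j)$ forces $s_y$ rather than $s_j$ to attain the minimum defining $A(v_i,t)$, so $s_j$'s contribution is again redundant) gives $A(v_i,t-1)\le 1+(t-1)-t_y=t-t_y\le t-t_j-1=dist(s_j,v_i)-1$. Were $s_j\in\Omega(v_i,t)$, membership would force $dist(v_i,s_j)=t-t_j\le A(v_i,t-1)-1\le dist(s_j,v_i)-2$, a contradiction; thus condition~(\ref{condition_seed_02}) fails for $s_j$ at time $t$, so $s_j\notin\Omega(v_i,t)$ and the value $t_j+dist(s_j,v_i)$ is not generated as a discontinuity point by $s_j$, which is exactly what the pruning of the ground set in Lemma~\ref{prop_dicontinuity_01} requires. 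I expect the main obstacle to be making the induction airtight against the three-way case structure of Lemma~\ref{lemma_aoiof_vi}, especially the base case $\tau=t'$ (ruling out that $v_i$ still carries its initial age $A_0$) and the simultaneous-arrival case $t'=t$; the inductive step itself is routine once the $\Omega$-membership inequality is invoked.
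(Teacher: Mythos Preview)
Your argument is correct and rests on the same idea as the paper's appendix proof: once the later seed $s_y$ has reached $v_i$ no later than time $t=t_j+dist(s_j,v_i)$, the freshness condition~(\ref{condition_seed_02}) fails for $s_j$ at time $t$. The paper simply asserts this in one line (and in fact writes its implication chain in the converse direction), whereas you supply the missing rigor via the inductive bound $A(v_i,\tau)\le 1+\tau-t_y$ for $\tau\ge t'$; that induction is exactly the step the paper skips.

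One point worth highlighting as an improvement over the paper: you correctly conclude only that $s_j\notin\Omega(v_i,t)$, not that $\Omega(v_i,t)=\varnothing$. Under Definition~\ref{def_discontinuitypoint} the latter is what ``not a discontinuity point'' literally means, and it can fail in the boundary case $t'=t$ (since then $s_y\in\Omega(v_i,t)$). Your weaker conclusion is nevertheless precisely what Algorithm~\ref{DiscontinuityPoint} needs to justify removing the $j$-indexed element from the candidate set, so your reading of the lemma is the right one for its intended use.
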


\begin{algorithm}
\caption{Discontinuity Point Finder}\label{DiscontinuityPoint}
 \textbf{Data}: {$v_i\in V$, $S_k=(s_1,...,s_k)$.}\\
  \textbf{Result}: {discontinuity points of $v_i$, i.e., $(s_{i_1},...,s_{i_{k_i}})$.}\\
  \begin{algorithmic}[1] 
 \STATE Initialize $U_i\leftarrow\{t_x+dist(s_x,v_i)|x\in [k]\}$.\\ 
    \FOR{$p= 1,...,k-1$}
    \FOR{$q=p+1,...,k$}
\IF{$t_q+dist(s_q,v_i)\leq t_p+dist(s_p,v_i)$}
\STATE $U_i\leftarrow U_i-\{t_p+dist(s_p,v_i)\}$.\\
    \textbf{Break}.
    \ENDIF
    \ENDFOR
    \ENDFOR
\end{algorithmic}
\end{algorithm}

Based on our Lemmas~\ref{prop_dicontinuity_01} and~\ref{prop_dicontinuity_02}, a straightforward approach to finding all discontinuity points of each node $v_i\in V$ is to remove from the ground set ${t_x+dist(s_x,v_i)|x\in [k]}$ all those non-discontinuity points of $v_i$ one by one. Accordingly, we have Algorithm~\ref{DiscontinuityPoint} which runs in $O(k^2)$ time due to its dominating Steps 2-9 in its nested for-loop. 

However, we find this is not sufficiently efficient.  In the following, we propose a new Algorithm~\ref{ImprovedDiscontinuityPoint} that finds discontinuity points, which are due to the conditions (\ref{condition_seed_01})-(\ref{condition_seed_02}), backward-sequentially in the $\{t_x+dist(s_x,v_i)|x\in [k]\}$ of Lemma~\ref{prop_dicontinuity_01}. This process can be achieved by the non-nested while-loop in Algorithm~\ref{ImprovedDiscontinuityPoint}, which further accelerates the finding by reducing the time complexity significantly from $O(k^2)$ to $O(k)$. 

%
\begin{algorithm}
\caption{Accelerated Discontinuity Finder}\label{ImprovedDiscontinuityPoint}
\textbf{Data:} {$v_i\in V$, $S_k=\{s_1,...,s_k\}$.}\\
  \textbf{Result}: {Stack $U_i$ of $v_i$'s discontinuity points.}\\
\begin{algorithmic}[1]
\STATE \textbf{Initialization.}   $W_i\leftarrow\{t_x+dist(s_x,v_i)|x\in [k]\}$.
\STATE ${\rm Pop}(W_i,\tau_{\rm up})$ and $U_i\leftarrow\{\tau_{\rm up}\}$.
   \WHILE{$W_i\neq \varnothing$}
     \STATE ${\rm Pop}(W_i,\tau_{\rm down})$.
 \IF{$\tau_{\rm down}< \tau_{\rm up}$}
\STATE ${\rm Push}(U_i, \tau_{\rm down})$.
\STATE $\tau_{\rm up} \leftarrow \tau_{\rm down}.$
\ENDIF
\ENDWHILE
\end{algorithmic}
\end{algorithm}

Concretely, our Algorithm~\ref{ImprovedDiscontinuityPoint} first initializes a stack $W_i$ by pushing elements $\{t_x+dist(s_x,v_i)|x\in [k]\}$ sequentially in increasing order of their sub-index $x$, ensuring that the top element in $W_i$ always holds the freshest information among all the others in $W_i$. Algorithm~\ref{ImprovedDiscontinuityPoint} maintains two stacks $W_i$ and $U_i$, which contain candidate discontinuities and de facto discontinuities regarding user $v_i$, respectively. Within $W_i$ (resp. $U_i$), an upper (resp. a lower) element corresponds to a fresher information source. Moreover, the lower element within $U_i$ corresponds to a larger value, i.e., a later timestamp. Since the $k$th information source from seed $s_k$ is the freshest ever, its corresponding point $t_k+dist(s_k,v_i)$ is a de facto discontinuity point for $v_i$, and thus will be firstly included in $U_i$, i.e.,  $U_i\leftarrow \{t_k+dist(s_k,v_i\}$ in Line 2 of our Algorithm~\ref{ImprovedDiscontinuityPoint}.

In addition, Algorithm~\ref{ImprovedDiscontinuityPoint} also maintains two variables $\tau_{\rm up}$ and $\tau_{\rm down}$ that store the latest discontinuity point that has been found and included in $U_i$ and the top element in stack $W_i$, respectively. In each iteration of its while loop, Algorithm~\ref{ImprovedDiscontinuityPoint} pops the top element from the latest stack  $W_i$ and assigns its value to $\tau_{\rm down}$, i.e., $\tau_{\rm down}= \arg\max\limits_{t_x+dist(s_x,v_i)\in W_i}\{x\}$. Note that this $\tau_{\rm down}$ corresponds to the freshest information source within the current $W_i$, which will lead to an AoI drop for $v_i$ if it reaches $v_i$ earlier
than all those previously found discontinuities in $U_i$, which is indicated by the condition $\tau_{\rm down}<\tau_{\rm up}$. 
Algorithm~\ref{ImprovedDiscontinuityPoint} iterates until finding all discontinuity points of $v_i$, i.e., when $W_i=\varnothing$, which can complete in $O(k)$-time due to its sole while loop. We illustrate in Fig.~\ref{fig_improveddiscontinuityfinder} an execution example of Algorithm~\ref{ImprovedDiscontinuityPoint}  for ease of understanding.
\begin{figure}[t]
    \centering  \includegraphics[width=9cm]{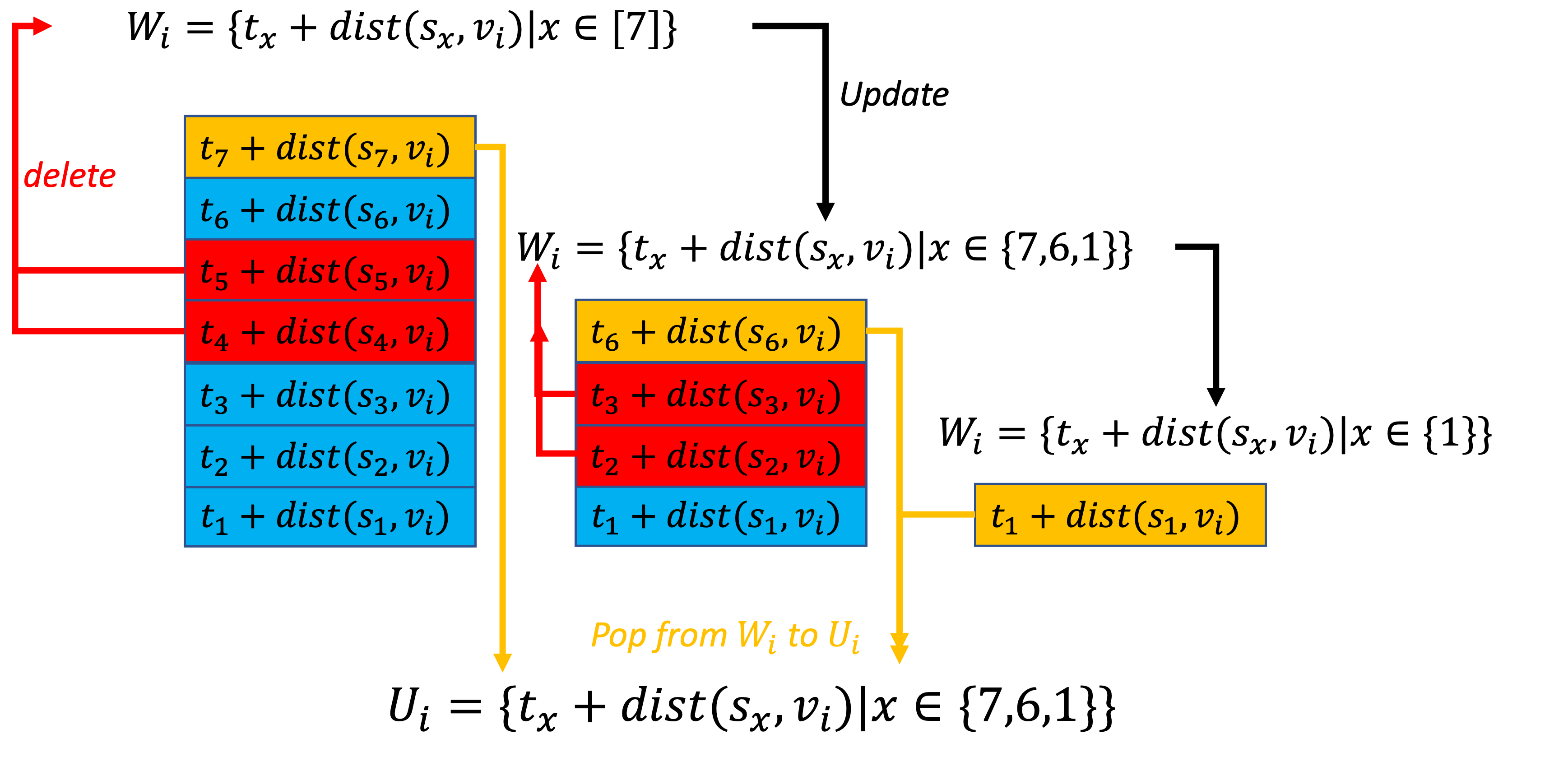}
    \caption{Illustration example of Algorithm~\ref{ImprovedDiscontinuityPoint}, where $k=7$, $t_1+dist(s_1,v_i)<t_6+dist(s_6,v_i)<t_2+dist(s_2,v_i)<t_3+dist(s_3,v_i)<t_7+dist(s_7,v_i)<t_4+dist(s_4,v_i)<t_5+dist(s_5,v_i)$.}
\label{fig_improveddiscontinuityfinder}
\end{figure}


%
With discontinuity points returned by Algorithms~\ref{DiscontinuityPoint} and~\ref{fig_improveddiscontinuityfinder}, we are now ready to characterize the expressions of the peak and average AoI objectives in the following subsection. 
\subsection{Closed-Form Characterization of Peak and Average AoI Objectives}\label{subsection_closedform_aoi}
For node $v_i\in V$, let $k_i$ represent the overall number of discontinuity points in $A(v_i,t)$ over the time horizon $[0,T]$. Denote $t_{i_j}$ as the selection timestamp of the seed that produces the $j$th discontinuity point of $v_i$ (in $A(v_i,t)$). The set of all discontinuity points of node  $v_i$ is given by
 $\{t_x+dist(s_x,v_i)|x\in \{i_1,...,i_{k_i}\}\}$. We note $i_{k_i}=k$ since the last seed $s_k$ carries the freshest promotion, resulting in a final AoI decrease for all other nodes in the network.
 
By taking into account the discontinuity points of $v_i$ in  $\{t_x+dist(s_x,v_i)|x\in \{i_1,...,i_{k_i}\}\}$ in Lemma~\ref{lemma_aoiof_vi}, we have the following proposition. 
\begin{table*}[t]
\begin{equation}\label{eq_functionof_vi}
\begin{split}
A(v_i,t)=\left\{\begin{matrix}
A_{0}+t, &{\rm if\;}t\in[0,t_{i_1}+dist(s_{i_1},v_i)), \\ 
 1+t-t_{i_1},&{\rm if\;}t\in[t_{i_1}+dist(s_{i_1},v_i),t_{i_{2}}+dist(s_{i_2},v_i)),\\
 \vdots &\vdots\\
1+t-t_{i_{k_i-1}},&{\rm if\;}t\in[t_{i_{k_i-1}}+dist(s_{i_{k_i-1}},v_i),t_{i_{{k_i}}}+dist(s_{i_{k_i}},v_i)), \\
 1+t-t_{i_{k_i}},& {\rm if\;}t\in[t_{i_{k_i}}+dist(s_{i_{k_i}},v_i),T].
\end{matrix}\right.
\end{split}
\end{equation}
\end{table*}
\begin{proposition}\label{theorem_functionof_vi}
Given the sequence $S_k=(s_1,...,s_k)$ of dynamically selected seeds, $A(v_i,t)$ is piece-wise over time $t\in [0,T]$ and is given by (\ref{eq_functionof_vi}) where $(s_{i_1},...,s_{i_{k_i}})$ are obtained by running Algorithm~\ref{DiscontinuityPoint} on $S_k$,
\end{proposition}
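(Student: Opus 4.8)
The plan is to derive the closed-form expression in~(\ref{eq_functionof_vi}) by combining the piecewise description of $A(v_i,t)$ in Lemma~\ref{lemma_aoiof_vi} with the structural characterization of discontinuity points guaranteed by Lemmas~\ref{prop_dicontinuity_01} and~\ref{prop_dicontinuity_02} (equivalently, by the correctness of Algorithm~\ref{DiscontinuityPoint}). First I would fix $v_i$ and let $(s_{i_1},\dots,s_{i_{k_i}})$ be the seeds output by Algorithm~\ref{DiscontinuityPoint}, so that the associated times $\pi_j\triangleq t_{i_j}+dist(s_{i_j},v_i)$ for $j=1,\dots,k_i$ are exactly the discontinuity points of $v_i$ on $[0,T]$. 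By Lemma~\ref{prop_dicontinuity_01} these are the only candidate discontinuity points, and by Lemma~\ref{prop_dicontinuity_02} all remaining candidates have been correctly pruned; I would also record that $\pi_1<\pi_2<\cdots<\pi_{k_i}$, which follows from the pruning rule (any later-reaching candidate that is not strictly larger gets removed), and that $i_{k_i}=k$ since $s_k$ carries the globally freshest promotion and therefore always triggers a final drop. These facts partition $[0,T]$ into the half-open intervals $[0,\pi_1)$, $[\pi_1,\pi_2),\dots,[\pi_{k_i-1},\pi_{k_i})$, $[\pi_{k_i},T]$ appearing in~(\ref{eq_functionof_vi}).

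Next I would argue the value of $A(v_i,t)$ on each interval by induction on $j$. On $[0,\pi_1)$: no discontinuity point has yet occurred, so by the third branch of Lemma~\ref{lemma_aoiof_vi} (the case $\Omega=\varnothing$) the AoI simply accumulates linearly from its value $A_0$ at $t=0$, giving $A(v_i,t)=A_0+t$; one must just check that $\Omega(v_i,\lfloor t\rfloor)=\varnothing$ for every integer time in this range, which is precisely the statement that no discontinuity point lies below $\pi_1$. For the inductive step, suppose $A(v_i,t)=1+t-t_{i_{j}}$ holds on $[\pi_j,\pi_{j+1})$ for some $j$ (interpreting $\pi_0\triangleq 0$ and the base formula as $A_0+t$). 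At $t=\pi_{j+1}$, which is a discontinuity point, the second branch of Lemma~\ref{lemma_aoiof_vi} applies with $\Omega=\Omega(v_i,\pi_{j+1})$; I would show that the minimizer $\min_{s_x\in\Omega}\{1+dist(v_i,s_x)\}$ is attained by $s_{i_{j+1}}$ and equals $1+dist(s_{i_{j+1}},v_i)$, so that $A(v_i,\pi_{j+1})=1+dist(s_{i_{j+1}},v_i)+\pi_{j+1}-\lfloor\pi_{j+1}\rfloor = 1+(\pi_{j+1}-t_{i_{j+1}})$ after substituting $\pi_{j+1}=t_{i_{j+1}}+dist(s_{i_{j+1}},v_i)$. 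Then for $t\in(\pi_{j+1},\pi_{j+2})$ the $\Omega=\varnothing$ branch gives linear growth with unit slope from this value, i.e.\ $A(v_i,t)=1+t-t_{i_{j+1}}$, closing the induction; the final interval $[\pi_{k_i},T]$ is handled identically with $j+1=k_i$.

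The main obstacle is the verification that, at a discontinuity point $\pi_{j+1}=t_{i_{j+1}}+dist(s_{i_{j+1}},v_i)$, the seed $s_{i_{j+1}}$ actually belongs to $\Omega(v_i,\pi_{j+1})$ and is the one achieving the minimum in Lemma~\ref{lemma_aoiof_vi}. Membership requires checking the freshness condition~(\ref{condition_seed_02}), namely $1+\pi_{j+1}-t_{i_{j+1}}\le A(v_i,\pi_{j+1}-1)$; using the inductive formula $A(v_i,\pi_{j+1}-1)=1+(\pi_{j+1}-1)-t_{i_j}$ on the preceding interval, this reduces to $dist(s_{i_{j+1}},v_i)\le \pi_{j+1}-1-t_{i_j}$, equivalently $t_{i_{j+1}}+dist(s_{i_{j+1}},v_i)\le t_{i_j}+dist(s_{i_j},v_i)+\bigl(\text{something}\bigr)$ — I would trace carefully how Algorithm~\ref{DiscontinuityPoint}'s retention of $\pi_{j+1}$ (and its ordering $\pi_j<\pi_{j+1}$, $t_{i_j}<t_{i_{j+1}}$) guarantees exactly this inequality. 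For the minimality claim, any other $s_x\in\Omega(v_i,\pi_{j+1})$ has $t_x+dist(s_x,v_i)=\pi_{j+1}$ as well (it must reach $v_i$ at the same integer time), so $1+dist(s_x,v_i)=1+\pi_{j+1}-t_x\ge 1+\pi_{j+1}-t_{i_{j+1}}$ iff $t_x\le t_{i_{j+1}}$; I would argue that if some $s_x\in\Omega$ had $t_x>t_{i_{j+1}}$ with the same arrival time $\pi_{j+1}$, then $s_x$ (being fresher and no-later-arriving) would have caused Algorithm~\ref{DiscontinuityPoint} to discard $\pi_{j+1}$ via Lemma~\ref{prop_dicontinuity_02}, a contradiction. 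Once these two points are nailed down, the remaining substitutions are routine, and combining all intervals yields~(\ref{eq_functionof_vi}).
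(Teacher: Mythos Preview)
Your proposal is correct and follows the same approach the paper intends: the paper does not give a standalone proof of this proposition but simply states that it follows ``by taking into account the discontinuity points of $v_i$ in $\{t_x+dist(s_x,v_i)\mid x\in\{i_1,\dots,i_{k_i}\}\}$ in Lemma~\ref{lemma_aoiof_vi},'' and your induction over the intervals $[\pi_j,\pi_{j+1})$ is exactly the natural unpacking of that remark. The one place you left vague---the ``$(\text{something})$'' in the freshness check---cleans up immediately: substituting $\pi_{j+1}=t_{i_{j+1}}+dist(s_{i_{j+1}},v_i)$ into $dist(s_{i_{j+1}},v_i)\le \pi_{j+1}-1-t_{i_j}$ collapses to $t_{i_{j+1}}\ge t_{i_j}+1$, which holds because the surviving indices from Algorithm~\ref{DiscontinuityPoint} satisfy $i_1<\cdots<i_{k_i}$ (any inversion would have been pruned by Lemma~\ref{prop_dicontinuity_02}), hence $t_{i_{j+1}}-t_{i_j}=(i_{j+1}-i_j)\Delta\ge\Delta$.
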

%
According to Proposition \ref{theorem_functionof_vi}, the new AoI dynamics in (\ref{eq_functionof_vi}) depends solely on discontinuity points returned by our Algorithm~\ref{DiscontinuityPoint}. This enables us to precisely trace the AoI pattern of each node. Particularly, in each time interval of (\ref{eq_functionof_vi}), the expression of $A(v_i,t)$ becomes regular, which paves the way for us to characterize the peak and average AoI objectives in closed-form below. 
Consequently, we have the following Theorems~\ref{lemma_aoi_peak_v_i} and~\ref{avgaoi_formulation_lemma}, respectively.
\begin{theorem}\label{lemma_aoi_peak_v_i}
Given the sequence $S_k$ of dynamically selected seeds, the peak AoI objective of the network can be expressed in closed form as follows:
\begin{equation}\label{eq_peak_vi}
\begin{split}
    A^{\rm peak}=\max\limits_{v_i\in V}\{&A_{0}+t_{i_1}+dist(s_{i_1},v_i),1+T-t_{i_{k_i}},\\
    &
    \max\limits_{j\in\{2,...,{k_i}\}}\{1+dist(s_{i_{j}},v_i)+t_{i_j}-t_{i_{j-1}}\}\}.
\end{split}
\end{equation}
\end{theorem}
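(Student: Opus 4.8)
The plan is to read off the per-node peak $A_i^{\rm peak}=\max_{t\in[0,T]}A(v_i,t)$ directly from the piecewise form of $A(v_i,t)$ established in Proposition~\ref{theorem_functionof_vi}, and then take the network-wide maximum as in definition~(\ref{def_peak_aoi_network}). The crucial structural fact from~(\ref{eq_functionof_vi}) is that $A(v_i,t)$ is affine with unit slope on each of its $k_i+1$ consecutive time intervals; hence on every interval it is strictly increasing, and all of its decreases are downward jumps at the discontinuity points $t_{i_j}+dist(s_{i_j},v_i)$, $j\in\{1,\dots,k_i\}$. It follows that the supremum of $A(v_i,\cdot)$ over $[0,T]$ is attained only as a left-limit immediately before one of these $k_i$ jump points, or at the right endpoint $t=T$ of the last (closed) interval --- no interior point of any piece can beat its own right-endpoint value.

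I would then evaluate each of these $k_i+1$ candidate values using~(\ref{eq_functionof_vi}). On the first interval $[0,t_{i_1}+dist(s_{i_1},v_i))$, $A(v_i,t)=A_{0}+t$, so the left-limit at the first jump point equals $A_{0}+t_{i_1}+dist(s_{i_1},v_i)$. For $j\in\{2,\dots,k_i\}$, the interval just before the $j$-th jump point carries $A(v_i,t)=1+t-t_{i_{j-1}}$, so its left-limit at $t=t_{i_j}+dist(s_{i_j},v_i)$ equals $1+dist(s_{i_j},v_i)+t_{i_j}-t_{i_{j-1}}$. On the final interval $[t_{i_{k_i}}+dist(s_{i_{k_i}},v_i),T]$, $A(v_i,t)=1+t-t_{i_{k_i}}$, which attains its maximum $1+T-t_{i_{k_i}}$ at $t=T$. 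Taking the maximum of these three groups of terms gives $A_i^{\rm peak}$, and then $A^{\rm peak}=\max_{v_i\in V}A_i^{\rm peak}$ by~(\ref{def_peak_aoi_network}) is exactly~(\ref{eq_peak_vi}).

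The remaining details are bookkeeping: I would check that the $k_i+1$ intervals of~(\ref{eq_functionof_vi}) partition $[0,T]$ in increasing order of endpoints (so that ``the interval just before the $j$-th jump'' is well defined), which is immediate from how Algorithm~\ref{DiscontinuityPoint} orders the discontinuity points, and I would record the degenerate case $k_i=1$, where the only discontinuity point is the one from $s_k=s_{i_{k_i}}$, the index set $\{2,\dots,k_i\}$ is empty, and $A_i^{\rm peak}=\max\{A_0+t_{i_1}+dist(s_{i_1},v_i),\,1+T-t_{i_{k_i}}\}$, consistent with~(\ref{eq_peak_vi}).

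The only genuinely non-mechanical point is the meaning of the maximum at the jump points: because the relevant intervals of~(\ref{eq_functionof_vi}) are half-open, the peak is approached from the left rather than attained, so $A_i^{\rm peak}$ must be read as the supremum $\sup_{t\in[0,T]}A(v_i,t)$ --- equivalently, as the AoI recorded just before each promotion replacement --- in line with the standard AoI convention. With this understood, the three candidate computations above yield~(\ref{eq_peak_vi}) directly.
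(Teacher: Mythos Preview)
Your proposal is correct and follows essentially the same approach as the paper: use Proposition~\ref{theorem_functionof_vi} to observe that $A(v_i,t)$ is affine with unit slope on each of the $k_i+1$ intervals, so the per-node supremum is realized at the right-endpoints (jump points and $T$), evaluate each of these using~(\ref{eq_functionof_vi}), and then take $\max_{v_i\in V}$ via~(\ref{def_peak_aoi_network}). Your write-up is in fact more careful than the paper's, explicitly flagging the sup-versus-max issue at the half-open endpoints and the degenerate case $k_i=1$.
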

\begin{theorem}\label{avgaoi_formulation_lemma}
Given the sequence $S_k$ of dynamically selected seeds, the average AoI objective of the network can be expressed in closed form as follows:
\begin{equation}\label{avgaoi_formulation}
    A^{\rm avg}=\frac{1}{nT}\sum\limits_{i=1}^n\sum\limits_{j=1}^{k_i+1} \frac{(2A_{i{j-1}}+\Lambda_{ij})\cdot \Lambda_{ij}}{2},
\end{equation}
where
\begin{equation}\label{formulation_Aij}
    A_{ij}=\left\{\begin{matrix}
A_{0}, & j=0, \\ 
1+dist(s_{i_{j}},v_i) & j=1,...,k_i,\\ 
\end{matrix}\right.
\end{equation}
and
\begin{small}
\begin{equation*}
\begin{split}
&\Lambda_{ij}\\
&=\left\{\begin{matrix}
1+(i_1-1)
\cdot\Delta+dist(s_{i_1},v_i),& j=1,\\ 
 T-1-(i_{k_i}-1)
\cdot\Delta-dist(s_{i_{k_i}},v_i),& j=k_i+1,\\
(i_j-i_{j-1})\Delta+dist(s_{i_j},v_i)-dist(s_{i_{j-1}},v_i),&{\rm others}.
\end{matrix}\right.
\end{split}
\end{equation*}
\end{small}
\end{theorem}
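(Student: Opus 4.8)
The plan is to integrate the piecewise-linear description of $A(v_i,t)$ supplied by Proposition~\ref{theorem_functionof_vi} piece by piece, and then average over the nodes as in the definition (\ref{def_avg_aoi_network}). By (\ref{eq_functionof_vi}), for each node $v_i$ the horizon $[0,T]$ is split into $k_i+1$ consecutive subintervals whose breakpoints are the discontinuity points $t_{i_j}+dist(s_{i_j},v_i)$, $j=1,\dots,k_i$ (together with the endpoints $0$ and $T$); on the first subinterval $A(v_i,t)=A_0+t$, on the $j$-th subinterval with $2\le j\le k_i$ it equals $1+t-t_{i_{j-1}}$, and on the last it equals $1+t-t_{i_{k_i}}$. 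In every case $A(v_i,\cdot)$ is affine with unit slope, so the integral over each subinterval is determined by just two numbers: the length of the subinterval and the value of $A(v_i,\cdot)$ at its left endpoint.

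First I would make these two numbers explicit. Substituting $t_{i_j}=(i_j-1)\Delta+1$ and subtracting consecutive breakpoints, the length of the $j$-th subinterval is $t_{i_1}+dist(s_{i_1},v_i)$ for $j=1$, is $T-(t_{i_{k_i}}+dist(s_{i_{k_i}},v_i))$ for $j=k_i+1$, and is $(t_{i_j}-t_{i_{j-1}})+dist(s_{i_j},v_i)-dist(s_{i_{j-1}},v_i)$ for $2\le j\le k_i$; these three expressions coincide term by term with the three cases of $\Lambda_{ij}$ in the theorem. Evaluating the relevant affine formula at the left endpoint of the $j$-th subinterval gives $A_0$ for $j=1$ and $1+dist(s_{i_{j-1}},v_i)$ for $j\ge 2$, which is exactly $A_{i,j-1}$ as defined in (\ref{formulation_Aij}). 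Consequently, writing $s\in[0,\Lambda_{ij}]$ for the time elapsed from the left endpoint of the $j$-th subinterval, $A(v_i,t)=A_{i,j-1}+s$ there, and
\[
\int_{j\text{-th subinterval}} A(v_i,t)\,dt=\int_0^{\Lambda_{ij}}\!\bigl(A_{i,j-1}+s\bigr)\,ds=A_{i,j-1}\Lambda_{ij}+\frac{\Lambda_{ij}^2}{2}=\frac{(2A_{i,j-1}+\Lambda_{ij})\Lambda_{ij}}{2}.
\]
Summing over $j=1,\dots,k_i+1$ yields $\int_0^T A(v_i,t)\,dt$, hence $A_i^{\rm ave}=\frac{1}{T}\sum_{j=1}^{k_i+1}\frac{(2A_{i,j-1}+\Lambda_{ij})\Lambda_{ij}}{2}$, and averaging over $v_i\in V$ gives (\ref{avgaoi_formulation}).

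The computation is routine once the partition is in hand; the main obstacle is the bookkeeping at the two boundary pieces ($j=1$ and $j=k_i+1$) together with the index shift between ``the seed that governs the $j$-th piece'' (namely $s_{i_{j-1}}$) and ``the $j$-th discontinuity point'' (which lies at the \emph{right} end of the $j$-th piece), since an off-by-one there would misalign $A_{i,j-1}$ with $\Lambda_{ij}$. I would therefore verify explicitly that $t_{i_1}+dist(s_{i_1},v_i)=\Lambda_{i1}$ and $T-\bigl(t_{i_{k_i}}+dist(s_{i_{k_i}},v_i)\bigr)=\Lambda_{i,k_i+1}$, and use the fact (already noted after Proposition~\ref{theorem_functionof_vi}) that $i_{k_i}=k$ so the last piece indeed extends to $T$. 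No continuity of $A(v_i,\cdot)$ at the breakpoints is needed, since each breakpoint has measure zero in the integral.
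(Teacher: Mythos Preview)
Your proposal is correct and follows essentially the same approach as the paper's proof: both partition $[0,T]$ at the discontinuity points from Proposition~\ref{theorem_functionof_vi}, integrate the unit-slope affine piece on each subinterval via the trapezoidal formula, and then sum and average. The paper writes out the first, middle, and last pieces separately (equations (\ref{term1_oftheaverageaoi_formulation})--(\ref{term3_oftheaverageaoi_formulation})) rather than handling all $k_i+1$ pieces uniformly as you do, but the computation and the identification of $A_{i,j-1}$ and $\Lambda_{ij}$ as the left-endpoint value and interval length are identical.
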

\subsection{NP-hardness for Peak and Average AoI Minimization Problems}
The above Theorems~\ref{lemma_aoi_peak_v_i} and~\ref{avgaoi_formulation_lemma} are essential to analyzing our problems  (\ref{problem_peak_aoi}) and (\ref{problem_average_aoi}) later in Sections \ref{sec_peak_aoi} and \ref{sec_average_aoi}, respectively. Prior to that, we now construct a theoretical foundation for our problems by showing that they are NP-hard indeed.
%
%
%
%
\begin{theorem}\label{theorem_peak_aoi}
The peak AoI minimization problem  (\ref{problem_peak_aoi}) is NP-hard.
\end{theorem}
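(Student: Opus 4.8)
The plan is to prove NP-hardness via a polynomial-time reduction from the \textsc{Dominating Set} problem, as the abstract announces (``a highly non-straightforward reduction from the dominating set problem''). Recall that \textsc{Dominating Set} asks, given a graph $H=(V_H,E_H)$ and an integer $k$, whether there exists a set $D\subseteq V_H$ with $|D|\le k$ such that every vertex of $V_H$ is in $D$ or adjacent to a vertex of $D$. The target of the reduction is an instance of the decision version of problem~(\ref{problem_peak_aoi}): given $I(G,T,\Delta,A_0)$ and a threshold $\theta$, decide whether there is a seeding sequence $(s_1,\dots,s_k)$ with $A^{\rm peak}\le\theta$.

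First I would design the gadget graph $G$ from $H$. The natural first attempt is to take $G=H$ (or $H$ with minor modifications), set $\Delta$ large enough that the $k$ seeding rounds are temporally well-separated, pick $A_0$ and $T$ so that the initial-age term $A_0+t_{i_1}+dist(s_{i_1},v_i)$ and the tail term $1+T-t_{i_{k_i}}$ in the closed form~(\ref{eq_peak_vi}) are never the binding constraints, and choose the threshold $\theta$ so that $A^{\rm peak}\le\theta$ forces every node $v_i$ to be within distance $1$ of \emph{some} seed whose information is current enough to actually update $v_i$. Concretely, if $\Delta$ is chosen so that each freshly seeded message is the freshest available and reaches every node within its own round before the next seeding (which requires $\Delta \ge diam(G)+1$ or so), then the middle term $\max_{j}\{1+dist(s_{i_j},v_i)+t_{i_j}-t_{i_{j-1}}\}$ essentially reduces to $1+\Delta - \big(\text{slack}\big)$ unless $v_i$ lies within distance $1$ of the seed chosen in \emph{each} round; with the threshold tuned to $\theta = 1 + \Delta$ (or $1+\Delta+1$), the constraint collapses to: every node is dominated, in each round, by the round's seed. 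That would make the $k$ seeds in any feasible solution into $k$ dominating sets --- but we only need \emph{one} dominating set of size $k$, not $k$ of size $1$. So the clean version of the reduction needs to allow the platform to ``place'' the whole dominating set, not one vertex per round; this mismatch between ``$k$ seeds, one per round'' and ``one dominating set of size $k$'' is exactly what makes the reduction non-straightforward, and handling it (e.g., by attaching $k$ pendant ``collector'' structures, or by letting $\Delta$ be tiny so that several rounds' messages coexist and a node is happy if it is near any of them, or by a more elaborate multi-layer gadget) is the step I expect to be the main obstacle.

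The likely resolution is to make $\Delta$ \emph{small} --- say $\Delta=1$ --- so that the messages from seeds $s_1,\dots,s_k$ are all ``alive'' simultaneously over a window, and to build $G$ as $H$ augmented with a long path (or a large clique / star) of fresh auxiliary nodes that (i) dilutes the initial-age and tail terms and (ii) forces $k$ to be ``spent'' efficiently. Then a node $v_i\in V_H$ achieves small peak AoI iff it is within distance $1$ of the set $\{s_1,\dots,s_k\}$, i.e. iff $\{s_1,\dots,s_k\}$ is a dominating set of $H$. I would then argue both directions: (forward) a size-$k$ dominating set $D=\{d_1,\dots,d_k\}$ of $H$, used as the seed sequence in the $G$-instance, yields $A^{\rm peak}\le\theta$ by plugging into~(\ref{eq_peak_vi}) and checking each of the three terms; (backward) any seed sequence with $A^{\rm peak}\le\theta$ must, by the structure of~(\ref{eq_peak_vi}) and the gadget's parameter choices, have its seed \emph{set} dominate every original vertex, hence give a dominating set of size $\le k$ (repeated seeds only help). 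Care is needed that the auxiliary nodes in $G$ are themselves always within the threshold regardless of the seeding (so they impose no extra constraint), which dictates how they attach to $V_H$ and fixes $A_0,T,\theta$ as explicit polynomials in $n_H=|V_H|$ and $k$.

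Finally I would verify that the reduction is polynomial: $G$ has $O(n_H + k)$ or $O(n_H \cdot k)$ nodes, and $\Delta, T, A_0, \theta$ are integers of polynomial magnitude, so the construction and the evaluation of~(\ref{eq_peak_vi}) (via Algorithm~\ref{DiscontinuityPoint}, which runs in $O(k^2)$) are all polynomial. Since \textsc{Dominating Set} is NP-complete and the above is a correct Karp reduction to the decision version of~(\ref{problem_peak_aoi}), the peak AoI minimization problem is NP-hard. The one technical subtlety to flag is that we must ensure the gadget parameters make the initial term $A_0 + t_{i_1} + dist(s_{i_1},v_i)$ harmless even for the worst node: choosing $A_0$ small (e.g., $A_0=1$) and $t_{i_1}$ early forces this term to stay below $\theta$ as long as distances to the first seed are controlled, which the auxiliary structure guarantees; symmetrically the tail term $1+T-t_{i_{k_i}}$ is controlled by making $T$ only slightly larger than $t_k = (k-1)\Delta+1$.
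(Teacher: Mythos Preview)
Your proposal correctly targets \textsc{Dominating Set} and correctly diagnoses the core difficulty --- the mismatch between ``$k$ rounds, one seed each'' and ``one set of size $k$'' --- but your chosen resolution heads in the opposite direction from the paper's and is left substantially incomplete.

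The paper's reduction uses no gadget graph at all: it takes $G=H$ unchanged, sets $\Delta\to 0$ (so all $k$ seeds are effectively placed at time $1$, which dissolves the rounds-vs-set mismatch in one stroke), and sets $A_0$ \emph{large} (specifically $A_0=T+diam(G)$), not small. With $A_0$ this large the peak in~(\ref{eq_peak_vi}) is always attained by the initial term, and since all seeding times collapse to $1$ that term becomes
\[
A^{\rm peak}(S_k)=A_0+1+\max_{v_i\in V}\min_{s_j\in S_k}dist(v_i,s_j).
\]
The decision threshold $\theta=A_0+2$ then asks precisely whether $\max_i\min_j dist(v_i,s_j)\le 1$, i.e., whether $S_k$ dominates $H$. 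Both directions of the equivalence are one-liners (Lemma~\ref{prop04}), and no auxiliary paths, cliques, or pendant structures are needed.

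You actually list ``letting $\Delta$ be tiny'' among your parenthetical options, but then commit to $\Delta=1$, small $A_0$, and an augmented graph so that the \emph{middle} term of~(\ref{eq_peak_vi}) becomes binding. That route is not obviously sound as sketched: with $\Delta=1$ the middle term $1+dist(s_{i_j},v_i)+t_{i_j}-t_{i_{j-1}}$ depends on which seeds create discontinuities for $v_i$, and in the favorable case where every seed does, it penalizes $v_i$ for being far from \emph{any} seed rather than from \emph{all} seeds --- the wrong quantifier for domination. Your auxiliary structure is supposed to fix this, but you never specify it or verify that it simultaneously (i) keeps auxiliary nodes below threshold regardless of seeding, (ii) makes the initial and tail terms slack for original nodes, and (iii) forces the dominating-set condition. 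The paper's parameter choice (tiny $\Delta$, large $A_0$) sidesteps all of this: the binding term is the initial one, and its form is already the min-max distance you want.
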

\begin{proof}
We show the NP-hardness of the peak AoI minimization problem by a highly non-trivial reduction from the NP-hard \textit{dominating set} problem in decision version \cite{garey1979computers}. Given an undirected graph $G(V,E)$ and a constant integer $k$, the decision version of the \textit{dominating set} problem answers ``yes" if and only if there exists a subset $S\subseteq V$ of size $k$ such that each node $v\in V$ is either in $S$ or is a neighbor of some $s\in S$. For example, if we consider the graph in Fig.~\ref{illustration_graph} with $k=3$, user nodes 2, 5, and 7 form a de facto dominating set. 

Given an instance of the dominating set problem, we reduce it to our peak AoI minimization problem (where we set a large $A_0=T+diam(G)$ and a small $\Delta\rightarrow 0$) on the same graph example $G=(V,E)$ with budget $k$.  The weight of each edge in $E$ is set as 1, i.e.,
\begin{equation}
   dist(v_x,v_y)=1, {\rm\;if\;} (v_x,v_y)\in E.
\end{equation}
As we set $\Delta\rightarrow 0$, all the $k$ seeds could be regarded as being selected together at time $1$. As a consequence, 
\begin{equation}\label{eq02_peak_nphardness}
s_{i_1}=\arg\min\limits_{s_j\in S_k}\{t_{j}+dist(s_j,v_i)\} =\arg\min\limits_{s_j\in S_k}\{dist(s_j,v_i)\},
\end{equation}
As we set $A_0=T+diam(G)$, Theorem~\ref{lemma_aoi_peak_v_i} further infers that the peak AoI of the graph now  depends on the first term of (\ref{eq_peak_vi}), i.e., 
\begin{equation}\label{eq01_peak_nphardness}
\begin{split}
A^{\rm peak}(S_k)=\max\limits_{v_i\in V}\{A_0+t_{i_1}+dist(s_{i_1},v_i)\}.
\end{split}
\end{equation}
By substituting (\ref{eq02_peak_nphardness}) into (\ref{eq01_peak_nphardness}), we obtain
\begin{equation}\label{formulation_peak_AoI_nphard}
\begin{split}
A^{\rm peak}(S_k)=A_0+1+\max\limits_{v_i\in V}\min\limits_{s_j\in S_k}\{dist(v_i,s_j)\}.
\end{split}
\end{equation}
The decision version of our peak AoI problem aims to answer whether the optimal solution is equal to or less than $A_0+2$. By applying a well-known binary search approach to solutions for our decision problem, one can obtain in polynomial-time a solution to our optimization problem. Reversely, given a solution to our optimization problem, one can directly obtain a solution to the decision problem. The above two facts reveal the 
equivalence relation between the decision and optimization versions of our AoI problem. Further, our NP-hardness can be proved readily with the following Lemma~\ref{prop04}. 
\begin{lemma}\label{prop04}
The dominating set problem answers ``yes" iff the decision version of our problem on the same input answers ``yes".
\end{lemma}
\begin{proof}[Proof of Lemma \ref{prop04}]
We discuss two cases.

\noindent\textbf{Case 1. }
(``$\Rightarrow$") 
When the dominating set problem answers ``yes", there exists a subset $S'_k\subseteq V$ with size $k$ such that 
\begin{equation}\label{eq_for_nphardness_peak_01}
    \max\limits_{v_i\in V}\min\limits_{s\in S'_k}dist(s,v_i)\leq 1.
\end{equation}
By applying (\ref{eq_for_nphardness_peak_01}) to (\ref{formulation_peak_AoI_nphard}), we have
$A^{\rm peak}\leq A_0+1+1=A_0+2$, telling that our AoI problem answers ``yes" as well.

\noindent \textbf{Case 2.}
(``$\Leftarrow$") 
When our AoI problem answers ``yes",  from (\ref{formulation_peak_AoI_nphard}) we have that
$\max\limits_{v_i\in V}\min\limits_{s\in S'_k}dist(s,v_i)\leq 1.$ This implies that each node is either in set $S_k$ or a neighbor of a node in $S_k$. That is, $\min\limits_{s\in S_k}dist(s,v)\leq 1$ holds for each node $v\in V$, telling that the dominating set problem answers ``yes".
\end{proof}
Lemma~\ref{prop04} concludes our proof of Theorem \ref{theorem_peak_aoi}.
\end{proof}

For the average AoI minimization, our closed-form expression in (\ref{avgaoi_formulation}) reveals the fact that each node's AoI dynamics truly affect the AoI objective. This makes our average AoI minimization problem more involved in comparison to the peak AoI minimization problem. By a subtle graph construction technique, our average AoI problem can be reduced from the NP-hard \textit{set cover problem}, yielding the following Theorem~\ref{theorem_nphardness_avg}.
\begin{theorem}\label{theorem_nphardness_avg}
The average AoI minimization problem (\ref{problem_average_aoi}) is NP-hard.
\end{theorem}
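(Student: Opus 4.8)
The plan is to reduce from the NP-complete set cover problem. Given a universe $\mathcal{U}=\{e_1,\dots,e_n\}$, a collection $\mathcal{C}=\{S_1,\dots,S_m\}$ of subsets with $\bigcup_j S_j=\mathcal{U}$, and an integer $k$, I would construct a social graph $G$ on the node set $\{u_1,\dots,u_n\}\cup\{w_1,\dots,w_m\}$ consisting of one \emph{element node} $u_i$ per element and one \emph{set node} $w_j$ per set, with a (unit-length) edge $\{w_j,u_i\}$ whenever $e_i\in S_j$ and a (unit-length) edge between every pair of set nodes, so that the set nodes form a clique. For the AoI parameters I would keep the budget $k$, take $\Delta\to 0$ and a fixed horizon $T\ge 1+diam(G)$, and a fixed initial age $A_0$, mirroring the parameter regime used in the proof of Theorem~\ref{theorem_peak_aoi}.

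The first main step is to evaluate the average AoI of this instance in closed form. Because $\Delta\to 0$, all $k$ seeds are effectively selected at time $1$, so by Lemma~\ref{lemma_aoiof_vi} and Proposition~\ref{theorem_functionof_vi} the only discontinuity of a node $v$ that survives the limit is the one at time $1+d_v$, where $d_v:=\min_x dist(s_x,v)$ is the distance from $v$ to its nearest seed; any later arrival from a farther seed carries age $1+dist(s_x,v)+O(\Delta)$, which is no fresher (up to $O(\Delta)$) than $v$'s current age and contributes a vanishing amount to the integral. Plugging the resulting two-piece profile $A(v,t)=A_0+t$ on $[0,1+d_v)$ and $A(v,t)=t$ on $[1+d_v,T]$ into the definition of $A^{\rm avg}$ (equivalently, specializing Theorem~\ref{avgaoi_formulation_lemma}) yields
\begin{equation*}
A^{\rm avg}=\frac{T}{2}+\frac{A_0}{NT}\Bigl(N+\sum_{v}d_v\Bigr),\qquad N:=m+n,
\end{equation*}
so that minimizing $A^{\rm avg}$ over seed sequences is equivalent to minimizing the sum-distance objective $D(S_k):=\sum_{v\in V}\min_x dist(s_x,v)$ on $G$ — the connection to a $k$-median-type problem that also underlies the later algorithmic sections.

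The second main step is the combinatorial equivalence. Since each of the $N-k$ non-seed nodes lies at distance at least $1$ from the seed set, $D(S_k)\ge N-k=m+n-k$ for every choice of $k$ seeds. If $\{S_j:j\in J\}$ is a set cover with $|J|=k$, then seeding $\{w_j:j\in J\}$ places every element node and every unselected set node at distance exactly $1$, so $D=m+n-k$ is attained. Conversely, suppose some seed set $Q$ attains $D=m+n-k$; then every non-seed node sits at distance exactly $1$. Writing $Q$ as $p$ set nodes indexed by $J$ together with $k-p$ element nodes indexed by $I$, every non-seed element node must be adjacent to a seed and hence to a set node in $J$, which forces $\{S_j:j\in J\}$ to cover $\mathcal{U}\setminus\{e_i:i\in I\}$; appending one set containing each $e_i$, $i\in I$, produces a cover of size at most $|J|+|I|=k$. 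Therefore a set cover of size at most $k$ exists if and only if $\min_{S_k}D(S_k)=m+n-k$ (and otherwise $\min_{S_k}D(S_k)\ge m+n-k+1$), which, through the closed form above, is equivalent to $A^{\rm avg}$ being at most the explicit threshold $\frac{T}{2}+\frac{A_0(2N-k)}{NT}$; as in Theorem~\ref{theorem_peak_aoi}, a binary search bridges the decision and optimization versions, so the average AoI minimization problem is NP-hard.

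I expect the delicate part to be the converse direction of the equivalence: a priori a clever seed set could mix element nodes and set nodes and beat the threshold $m+n-k$ by effectively using an element node as a cost-free singleton set, which would break the reduction. The clique among the set nodes is precisely the device that prevents this — it makes every non-seed set node automatically at distance $1$ no matter how the seeds are placed, so an element-node seed can save distance only on itself, and the counting argument above then shows that any distance-$(m+n-k)$ placement still encodes a set cover of size at most $k$. A secondary, more routine issue is making the $\Delta\to 0$ limit rigorous: since there are only finitely many seed sequences, for all sufficiently small $\Delta>0$ the ranking of their $A^{\rm avg}$ values agrees with the limiting values, so it suffices to minimize $D(\cdot)$; one also checks that $T\ge 1+diam(G)$ guarantees the single relevant discontinuity of every node falls inside $[0,T]$.
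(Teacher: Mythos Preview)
Your proof is correct and, like the paper, reduces from set cover in the regime $\Delta\to 0$ so that only the single discontinuity at $1+d_v$ survives and the objective collapses to the sum of nearest-seed distances. The construction, however, is genuinely different. The paper keeps the bipartite element/set graph but attaches $\alpha+1$ private \emph{dummy leaves} to every set node (where $\alpha$ is the maximum element-degree); it then proves a separate structural claim (their Proposition inside the proof) that any optimal seed set must lie entirely in $V_2$, after which the objective becomes $C_1-C_2\,|N_G(S_k)|$ and the problem is literally maximum coverage. You instead add a \emph{clique on the set nodes} and dispense with the ``optimal seeds are set nodes'' lemma altogether: the clique makes every non-seed set node sit at distance~$1$ for free, so an element-node seed can only ``save'' its own distance, and your counting argument converts any distance-$(N-k)$ placement into a cover of size $\le k$ without forcing the seed set's shape.

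What each buys: the paper's dummy-leaf gadget keeps the graph bipartite and sparse and uses a very standard ``force the optimum to a canonical form'' step, at the cost of an extra proposition and a case analysis. Your clique gadget is shorter and avoids that auxiliary claim, but introduces $\binom{m}{2}$ extra edges (still polynomial) and relies on the slightly more delicate converse you correctly flagged as the crux. Your handling of the $\Delta\to 0$ limit (finitely many seed sequences, integer gap in $D$, hence a robust threshold for small positive $\Delta$) is also cleaner than the paper's, which simply asserts $\Delta\to 0$.
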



 
%
 
 \section{Peak AoI Minimization Algorithms}\label{sec_peak_aoi}

To solve the NP-hard peak AoI minimization problem (\ref{problem_peak_aoi}), we propose Algorithm~\ref{peak_cycliselection_alg} that seeds in a round-robin way
from a 
fine-tuned set of candidates along the diameter path,  which indicates the shortest path between the two most distant nodes in the social graph $G$ and can be found in $O((|V'|+m)^n)$-time by the Floyd-Warshall Algorithm \cite{cormen2022introduction}. In the sequel, we simply use diameter to refer to the diameter path when the context is clear.
 The rationale behind Algorithm~\ref{peak_cycliselection_alg}'s focus on seeding users on the diameter is twofold: \textit{first}, seeding along the diameter will directly expedite the information dissemination among nodes on the graph diameter, which usually takes a considerable time; \textit{second}, seeded users on the diameter are able to relay information to users in other branches, thereby improving the overall efficiency of information propagation within the entire network.
\begin{algorithm}[t]
\caption{\textsc{Cyclic Seeding} for peak AoI minimization}\label{peak_cycliselection_alg}
  \textbf{Data:} {$G(V,E)$, $k$.}\\
  \textbf{Result: }{$S_k=(s_1,...,s_k)$.}
  \begin{algorithmic}[1]
\STATE Index nodes on $diam(G)$ sequentially from one side to the other in set $V_{\rm diam} \triangleq(v_1,...,v_{|diam(G)|+1})$.
\STATE From $V_{\rm diam}$, select a sequence of seed candidates as $\Omega$.\\
\STATE Seed sequentially and recursively from $\Omega$.
  \end{algorithmic}
\end{algorithm}

Generally, our Algorithm~\ref{peak_cycliselection_alg} consists of the following three steps, where $diam(G)$ gives the diameter path of graph $G$.

\textbf{Step 1.} Index nodes on the diameter path $diam(G)$ sequentially from one side to the other, resulting in 
\begin{equation}
    V_{\rm diam}\triangleq(v_1,...,v_{|diam(G)|+1}),
\end{equation}
where $|diam(G)|+1$ gives the number of nodes on $diam(G)$. 

\textbf{Step 2.} Select from $V_{\rm diam}$ an integer number $\underline{\mu}$ of seed candidates, denoted as the sequence $\Omega \triangleq(\omega_1,...,\omega_{\underline{\mu}})\subseteq V_{\rm diam}$, in which the value of $\underline{\mu}$ and the specific locations of the candidates in $\Omega$ will be optimized later in Proposition \ref{prop_def_three_values}. 

\textbf{Step 3.}
Seed sequentially from $\Omega = (\omega_1,...,\omega_{\underline{\mu}})$ in the following round-robin manner: denote ${\rm mod}(x,y)$ as the number of $x$ modulo $y$, then, each seed $s_i\in S_k$ with $i\in \{1,...,k\}$ is selected to be $\omega_{{\rm mod}(i-1,\underline{\mu})+1}$.  
\begin{figure}[t]
    \centering    \includegraphics[width=8cm]{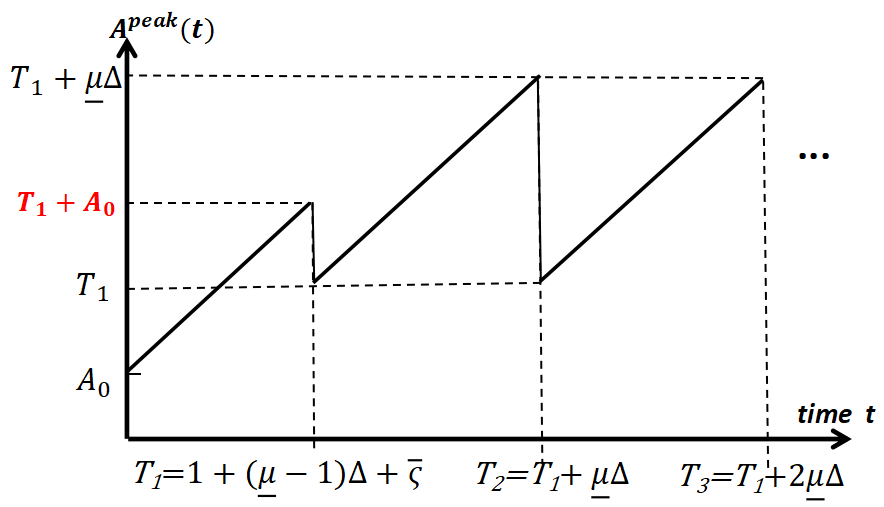}
    \caption{Peak AoI of Algorithm~\ref{peak_cycliselection_alg} in a line-type network. }\label{pettern_peak_aoi_cyclicselction}
\end{figure}

As Algorithm~\ref{peak_cycliselection_alg} seeds in a round-robin way from a well-designed set of seed candidates $\Omega$, one can expect a periodic pattern of the peak AoI dynamics {$A^{\rm peak}(t)=\max\limits_{v_i\in V}A(v_i,t)$}
after an initial time under Algorithm~\ref{peak_cycliselection_alg}. In fact, our Theorem~\ref{avg_approximation_theorem} later demonstrates that $A^{\rm peak}(t)$ follows a periodic pattern starting from time $T_1= 1+(\underline{\mu}'-1)\Delta+\overline{\varsigma}'$, as shown in Fig.~\ref{pettern_peak_aoi_cyclicselction}. Further, our following Proposition \ref{prop_def_three_values} shows that minimizing the peak AoI in (\ref{eq_peak_vi}) is equivalent to minimizing $T_1$, which is by choosing the number $\underline{\mu}'$ of selected seeds in Step 2 of Algorithm~\ref{peak_cycliselection_alg}. 
\begin{proposition}\label{prop_def_three_values} 
    In Algorithm~\ref{peak_cycliselection_alg}, the optimal ($\underline{\mu},\overline{\varsigma}$) should be chosen by solving the following problem:
    \begin{alignat}{2}
\min\limits_{(\overline{\varsigma}',\underline{\mu}')}\quad & T_1=1+(\underline{\mu}'-1)\Delta+\overline{\varsigma}'&\label{obj_peak_mu}\\
\mbox{s.t.}\quad
&\underline{\mu}'+\Delta(\underline{\mu}'-1)\underline{\mu}'+2\overline{\varsigma}'\underline{\mu}' \geq |diam(G)|+1, \quad& \label{enough_updates_constraint}\\
&  0\leq \overline{\varsigma}'<\Delta,&\\
 &\overline{\varsigma}',\underline{\mu}' {\rm \;are\;both\;integers},&\label{integerconstraint}
\end{alignat}
in which (\ref{enough_updates_constraint}) ensures that all nodes on $diam(G)$ are updated by time $T_1$. The optimal solution $(\overline{\varsigma},\underline{\mu})$ to Problem (\ref{obj_peak_mu})-(\ref{integerconstraint}) is
given by:
\begin{equation}\label{eq01_prop2}
\underline{\mu}=(\left \lfloor \frac{\Delta-1+\sqrt{\Delta^2+2\Delta+4|diam(G)|\Delta+1}}{2\Delta} \right \rfloor,
\end{equation}
and
\begin{equation}\label{eq02_prop2}
\overline{\varsigma}=\left \lceil \frac{|diam(G)|+1-(\underline{\mu}^2\Delta -\underline{\mu}\Delta+\underline{\mu})}{2\underline{\mu}} \right \rceil).
\end{equation}
\end{proposition}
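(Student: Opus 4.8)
The statement has two parts --- that the optimal $(\underline{\mu},\overline{\varsigma})$ used in Algorithm~\ref{peak_cycliselection_alg} is the optimal solution of the integer program~(\ref{obj_peak_mu})--(\ref{integerconstraint}), and that this program admits the closed-form optimum~(\ref{eq01_prop2})--(\ref{eq02_prop2}) --- and I would handle them in that order. For the first part I start from Theorem~\ref{lemma_aoi_peak_v_i}: under the round-robin rule each node's AoI grows as $A_0+t$ until the first seed reaches it, so the first term $A_0+t_{i_1}+dist(s_{i_1},v_i)$ of~(\ref{eq_peak_vi}) is exactly $A_0$ plus the instant at which $v_i$ is first updated. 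Invoking the periodicity of $A^{\rm peak}(t)$ after time $T_1$ (Fig.~\ref{pettern_peak_aoi_cyclicselction}, Theorem~\ref{avg_approximation_theorem}), $A^{\rm peak}$ is governed by $A_0$ plus the time by which \emph{every} node has first been updated, and via the histogram reduction of a general topology onto $diam(G)$ the bottleneck node lies on the diameter; hence minimizing $A^{\rm peak}$ is equivalent to minimizing the time $T_1$ needed to cover all $|diam(G)|+1$ diameter nodes. Finally, since in round-robin candidate $\omega_j$ is first seeded at $1+(j-1)\Delta$ and therefore has propagated distance $(\underline{\mu}-j)\Delta+\overline{\varsigma}$ by time $T_1$, it covers a sub-path of $2((\underline{\mu}-j)\Delta+\overline{\varsigma})+1$ diameter nodes; summing over $j=1,\dots,\underline{\mu}$ gives $\underline{\mu}+\Delta(\underline{\mu}-1)\underline{\mu}+2\overline{\varsigma}\underline{\mu}$, which is the left-hand side of~(\ref{enough_updates_constraint}). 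This yields exactly the program~(\ref{obj_peak_mu})--(\ref{integerconstraint}).

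For the second part, the key observation is lexicographic: because $0\le\overline{\varsigma}'<\Delta$, every feasible point satisfies $1+(\underline{\mu}'-1)\Delta\le T_1<1+\underline{\mu}'\Delta$, so a feasible solution with a smaller $\underline{\mu}'$ strictly beats one with a larger $\underline{\mu}'$; hence the optimum uses the smallest feasible $\underline{\mu}'$ and, given it, the smallest admissible $\overline{\varsigma}'$. For a fixed $\underline{\mu}'$, constraint~(\ref{enough_updates_constraint}) is linear in $\overline{\varsigma}'$ and forces $\overline{\varsigma}'\ge\frac{|diam(G)|+1-\underline{\mu}'(1+\Delta(\underline{\mu}'-1))}{2\underline{\mu}'}$, and the ceiling of this bound (by integrality of $\overline{\varsigma}'$) is exactly~(\ref{eq02_prop2}). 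Requiring this $\overline{\varsigma}'$ to lie in $[0,\Delta)$ together with minimality of $\underline{\mu}'$ turns the coverage condition into the quadratic $\Delta(\underline{\mu}')^2-(\Delta-1)\underline{\mu}'-(|diam(G)|+1)=0$ in $\underline{\mu}'$, whose positive root is $\frac{(\Delta-1)+\sqrt{(\Delta-1)^2+4\Delta(|diam(G)|+1)}}{2\Delta}$; since $(\Delta-1)^2+4\Delta(|diam(G)|+1)=\Delta^2+2\Delta+4|diam(G)|\Delta+1$, taking the floor recovers~(\ref{eq01_prop2}). I would close by substituting~(\ref{eq01_prop2})--(\ref{eq02_prop2}) back into~(\ref{enough_updates_constraint}) and into the objective to confirm feasibility and optimality against $\underline{\mu}'\pm1$.

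\textbf{Main obstacle.} The genuinely delicate part is the integer/boundary bookkeeping in the second part: one must show that the floor in~(\ref{eq01_prop2}) is \emph{exactly} the smallest $\underline{\mu}'$ for which the top-up $\overline{\varsigma}'$ of~(\ref{eq02_prop2}) still fits in $[0,\Delta)$ --- i.e., that decreasing $\underline{\mu}'$ by one forces $\overline{\varsigma}'\ge\Delta$ whereas $\overline{\varsigma}'<\Delta$ at $\underline{\mu}'$ itself --- and to handle the boundary cases where the quadratic's positive root is an integer (so $\overline{\varsigma}=0$ and the last cycle is exactly full) or where the ceiling in~(\ref{eq02_prop2}) equals $\Delta$ (which is equivalent to incrementing $\underline{\mu}'$ with $\overline{\varsigma}'=0$ and yields the same $T_1$). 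A secondary difficulty, hidden in the first part, is rigorously proving that the bottleneck node lies on $diam(G)$ and that the Step-2 placement attains the count in~(\ref{enough_updates_constraint}) with equality at time $T_1$; this is where the histogram reduction and the periodicity result behind Theorem~\ref{avg_approximation_theorem} do the real work.
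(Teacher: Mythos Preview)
Your proposal is correct and largely parallels the paper's argument. The derivation of constraint~(\ref{enough_updates_constraint}) is identical: both you and the paper compute that seed $s_j$ (placed at time $1+(j-1)\Delta$) covers $1+2[(\underline{\mu}'-j)\Delta+\overline{\varsigma}']$ diameter nodes by time $T_1$, and sum over $j$ to obtain the left-hand side of~(\ref{enough_updates_constraint}).

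Where you differ is in solving the integer program. The paper relaxes~(\ref{integerconstraint}) to the reals, observes that the relaxed optimum is $(\underline{\mu}^*,\overline{\varsigma}^*)=\bigl(\frac{\Delta-1+\sqrt{\Delta^2+2\Delta+4|diam(G)|\Delta+1}}{2\Delta},\,0\bigr)$, and then simply asserts that ``applying a rounding technique'' yields~(\ref{eq01_prop2})--(\ref{eq02_prop2}). Your lexicographic argument---that $0\le\overline{\varsigma}'<\Delta$ forces any feasible point with smaller $\underline{\mu}'$ to strictly dominate one with larger $\underline{\mu}'$, so one first minimizes $\underline{\mu}'$ and then $\overline{\varsigma}'$---is a genuinely cleaner justification of \emph{why} this particular rounding is optimal, which the paper leaves implicit. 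Your explicit identification of the boundary cases (integer root, $\lceil\cdot\rceil=\Delta$) also goes beyond what the paper records. Conversely, the paper spells out the explicit seed placements $\omega_x=v_{\zeta(x)}$ that realize the coverage count with no overlap at time $T_1$; you flag this as a ``secondary difficulty'' but would need to supply those positions (or cite their existence) to close the argument that~(\ref{enough_updates_constraint}) is achievable with equality.
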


In light of Proposition \ref{prop_def_three_values}, the optimal solution ($\underline{\mu},\overline{\varsigma}$) to 
Problem (\ref{obj_peak_mu})-(\ref{integerconstraint}) indicates that Algorithm~\ref{peak_cycliselection_alg} achieves the earliest time when each node on $diam(G)$ is updated. With (\ref{eq01_prop2}) and (\ref{eq02_prop2}) in hand, we can run Algorithm~\ref{peak_cycliselection_alg} optimally, finalizing the configuration of each seed candidate $\omega_{x}=v_{\zeta(x)}$ in its Step~2 as $\zeta(x)=\Delta(\underline{\mu}-x+1)^2+\underline{\mu}-x+1+(2\underline{\mu}-2x+1)\overline{\varsigma}$.
\subsection{Approximation Analysis in Line-type Networks}
To evaluate the approximation guarantee achieved by Algorithm~\ref{peak_cycliselection_alg}, we begin with a line-type (or linear) network topology as defined below. This simple yet fundamental topology serves as a basis for our approximation analysis, which will be extended to general social networks in Section \ref{subsec_general_graph}.
\begin{definition}[Line-type social network \cite{krawczyk2011line,yates2020age}]
A line-type social network consists of a set of sequentially connected users/nodes, 
which resembles a line and features an adjacency matrix as in (\ref{def_linear_network}).
\begin{equation}\label{def_linear_network}
E_{n\times n}=  
\begin{bmatrix}
0  &1  & ... &0 &0\\ 
1  &0  &...  &0 &0\\ 
\vdots &\vdots  & \vdots & \vdots&\vdots\\ 
 0 &0  &...  &0&1\\
 0&0  &... &1&0
\end{bmatrix}.
\end{equation}
\end{definition}
Given the NP-hard nature of our peak AoI minimization problem (\ref{problem_peak_aoi}), one can hardly find an optimal solution. Instead, 
we establish a lower bound approximation for the optimum solution by the following lemma, which draws upon our closed-form expression (\ref{eq_peak_vi}).


\begin{lemma}\label{theorem_peak_opt}
In line-type social networks, an optimal solution to the peak AoI minimization problem follows 
\begin{equation}\label{theorem_peak_opt_eq}
    A^{\rm peak}_{\rm OPT}\geq \max\{A_0+\overline{\varsigma},\left \lfloor {\xi} \right \rfloor\Delta\}+1+\underline{\mu}\Delta.
\end{equation}
in which $\xi=\frac{-(1+3\Delta)+\sqrt{(1+3\Delta)^2+4\Delta(1+2\underline{\mu}\Delta)}}{2\Delta}$.
\end{lemma}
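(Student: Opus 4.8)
The plan is to start from the closed-form peak-AoI expression (\ref{eq_peak_vi}) of Theorem~\ref{lemma_aoi_peak_v_i} and to lower-bound $A^{\rm peak}_{\rm OPT}$ by separately lower-bounding, over every admissible seed sequence on the line, the two quantities inside the outer maximum of (\ref{theorem_peak_opt_eq}). On the line with $|diam(G)|=n-1$, a seed picked at timestamp $t_j=(j-1)\Delta+1$ and placed at node index $p_j$ delivers its update to $v_i$ at time $t_j+|p_j-i|$, so for a node's AoI to drop by a given deadline the responsible seed must lie within a distance budget that shrinks as $t_j$ grows. Since a ball of radius $r$ on the line contains at most $2r+1$ nodes and such balls must cover all $n$ positions, summing yields an arithmetic-progression inequality that is \emph{exactly} the feasibility constraint (\ref{enough_updates_constraint}) defining $\underline{\mu}$ and $\overline{\varsigma}$ in Proposition~\ref{prop_def_three_values}. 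This interval-packing lemma is the engine of the whole proof, and it is what ties the abstract parameters in (\ref{theorem_peak_opt_eq}) to the combinatorics of the instance.

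First I would establish $A^{\rm peak}_{\rm OPT}\ge A_0+\overline{\varsigma}+1+\underline{\mu}\Delta$ from the transient regime. Before $v_i$ ever receives an update, (\ref{eq_functionof_vi}) gives $A(v_i,t)=A_0+t$, hence $A^{\rm peak}_i\ge A_0+\min_{j}\{t_j+dist(s_j,v_i)\}$, and it suffices to show that no schedule makes $\max_{v_i}\min_j\{t_j+dist(s_j,v_i)\}$ small. If every node were first reached by a deadline $\tau$, then only the seeds with $t_j\le\tau$ help, a seed $s_j$ can be ``first'' for at most $2(\tau-t_j)+1$ nodes, and summing over those $\approx(\tau-1)/\Delta+1$ seeds forces $\tau$ to obey an inequality of the form (\ref{enough_updates_constraint}). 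Because Proposition~\ref{prop_def_three_values} already pins down the minimal feasible value through $(\underline{\mu},\overline{\varsigma})$, some node's first reach is at least that threshold; carrying the arithmetic through the last active seeding round and accounting for the coupling with the refresh conditions yields the stated bound.

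Next I would establish $A^{\rm peak}_{\rm OPT}\ge\lfloor\xi\rfloor\Delta+1+\underline{\mu}\Delta$ from the recurring terms $1+dist(s_{i_j},v_i)+t_{i_j}-t_{i_{j-1}}$ in (\ref{eq_peak_vi}), with $t_{i_j}-t_{i_{j-1}}=(i_j-i_{j-1})\Delta$. A pigeonhole over the $n$ line nodes against the bounded number of nodes that one round of seeding can freshly reach produces a node that is skipped for at least $\underline{\mu}$ consecutive rounds between two of its discontinuities, so $i_j-i_{j-1}\ge\underline{\mu}$ for that node. A window-restricted version of the packing lemma --- where the blocks that can be refreshed within a bounded reach-gap across a window of rounds shrink linearly --- then forces the seed that finally refreshes it to sit at distance at least $\lfloor\xi\rfloor$, $\xi$ being precisely the positive root of the quadratic $\Delta\xi^2+(1+3\Delta)\xi=1+2\underline{\mu}\Delta$. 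Plugging $dist\ge\lfloor\xi\rfloor$ and the round gap $\ge\underline{\mu}$ into the recurring term, and then taking the maximum of the two cases, gives (\ref{theorem_peak_opt_eq}).

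\textbf{The main obstacle.} The delicate point is that OPT is an \emph{arbitrary} sequence of seeds and positions rather than the periodic round-robin schedule of Algorithm~\ref{peak_cycliselection_alg}: the packing arguments must therefore be carried out for unstructured placements and over sliding windows of rounds, ruling out that a clever non-periodic interleaving lets a single seed serve several disjoint faraway blocks; and the arithmetic-progression / quadratic bookkeeping must be pushed through the floors and ceilings in the definitions of $\underline{\mu}$, $\overline{\varsigma}$ and $\xi$ without dropping the additive $\Delta$-terms. Once the interval-packing lemma is in place, the remaining ingredients (the transient formula, the $2r+1$ ball size, and the identification with (\ref{enough_updates_constraint})) are routine.
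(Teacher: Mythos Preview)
Your transient bound $A^{\rm peak}_{\rm OPT}\ge A_0+\overline{\varsigma}+1+\underline{\mu}\Delta$ is sound and is exactly what the paper does: the ball-packing inequality you describe is precisely the feasibility constraint (\ref{enough_updates_constraint}), and Proposition~\ref{prop_def_three_values} pins down the minimal first-coverage time as $1+\underline{\mu}\Delta+\overline{\varsigma}$.

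The second half, however, has a genuine gap. You propose to locate a single node $v_i$ and a single pair of consecutive discontinuities for which \emph{simultaneously} $i_j-i_{j-1}\ge\underline{\mu}$ and $dist(s_{i_j},v_i)\ge\lfloor\xi\rfloor$, and then read off the recurring term $1+dist(s_{i_j},v_i)+(i_j-i_{j-1})\Delta$. Two problems. First, the arithmetic does not close: with those two bounds the recurring term is $\ge 1+\lfloor\xi\rfloor+\underline{\mu}\Delta$, whereas the lemma requires $1+\lfloor\xi\rfloor\Delta+\underline{\mu}\Delta$; the discrepancy is $\lfloor\xi\rfloor(\Delta-1)$, which is nonzero for $\Delta>1$. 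Second, and more seriously, your ``window-restricted packing lemma'' would have to produce the gap bound and the distance bound \emph{for the same node and the same refresh event}. A pigeonhole over one round window can certainly force some node to wait $\underline{\mu}$ rounds, but nothing prevents OPT from then refreshing exactly that node from a nearby seed; conversely the nodes that happen to be far from their next refreshing seed need not be the ones with long gaps. Your sketch does not explain how to rule out this trade-off, and in fact it cannot be ruled out term-by-term.

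The paper sidesteps this coupling issue by arguing globally rather than through a single recurring term. It freezes time at $1+\underline{\mu}\Delta$ and counts, for any schedule, how many nodes can have AoI exactly $1+i\Delta$ there (at most $1+2i\Delta$, since only seed $s_{\underline{\mu}+1-i}$ can produce that age). Summing shows the set $\Phi$ of nodes with AoI $\ge 1+\underline{\mu}\Delta$ at that instant has $|\Phi|\ge 1+2\underline{\mu}\Delta$, and that these $1+2\underline{\mu}\Delta$ nodes form a contiguous block. One then asks: by what time can every node of $\Phi$ be refreshed after $1+\underline{\mu}\Delta$? Refresh capacity over the next $y$ rounds is at most $2y\Delta$ from inward diffusion at the two boundaries of the block plus $\sum_{i=1}^{y}(1+2i\Delta)$ from the $y$ new seeds; setting this $\ge 1+2\underline{\mu}\Delta$ gives exactly the quadratic $\Delta\xi^{2}+(1+3\Delta)\xi=1+2\underline{\mu}\Delta$ whose positive root defines $\xi$. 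Hence at time $1+(\underline{\mu}+\lfloor\xi\rfloor)\Delta$ some node of $\Phi$ is still unrefreshed since $1+\underline{\mu}\Delta$, and its AoI there is $\ge 1+\underline{\mu}\Delta+\lfloor\xi\rfloor\Delta$. The point is that the $\lfloor\xi\rfloor\Delta$ contribution comes from \emph{additional waiting time}, not from a distance term, which is why the factor $\Delta$ appears correctly.
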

Lemma \ref{theorem_peak_opt} provides a comparison benchmark for the approximation of our Algorithm~\ref{peak_cycliselection_alg}, yielding the following theorem.
\begin{theorem}\label{avg_approximation_theorem}
   For a line-type social network of (\ref{def_linear_network}), Algorithm~\ref{peak_cycliselection_alg} runs in $O(mn)$-time and guarantees an approximation $\frac{1+2\underline{\mu}\Delta+\overline{\varsigma}}{1+\underline{\mu}\Delta+\left \lfloor {\xi} \right \rfloor\Delta}$ of the optimum, which is strictly smaller than 2. Particularly, Algorithm~\ref{peak_cycliselection_alg} is optimal for $n\leq \frac{A_0^2+A_0(1-\Delta)}{\Delta}$. 
\end{theorem}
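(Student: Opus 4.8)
The plan is to compare $A^{\rm peak}_{\rm ALG}$, obtained by feeding the round-robin schedule of Algorithm~\ref{peak_cycliselection_alg} into the closed form of Theorem~\ref{lemma_aoi_peak_v_i}, against the lower bound on $A^{\rm peak}_{\rm OPT}$ from Lemma~\ref{theorem_peak_opt}, and then to simplify the resulting ratio. \emph{Running time.} Step~1 orders the nodes of $diam(G)$; for the line topology $diam(G)$ is the whole graph, and in general it is extracted from an all-pairs shortest-path (or $n$ BFS) computation in $O(mn)$ time. Step~2 only evaluates the closed forms (\ref{eq01_prop2})--(\ref{eq02_prop2}) for $(\underline{\mu},\overline{\varsigma})$ together with the $\underline{\mu}$ candidate positions $\zeta(\cdot)$, costing $O(n)$. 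Step~3 sets $s_i=\omega_{{\rm mod}(i-1,\underline{\mu})+1}$ for $i\in[k]$, costing $O(k)$. As $k$ and $|diam(G)|$ are $O(n)$, the total is $O(mn)$.

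\emph{Value of $A^{\rm peak}_{\rm ALG}$.} Because the seeds cycle through the $\underline{\mu}$ candidates of $\Omega$, for every node $v_i$ and every discontinuity after its first one we have $t_{i_j}-t_{i_{j-1}}=\underline{\mu}\Delta$, and by the choice of the positions $\zeta(\cdot)$ in Proposition~\ref{prop_def_three_values} the distance to the candidate that most recently refreshed $v_i$ never exceeds $\underline{\mu}\Delta+\overline{\varsigma}$. Substituting this into the middle term of (\ref{eq_peak_vi}) and using Algorithm~\ref{ImprovedDiscontinuityPoint} to discard non-discontinuities, one checks that after the transient $A^{\rm peak}(t)$ becomes periodic with peak value $1+2\underline{\mu}\Delta+\overline{\varsigma}$ (Fig.~\ref{pettern_peak_aoi_cyclicselction}); meanwhile the first term of (\ref{eq_peak_vi}), which governs the transient, is at most $A_0+\overline{\varsigma}+1+\underline{\mu}\Delta$ since constraint (\ref{enough_updates_constraint}) forces every line node to be updated by time $T_1=1+(\underline{\mu}-1)\Delta+\overline{\varsigma}$. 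Hence $A^{\rm peak}_{\rm ALG}=\max\{\,A_0+\overline{\varsigma}+1+\underline{\mu}\Delta,\; 1+2\underline{\mu}\Delta+\overline{\varsigma}\,\}$, the two arguments being attained in the transient and in the steady state, respectively.

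\emph{Ratio and regimes.} By Lemma~\ref{theorem_peak_opt}, $A^{\rm peak}_{\rm OPT}\ge \max\{A_0+\overline{\varsigma},\lfloor\xi\rfloor\Delta\}+1+\underline{\mu}\Delta$, and trivially $A^{\rm peak}_{\rm OPT}\le A^{\rm peak}_{\rm ALG}$ since Algorithm~\ref{peak_cycliselection_alg} is feasible. If $\underline{\mu}\Delta\le A_0$ and $A_0+\overline{\varsigma}\ge\lfloor\xi\rfloor\Delta$, then $A^{\rm peak}_{\rm ALG}=A_0+\overline{\varsigma}+1+\underline{\mu}\Delta$ and the lower bound matches it, so Algorithm~\ref{peak_cycliselection_alg} is optimal; inserting (\ref{eq01_prop2})--(\ref{eq02_prop2}) with $|diam(G)|=n-1$ shows these two inequalities reduce exactly to $n\le\frac{A_0^2+A_0(1-\Delta)}{\Delta}$. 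Otherwise $A^{\rm peak}_{\rm ALG}=1+2\underline{\mu}\Delta+\overline{\varsigma}$, while the lower bound gives $A^{\rm peak}_{\rm OPT}\ge 1+\underline{\mu}\Delta+\lfloor\xi\rfloor\Delta$, so the ratio is at most $\frac{1+2\underline{\mu}\Delta+\overline{\varsigma}}{1+\underline{\mu}\Delta+\lfloor\xi\rfloor\Delta}$. This is strictly below $2$ iff $\overline{\varsigma}<1+2\lfloor\xi\rfloor\Delta$, which holds because $\overline{\varsigma}<\Delta$ while $\Delta\le1+2\lfloor\xi\rfloor\Delta$: the latter is immediate when $\Delta\le1$, and when $\Delta>1$ it follows from $\lfloor\xi\rfloor\ge1$, which I would read off the defining equation of $\xi$ (the inequality $\xi\ge1$ reduces to $\underline{\mu}\ge2$, and the degenerate case $\underline{\mu}=1$ of a very short diameter is checked directly).

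\emph{Main obstacle.} The delicate step is pinning down $A^{\rm peak}_{\rm ALG}$ exactly — proving that the periodic pattern of Fig.~\ref{pettern_peak_aoi_cyclicselction} sets in precisely at $T_1$ and peaks at $1+2\underline{\mu}\Delta+\overline{\varsigma}$ — which requires carefully tracking, through (\ref{eq_peak_vi}) and Algorithm~\ref{ImprovedDiscontinuityPoint}, how the nearest-in-cycle seed candidate of each line node evolves across rounds, and arguing the transient never produces a larger value than the maximum written above. The secondary nuisance is the algebra that converts the two case conditions into the clean threshold $n\le\frac{A_0^2+A_0(1-\Delta)}{\Delta}$ once the explicit $\underline{\mu},\overline{\varsigma},\xi$ from (\ref{eq01_prop2})--(\ref{eq02_prop2}) and Lemma~\ref{theorem_peak_opt} are substituted.
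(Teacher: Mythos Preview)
Your overall strategy is exactly the paper's: bound $A^{\rm peak}_{\rm ALG}$ above by $\max\{A_0+1+\underline{\mu}\Delta+\overline{\varsigma},\,1+2\underline{\mu}\Delta+\overline{\varsigma}\}$ (transient vs.\ steady state), bound $A^{\rm peak}_{\rm OPT}$ below by Lemma~\ref{theorem_peak_opt}, and take the quotient. Where you and the paper differ is in how the upper bound is obtained.

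You plug the round-robin schedule directly into the closed form~(\ref{eq_peak_vi}) and assert that $t_{i_j}-t_{i_{j-1}}=\underline{\mu}\Delta$ with $dist(s_{i_j},v_i)\le\underline{\mu}\Delta+\overline{\varsigma}$. The first assertion is not literally true: a node near the boundary between two coverage regions can receive discontinuities from \emph{more than one} candidate per cycle, making some gaps strictly smaller than $\underline{\mu}\Delta$. This does not break the upper bound, but it means the middle term of~(\ref{eq_peak_vi}) cannot be controlled by the simple product you wrote; you would have to argue that whenever the gap is the full $\underline{\mu}\Delta$, the corresponding distance is small, and vice versa. The paper sidesteps this bookkeeping entirely: it partitions $[1,T]$ into windows $T_i=[1+(i-1)\underline{\mu}\Delta+\overline{\varsigma},\,1+i\underline{\mu}\Delta+\overline{\varsigma})$, and for each window explicitly computes, from the positions $\zeta(x)$, the disjoint intervals of line nodes covered by each of the $\underline{\mu}$ seeds selected in that window (their equations~(\ref{peak_cyclicselction_coveredset})--(\ref{the_supporting_material_toshowthecover}) and Fig.~\ref{peak_linetype_socialgraph_cyclic}). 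This shows \emph{every} node is refreshed once per window, so the peak within $T_i$ is at most the window's initial AoI plus its length $\underline{\mu}\Delta$, giving $1+2\underline{\mu}\Delta+\overline{\varsigma}$ directly without ever identifying individual discontinuities. Your route via~(\ref{eq_peak_vi}) would ultimately need this same coverage fact, so the paper's interval argument is both what you are missing and the cleanest way to fill the gap you flagged as the ``main obstacle.''

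Two small points. Your transient bound quotes $T_1=1+(\underline{\mu}-1)\Delta+\overline{\varsigma}$ from Proposition~\ref{prop_def_three_values}, but then writes the first term of~(\ref{eq_peak_vi}) as $A_0+1+\underline{\mu}\Delta+\overline{\varsigma}$; the paper's Case~1 actually shows full coverage by time $1+\underline{\mu}\Delta+\overline{\varsigma}$ (one $\Delta$ later), which is what your stated bound needs. Second, your verification that the ratio is strictly below~2 (reducing to $\overline{\varsigma}<1+2\lfloor\xi\rfloor\Delta$ and checking $\xi\ge1\Leftrightarrow\underline{\mu}\ge2$) is more explicit than anything the paper writes out; the paper simply records the ratio inequality~(\ref{element_peak_ratio_cyclicselection_01}) and leaves the ``$<2$'' to the reader.
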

\begin{proof}
In a line-type social network, a middle node can disseminate information on both sides simultaneously. Next, we will show that the peak network AoI led by our algorithm follows a periodic pattern as depicted in
Fig.~\ref{pettern_peak_aoi_cyclicselction}. To start with,  
we partition the time horizon $[1,T]$ into multiple intervals as follows, 
\begin{equation}
    T_i=[1+(i-1)\underline{\mu}\Delta+\overline{\varsigma}, 1+i\underline{\mu}\Delta+\overline{\varsigma}),
\end{equation}
for $i=\{1,2,3,...\}$, i.e., $[1,T]=\bigcup\limits_{i\geq 1}T_i$. We have two cases.

\noindent \textbf{Case 1.}
($t\in T_1$). 
Since  Algorithm~\ref{peak_cycliselection_alg} selects seed $s_i=\omega_i$ at time $1+(i-1)\Delta$, our information diffusion model yields that the number of nodes updated by
$\omega_{i}$ at time ($1+\underline{\mu}\Delta+\overline{\varsigma}$) follows
\begin{equation*}
\begin{split}
1+ 2[1+\underline{\mu}\Delta+\overline{\varsigma}-1-(i-1)\Delta]=1+2(\underline{\mu}-i+1)\Delta+2\overline{\varsigma}.
   \end{split}
\end{equation*}
Due to our seed candidates fine-tuned by Algorithm \ref{peak_cycliselection_alg}, we can rewrite each seed $\omega_{\underline{\mu}-j}$ (where $j\in\{0,...,\underline{\mu}-1\}$)  as follows
\begin{equation}
 \omega_{\underline{\mu}-j}=v_{\Delta (j+1)^2+(j+1)+(2j+1)\overline{\varsigma}}. 
\end{equation}
As a result, at time ($1+\underline{\mu}\Delta+\overline{\varsigma}$),  each seed $\omega_{\underline{\mu}-j}$ 
could update the following set of head-to-tail nodes
\begin{equation}\label{peak_cyclicselction_coveredset}
\begin{split}
  \{v_{(j+1)j\Delta +(j+1)+2j\overline{\varsigma}},...,v_{(j+1)(j+2)\Delta+(j+1)+(2j+2)\overline{\varsigma}}\},
\end{split}
\end{equation}
for each $j\in\{0,...,\underline{\mu}-1\}$. 

\begin{figure}
    \centering \includegraphics[width=8.8cm]{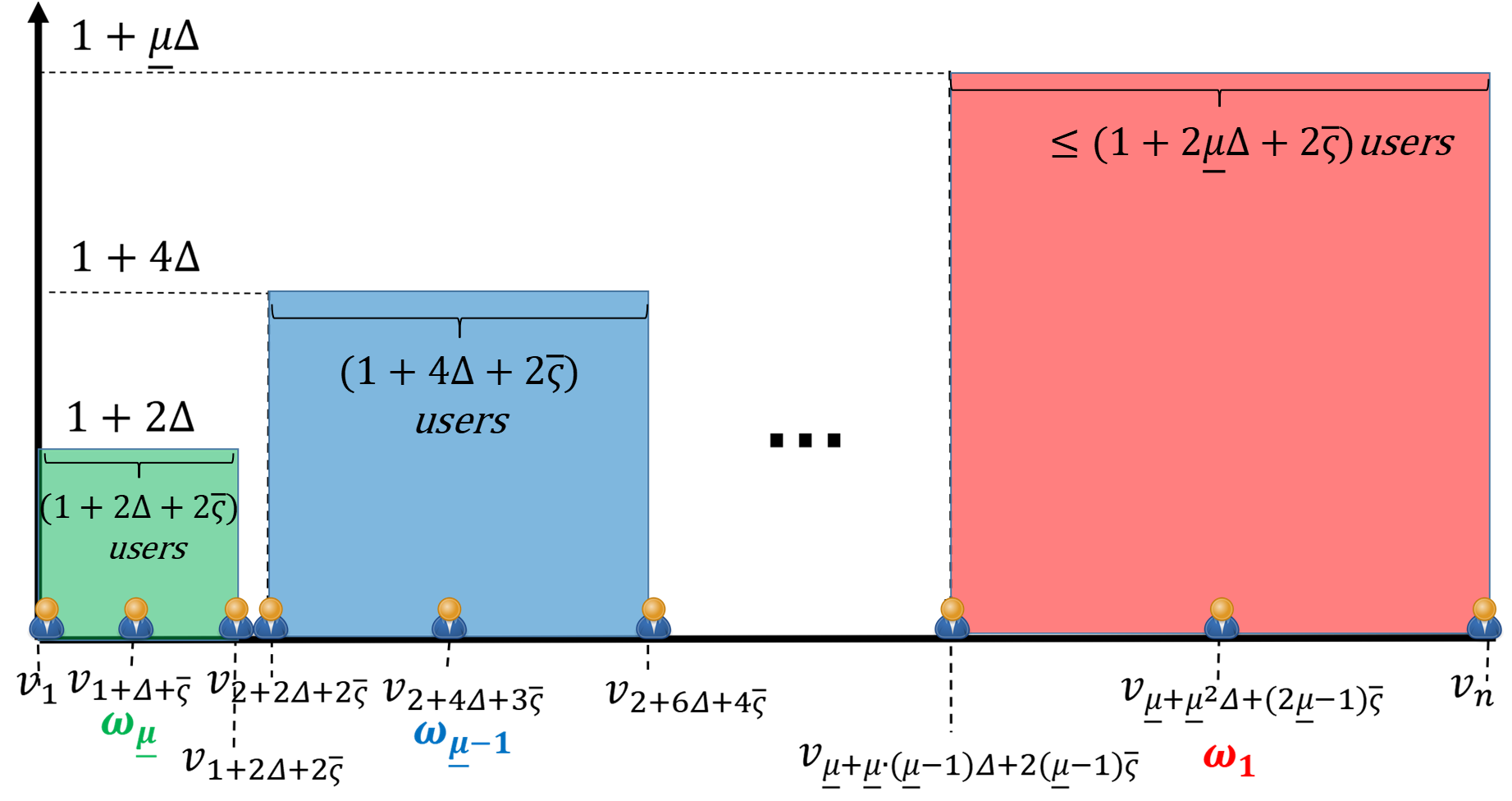}
    \caption{Configuration of seed candidates in a line-type social network, where X-axis and Y-axis indicate candidate locations and AoI at time $1+\underline{\mu}\Delta+\overline{\varsigma}$, respectively. Note that, at time $1+\underline{\mu}\Delta$, the AoI of users covered by a rectangular of the same color is no more than the height of that rectangle.}\label{peak_linetype_socialgraph_cyclic}
\end{figure}
Fig.~\ref{peak_linetype_socialgraph_cyclic} depicts the configuration of our seed candidates on a line-type social graph, which can be interpreted as follows.
First, the X-axis of Figure \ref{peak_linetype_socialgraph_cyclic} represents the relative locations among seed candidates. More specifically, each black-marked value on the X-axis indicates the position of the corresponding node, particularly, each colored mark on the X-axis indicates the position of the corresponding seed candidate that is fine-tuned by our algorithm.  The Y-axis of Figure \ref{peak_linetype_socialgraph_cyclic} represents the AoI, note that the peak AoI of each node in the same rectangular is not larger than the AoI indicated by the height of the corresponding rectangle. Consequently, we can tell that the peak AoI of the network is not larger than 
$1+\underline{\mu}\Delta$. 

Fig.~\ref{peak_linetype_socialgraph_cyclic} also illustrates that the node immediately following the right-most node updated by seed $\omega_{\underline{\mu}-j}$ is just the left-most node updated by seed $\omega_{\underline{\mu}-(j-1)}$. This implies that the family of sets (\ref{peak_cyclicselction_coveredset}) are disjoint for different $j\in\{0,...,\underline{\mu}-2\}$. By taking the union operation over sets~(\ref{peak_cyclicselction_coveredset}) over different $j\in\{0,...,\underline{\mu}-2\}$, we observe from Fig.~\ref{peak_linetype_socialgraph_cyclic} that those head-to-tail nodes in the following set are updated by seeds $\{\omega_{2},...,\omega_{\underline{\mu}}\}$ up to time $(1+\underline{\mu}\Delta+\overline{\varsigma})$:
\begin{equation}\label{set_of_nodes_covered}
    \{v_1,...,v_{(\underline{\mu}-1)\underline{\mu}\Delta+\underline{\mu}-1+(2\underline{\mu}-2)\overline{\varsigma}}\}
\end{equation}
which covers a total number $((\underline{\mu}-1)\underline{\mu}\Delta+\underline{\mu}-1+(2\underline{\mu}-2)\overline{\varsigma})$ of nodes. Accordingly, the remaining number of nodes that are not updated by  $\{\omega_{2},...,\omega_{\underline{\mu}}\}$ follows
\begin{equation}\label{the_supporting_material_toshowthecover}
\begin{split}
&n-((\underline{\mu}-1)\underline{\mu}\Delta+\underline{\mu}-1+(2\underline{\mu}-2)\overline{\varsigma})\\
    &\leq n-((\underline{\mu}-1)\underline{\mu}\Delta+\underline{\mu}-1+(2\underline{\mu}-2)\varsigma)\\
    &=1+2\underline{\mu}\Delta+2\varsigma.
    \end{split}
\end{equation}
This implies that the seed $\omega_1$ can update all the nodes outside those in (\ref{set_of_nodes_covered}).
Thus, all nodes in $V$ are updated at least once by time $(1+\underline{\mu}\Delta+\overline{\varsigma}
)$, yielding that the peak AoI in $T_1$ follows 
\begin{equation}\label{eq_case1_peak_aoi}
    A^{\rm peak}\leq A_0+1\underline{\mu}\Delta+\overline{\mu}
\end{equation}
\noindent\textbf{Case 2.}
($t\in \bigcup\limits_{i\geq 2}T_i$).
According to Step~3 of our Algorithm \ref{peak_cycliselection_alg}, there are a number $\underline{\mu}$ of new seeds selected as $(\omega_1,\omega_2,...,\omega_{\underline{\mu}})$ in each interval $T_i-\overline{\varsigma}=[1+(i-1)\underline{\mu}\Delta, 1+i\underline{\mu}\Delta)$. To distinguish those seeds selected from different intervals, we further denote $(\omega_1^i,\omega_2^i,...,\omega_{\underline{\mu}}^i)$ as those seeds that are selected in interval $T_i-\overline{\varsigma}$. 
By applying a similar analysis as in (\ref{peak_cyclicselction_coveredset}) and (\ref{the_supporting_material_toshowthecover}), it can be verified that by the end of the interval $T_i$, all nodes in $V$ are updated at least once 
by some seed candidate in $(\omega_1^i,\omega_2^i,...,\omega_{\underline{\mu}}^i)$ that are selected within $T_i-\overline{\varsigma}$. Since the start time of each interval $T_i$ is exactly the end time of the prior interval, the AoI of each node at the start time of $T_i$ is no larger than the age of the oldest information (which is 
disseminated from seed $\omega_1^{i-1}$) in the prior interval $T_{i-1}$. In other words, the initial AoI of nodes in the interval $T_i$ follows
\begin{equation*}
    A(v_i,1+(i-1)\underline{\mu}\Delta+\overline{\varsigma})\leq 1+\underline{\mu}\Delta+\overline{\varsigma}
\end{equation*}
Since $T_i$ has a duration of 
$\underline{\mu}\Delta$, the
peak AoI of each node within $T_i$ is no larger than 
\begin{equation}\label{eq_case2_peak_aoi}
1+\underline{\mu}\Delta+\overline{\varsigma}+\underline{\mu}\Delta=1+2\underline{\mu}\Delta+\overline{\varsigma}.
\end{equation}
By applying (\ref{eq_case2_peak_aoi}) and (\ref{theorem_peak_opt_eq}) back to (\ref{peak_aoi_def}), we have
\begin{equation}\label{element_peak_ratio_cyclicselection_01}
\begin{split}
    \frac{A^{\rm peak}_{\rm ALG}}{A^{\rm peak}_{\rm OPT}}&\leq \frac{1+\underline{\mu}\Delta+\overline{\varsigma}+\max\{A_0,\underline{\mu}\Delta\}}{1+\underline{\mu}\Delta+\max\{A_0+\overline{\varsigma},\underline{\xi}\Delta\}}\\
    &\leq \frac{1+2\underline{\mu}\Delta+\overline{\varsigma}}{1+\underline{\mu}\Delta+\underline{\xi}\Delta},
\end{split}
\end{equation}
where the first inequality holds by (\ref{eq_case1_peak_aoi}), (\ref{eq_case2_peak_aoi}) and  Lemma~\ref{theorem_peak_opt}.  

Particularly when $n\leq \frac{A_0^2+(\Delta+1)A_0}{\Delta}$,  we have
\begin{equation}\label{thelasteq_peak_aoi}
    A_0\geq\frac{-\Delta-1+\sqrt{\Delta^2+(2+4n)\Delta+1}}{2}= \mu\Delta\geq \underline{\mu}\Delta.
\end{equation}
By applying (\ref{thelasteq_peak_aoi}) and the fact $\xi\leq \mu$ to the first inequality of
(\ref{element_peak_ratio_cyclicselection_01}), we meet $\frac{A^{\rm peak}_{\rm ALG}}{A^{\rm peak}_{\rm OPT}}=\frac{1+\underline{\mu}\Delta+A_0+\overline{\varsigma}}{1+\underline{\mu}\Delta+A_0+\overline{\varsigma}}=1$.
\end{proof}
In view of Theorem~\ref{avg_approximation_theorem}, 
the low time complexity and approximation tell decent performances of our Algorithm~\ref{peak_cycliselection_alg}, which even achieves optimality for small networks. 
\subsection{Analysis Extension to General Social Networks}\label{subsec_general_graph}
In general networks, social connections are more complicated and AoI updates among network nodes may become more interrelated. Note that Algorithm \ref{peak_cycliselection_alg} solely seeds on the diameter of a general graph, making its corner nodes suffer from larger delay than its middle nodes. Besides, 
any social connection between nodes outside $diam(G)$ will evidently accelerate our algorithm's information diffusion and consequently ameliorate our algorithm's performance. As such, when analyzing Algorithm \ref{peak_cycliselection_alg}'s worst-case performance, it is sufficient to focus on the histogram (as defined below and shown in Figure \ref{histogram_social_graph}) that is reduced from a given general graph. Our subsequent Lemma~\ref{reduce_generagraph_lemma} corroborates this claim.

\begin{figure}[h]
    \centering    \includegraphics[width=3.5cm]{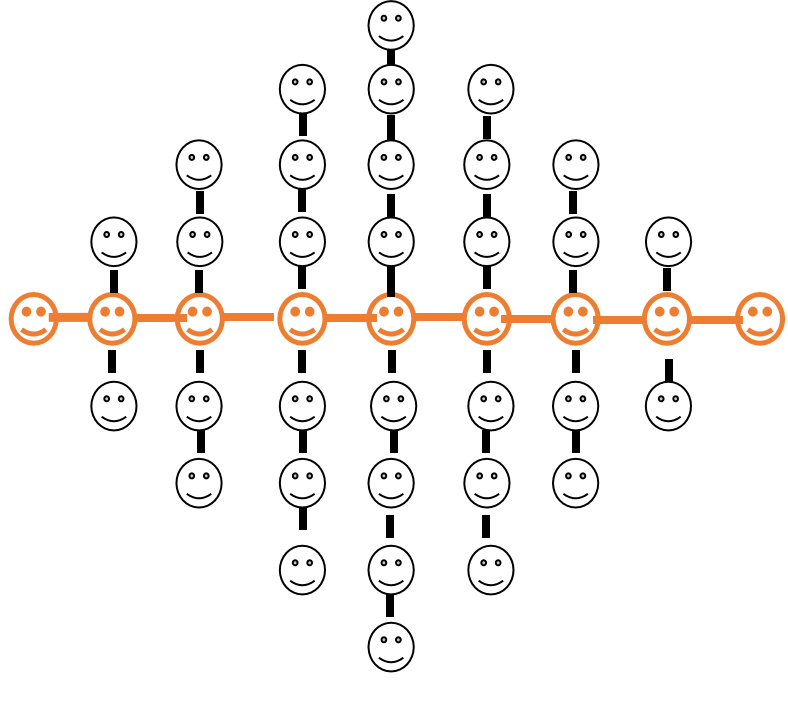}
    \caption{Illustration of a histogram network structure in which nodes on the diameter path are highlighted in orange.}   \label{histogram_social_graph}
\end{figure}
\begin{definition}[Histogram-type graph $H(V_H,E_H)$]
In a histogram-type network $H(V_H,E_H)$, each node $v\in V_H$ follows
\begin{equation}\label{eq_lemma_histogram_feature}
    \min\limits_{v'\in diam(H)} dist(v',v)\leq \min\{dist(v_{\perp},v_{\vdash}),dist(v_{\perp},v_{\dashv})\},
\end{equation}
where $v_{\vdash}$ and $v_{\dashv}$ indicate the left- and the right-most nodes on the diameter path $diam(H)$ of graph $H$, respectively, and $v_{\perp}$ denotes the closest node to $v$ on $diam(H)$, i.e., $v_{\perp}\triangleq \arg\min\limits_{v'\in diam(H)} dist(v',v)$.
\end{definition}
%
%
\begin{algorithm}[t]
\caption{Graph Reducing Approach}\label{graph_reducing_alg}
  \textbf{Data:} {A general graph $\{G\}$.}\\
  \textbf{Result:} {A reduced histogram graph $H(V_H,E_H)$ corresponding to $G$.}
  \begin{algorithmic}[1]
\STATE Find an arbitrary diameter path $diam(G)$ (if multiple exist) of the given graph $G$. 
\STATE Let $V(diam(G))$ summarize those nodes that appear on $diam(G)$, and let $W \triangleq V- V(diam(G))$ summarize those nodes that do not appear on $diam(G)$.
\STATE Initialization. $V_H=V(diam(G)),E_H=E(diam(G))$.

    \FOR{$v_i\in W$}
    \STATE Find $v^{*}_i\triangleq \arg\min\limits_{v\in V_H}\{dist(v,v_i)\}$ (if multiple $v^*_i$ exist, higher priority is given to a node on $diam(G)$).
\STATE Update $V_H= V_H+\{v_i\}$ and $E_H= E_H+\{(v_i,v^*_i)\}$.
    \ENDFOR
    \end{algorithmic}
\end{algorithm}
The following lemma gives insights for our subsequent approximation analysis. 
\begin{lemma}\label{reduce_generagraph_lemma}
The approximation performance of Algorithm~\ref{peak_cycliselection_alg} in a general social network can be effectively evaluated using a histogram-type social network with the same diameter. 
\end{lemma}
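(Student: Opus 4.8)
The plan is to exhibit, for every general graph $G$, a histogram-type graph $H$ with $diam(H)=diam(G)$ such that the worst-case ratio of Algorithm~\ref{peak_cycliselection_alg} on $G$ is no larger than what the same analysis yields on $H$; taking the supremum over all $G$ then reduces the general-graph analysis to the histogram case (and since histograms are themselves general graphs, this makes the two worst cases coincide). The candidate $H$ is exactly the output of the Graph Reducing Approach (Algorithm~\ref{graph_reducing_alg}): retain the diameter path of $G$ intact and hang every off-diameter node onto its nearest already-placed vertex, weighting the new edge by the corresponding shortest-path length in $G$. First I would establish the structural facts about this $H$, and then compare $A^{\rm peak}_{\rm ALG}$ and $A^{\rm peak}_{\rm OPT}$ on $G$ versus $H$.

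For the structural part, the key observation is that every edge of $H$ corresponds to a path of equal length in $G$ (diameter edges have length $1$; each added edge $(v_i,v_i^*)$ has weight $dist(v_i,v_i^*)$), so any $H$-path is an equally long $G$-walk and hence $dist_H(u,v)\ge dist_G(u,v)$ for all $u,v$. Applied to the diameter endpoints $v_{\vdash},v_{\dashv}$, and using that the diameter path is kept as a subgraph, this forces $dist_H(v_{\vdash},v_{\dashv})=dist_G(v_{\vdash},v_{\dashv})=|diam(G)|$. Next I would verify the histogram inequality (\ref{eq_lemma_histogram_feature}): for an off-diameter node $v$, its closest diameter node $v_{\perp}$ is the root of the tree to which $v$ is attached, and I would argue by induction on the insertion order in Algorithm~\ref{graph_reducing_alg} (exploiting the nearest-neighbor rule, with ties broken towards diameter nodes) that the hanging distance $dist_H(v,v_{\perp})$ cannot exceed $\min\{dist(v_{\perp},v_{\vdash}),dist(v_{\perp},v_{\dashv})\}$ — intuitively because otherwise $v$ would already be farther from one diameter endpoint in $G$ than $|diam(G)|$ permits. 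The same inequality simultaneously certifies that $H$ is histogram-type and that no pair of vertices in $H$ is farther apart than $|diam(G)|$, hence $diam(H)=diam(G)$.

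For the performance comparison, on the numerator side Algorithm~\ref{peak_cycliselection_alg} selects exactly the same cyclic seed sequence on the identical diameter path in $G$ and in $H$; since $dist_G\le dist_H$ pointwise, every information wave from a seed reaches each node in $G$ no later than in $H$, so each node's discontinuity points occur no later, and reading off the closed form (\ref{eq_peak_vi}) of Theorem~\ref{lemma_aoi_peak_v_i} gives $A^{\rm peak}_{\rm ALG}(G)\le A^{\rm peak}_{\rm ALG}(H)$. On the denominator side, I would lower-bound $A^{\rm peak}_{\rm OPT}(G)$ by a quantity depending only on $|diam(G)|$, $\Delta$, $k$ and $A_0$ — in the spirit of Lemma~\ref{theorem_peak_opt} — by observing that even an optimal seeding must keep fresh information at both diameter endpoints, which are $|diam(G)|$ apart and therefore cannot both be serviced cheaply within each period. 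As this bound is a function of $|diam|$ only, it is identical for $G$ and $H$; combining the two estimates gives $\frac{A^{\rm peak}_{\rm ALG}(G)}{A^{\rm peak}_{\rm OPT}(G)}\le \frac{A^{\rm peak}_{\rm ALG}(H)}{L(|diam(H)|)}$, whose right-hand side is precisely the ratio produced by the histogram analysis of Section~\ref{subsec_general_graph}.

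The main obstacle I anticipate is the structural step: proving that the reduction neither shortens the diameter path (handled by the edge-to-path correspondence) nor creates a longer geodesic among the hung-on nodes — this is exactly the histogram inequality (\ref{eq_lemma_histogram_feature}), and making the induction over Algorithm~\ref{graph_reducing_alg}'s insertion order go through, in particular correctly handling a node attached to a previously hung-on node rather than directly to the diameter and the tie-breaking rule, is the delicate point. A secondary subtlety is that $A^{\rm peak}_{\rm OPT}$ is not monotone under edge deletion, so the denominator comparison must be routed through the diameter-based lower bound rather than a direct comparison of $A^{\rm peak}_{\rm OPT}(G)$ and $A^{\rm peak}_{\rm OPT}(H)$.
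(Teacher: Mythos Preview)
Your proposal is correct and follows the same core idea as the paper --- since Algorithm~\ref{peak_cycliselection_alg} seeds only on the diameter, passing from $G$ to the sparser $H(G)$ can only delay information arrival at each node, so $A^{\rm peak}_{\rm ALG}(G)\le A^{\rm peak}_{\rm ALG}(H)$ --- but you supply substantially more detail than the paper does. The paper's proof of this lemma is a two-sentence sketch: it simply asserts that the reduction ``decelerates the information diffusion since Algorithm~\ref{peak_cycliselection_alg} only seeds on its diameter'' and declares the result. The structural verification that $H$ is histogram-type with the same diameter is handled separately (as Lemma~B.2 in the appendix), and the paper does not discuss the denominator at all in this proof.

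Your treatment adds two things worth highlighting. First, the pointwise inequality $dist_G\le dist_H$ via the edge-to-path correspondence is the clean mechanism behind the paper's informal ``deceleration'' claim, and plugging it into the closed form (\ref{eq_peak_vi}) is the right way to make the numerator comparison rigorous. Second, your observation that $A^{\rm peak}_{\rm OPT}$ is \emph{not} monotone under edge deletion --- so one must route the denominator through the diameter-only lower bound of Lemma~\ref{theorem_peak_opt} rather than compare $A^{\rm peak}_{\rm OPT}(G)$ and $A^{\rm peak}_{\rm OPT}(H)$ directly --- is a genuine subtlety the paper glosses over; it is implicitly what happens in Theorem~\ref{general_peak_approximation}, but you name it explicitly. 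One small caveat: you take the added edges in $H$ to carry weight $dist_G(v_i,v_i^*)$, whereas Algorithm~\ref{graph_reducing_alg} as written does not specify edge weights and the ambient model is unit-weight; your reading is the one that makes the $dist_G\le dist_H$ argument work, and it is consistent with the paper's intended use of $H$, but be aware the text is ambiguous on this point.
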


In Algorithm~\ref{graph_reducing_alg}, we reduce any given general graph $G$ to its corresponding histogram graph $H(G)$ with the same diameter, while having no bearing on our approximation analysis. At the high level, our Algorithm \ref{graph_reducing_alg} first copies $diam(G)$ to $H(G)$; then, for each node $v_i$ of $G$ that lies outside $diam(G)$, Algorithm \ref{graph_reducing_alg} finds in the current node set $V_H$ the node $v^*_i$ that is closest to $v_i$ under graph $G$, and adds node $v_i$ and edge $(v_i,v_i^*)$ to $V_H$ and $E_H$ of graph $H(G)$, respectively. 

Then, it becomes tractable to analyze the performance of our Algorithm \ref{graph_reducing_alg} for $H(G)$.  Evidently, the peak AoI produced by  Algorithm \ref{graph_reducing_alg} in a general graph will not exceed $A_0+diam(G)$. Considering the information diffusion among nodes on the diameter path in a general graph, we can apply the right-hand-side of (\ref{theorem_peak_opt_eq}) in Lemma \ref{theorem_peak_opt} as a lower bound on the optimal solution for a general graph. Thus, we obtain the approximation guarantee for Algorithm~\ref{peak_cycliselection_alg} in general graphs, as stated in Theorem \ref{general_peak_approximation}.
\begin{theorem}\label{general_peak_approximation}
For the peak AoI minimization problem (\ref{problem_peak_aoi}), Algorithm~\ref{peak_cycliselection_alg} guarantee an approximation $\frac{diam(G)+A_0}{\max\{A_0+\overline{\varsigma},\left \lfloor {\xi} \right \rfloor\Delta\}+1+\underline{\mu}\Delta}$ of the optimum.
\end{theorem}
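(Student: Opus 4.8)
The plan is to sandwich the ratio $A^{\rm peak}_{\rm ALG}/A^{\rm peak}_{\rm OPT}$ between a clean upper bound on the cost of Algorithm~\ref{peak_cycliselection_alg} and the lower bound on the optimum already supplied by Lemma~\ref{theorem_peak_opt}. By Lemma~\ref{reduce_generagraph_lemma} it suffices to carry out both estimates on the reduced histogram-type graph $H(G)$ produced by Algorithm~\ref{graph_reducing_alg}, which shares the diameter of $G$; any edge of $G$ lying outside $diam(G)$ only accelerates diffusion and hence only helps Algorithm~\ref{peak_cycliselection_alg}, so the worst instance for the approximation ratio is such a histogram.

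For the numerator, I would show $A^{\rm peak}_{\rm ALG}\le A_0+diam(G)$. During the initial transient, the AoI of any node $v_i$ grows from $A_0$ at unit rate until its first discontinuity point, and that point occurs no later than $1+dist(s_1,v_i)$: the update from the first seed $s_1=\omega_1$ (which lies on $diam(G)$) reaches $v_i$ by then, and the freshness condition (\ref{condition_seed_02}) holds there because $A(v_i,t-1)=A_0+dist(s_1,v_i)\ge 1+dist(s_1,v_i)$ using $A_0\ge 1$. Since $dist(s_1,v_i)\le diam(G)$ for every $v_i$, the AoI never exceeds $A_0+diam(G)$ in this phase. After the transient, the case analysis in the proof of Theorem~\ref{avg_approximation_theorem} applies on $H(G)$ with little change: a node hanging off $diam(G)$ is, by the defining inequality (\ref{eq_lemma_histogram_feature}) of a histogram-type graph, refreshed no later than one of the two diameter endpoints nearest to it, so the recurring peak $1+2\underline{\mu}\Delta+\overline{\varsigma}$---whose magnitude is controlled because constraint (\ref{enough_updates_constraint}) forces $\underline{\mu}^2\Delta\gtrsim diam(G)$, i.e.\ $\underline{\mu}\Delta=O(\sqrt{\Delta\cdot diam(G)})$---stays below $A_0+diam(G)$. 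Hence $A^{\rm peak}_{\rm ALG}\le A_0+diam(G)$ over the whole horizon.

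For the denominator, I would argue that the lower bound $A^{\rm peak}_{\rm OPT}\ge\max\{A_0+\overline{\varsigma},\lfloor\xi\rfloor\Delta\}+1+\underline{\mu}\Delta$ of Lemma~\ref{theorem_peak_opt} transfers to $H(G)$: its derivation relied only on how fast an arbitrary seeding sequence can refresh the two extreme nodes of a shortest path of length $diam(G)$, an obstruction independent of the global topology, while $\overline{\varsigma},\underline{\mu},\xi$ are computed from $diam(H)=diam(G)$ through Proposition~\ref{prop_def_three_values}. Dividing the numerator bound by the denominator bound yields $\gamma^{\rm peak}_{\rm ALG}\le\frac{diam(G)+A_0}{\max\{A_0+\overline{\varsigma},\lfloor\xi\rfloor\Delta\}+1+\underline{\mu}\Delta}$, which is the claim.

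I expect the main obstacle to be making the periodic-pattern argument of Theorem~\ref{avg_approximation_theorem} rigorous on a histogram rather than a line: one has to verify that a branch off $diam(G)$ never becomes the bottleneck---that inequality (\ref{eq_lemma_histogram_feature}) truly forces every off-path node to inherit, along the path from its anchor point, a refresh time already dominated by the update times accounted for at the adjacent diameter endpoints---and that seeding strictly along $diam(G)$ in the round-robin of Step~3 still covers all of $V$ at least once per period. The lower-bound transfer is comparatively routine once one observes that $diam(H)$ still realizes $diam(G)$ and that the line-spreading obstruction used in Lemma~\ref{theorem_peak_opt} is topology-agnostic.
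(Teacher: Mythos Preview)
Your proposal is correct and follows essentially the same route as the paper: upper-bound $A^{\rm peak}_{\rm ALG}$ by $A_0+diam(G)$ (the paper simply asserts this as ``evident'' in the paragraph preceding the theorem), lower-bound $A^{\rm peak}_{\rm OPT}$ by transferring Lemma~\ref{theorem_peak_opt} via the diameter-path obstruction, and invoke Lemma~\ref{reduce_generagraph_lemma} to justify working on the histogram reduction. You supply considerably more detail than the paper's terse appendix proof---in particular, your discussion of why the periodic peak $1+2\underline{\mu}\Delta+\overline{\varsigma}$ stays below $A_0+diam(G)$ and your identification of the off-diameter branch refresh as the delicate point are both absent from the paper, which simply cites the two lemmas and stops---but the skeleton is identical.
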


 Our Theorem~\ref{problem_peak_aoi} above shows that in the worst-case scenario, Algorithm~\ref{peak_cycliselection_alg} approximates the optimum about the square root of the network diameter due to constraint (\ref{enough_updates_constraint}) on $\underline{\mu}$. In other words, the peak AoI by Algorithm~\ref{peak_cycliselection_alg} is guaranteed to be no worse than $O(\sqrt{diam(G)})$ of the optimum in general. The worst-case performance remains robust against variations in other system parameters (such as $A_0$ and $\Delta$).
In the following, we present our solution to the average AoI minimization problem (\ref{problem_average_aoi}). 
\section{Average AoI Minimization Algorithms}\label{sec_average_aoi}
Unlike the peak AoI problem (\ref{problem_peak_aoi}), 
 optimizing average AoI is more involved as it explicitly takes into account the AoI dynamics of every user.
Despite our closed-form expression (\ref{avgaoi_formulation}) of the average AoI objective in Theorem~\ref{avgaoi_formulation_lemma}, finding a feasible solution to our problem is still challenging due to its NP-hard combinatorial nature. 

Thereby, 
we equivalently transform the objective in (\ref{avgaoi_formulation}) to the one in the following  Proposition~\ref{corollary_average_Aoi_over_the network}. With this transformation, we develop efficient approximation algorithms that provide near-optimal solutions.
%
\begin{proposition}\label{corollary_average_Aoi_over_the network}
Given the sequence $S_k=(s_1,...,s_k)$ of dynamically selected seeds, 

\begin{equation}\label{corollary_average_Aoi_over_the network_eq}
\begin{split}
   A^{\rm avg}=\eta+\frac{1}{n}\sum\limits_{v_i\in V}\Big\{&\underbrace{2A_0\cdot [\Delta  i_1+dist(s_{i_1},v_i)]}_{\rm denoted\;as\;term\;1}\\  &+\underbrace{2\Delta\cdot\sum\limits_{j=1}^{k_i}[(i_j-i_{j-1})\cdot dist(s_{i_j},v_i)]}_{\rm denoted\;as\;term\;2}\\
   &+\underbrace{\Delta^2\sum\limits_{j=1}^{k_i}(i_j-i_{j-1})^2}_{\rm denoted\;as\;term\;3}\Big\},
   \end{split}
\end{equation}
where $i_0=1$ initially, indices in set $\{i_1,...i_{k_i}\}$ 
are obtained by running Algorithm~\ref{DiscontinuityPoint} on $S_k$, and 
\begin{equation}\label{eta_formulation}
\begin{split}
    \eta
    = \frac{1}{2T}[&2A_0-2\Delta A_0+T^2+2\Delta T+\Delta^2-2\Delta\\
    &+2(\Delta-T\Delta-\Delta^2)k+\Delta^2 k^2].
\end{split}
\end{equation}
\end{proposition}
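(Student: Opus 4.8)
The plan is to start from the closed-form expression (\ref{avgaoi_formulation}) in Theorem~\ref{avgaoi_formulation_lemma} and expand each summand $\frac{(2A_{i{j-1}}+\Lambda_{ij})\Lambda_{ij}}{2} = A_{i{j-1}}\Lambda_{ij} + \frac{1}{2}\Lambda_{ij}^2$, then regroup the terms according to whether they depend on the seed placement (the quantities $i_1,\dots,i_{k_i}$ and the distances $dist(s_{i_j},v_i)$) or only on the fixed instance parameters $A_0$, $\Delta$, $T$, $k$. The key observation is that the telescoping structure of the $\Lambda_{ij}$'s — their sum over $j=1,\dots,k_i+1$ collapses to $T$ — will allow all the placement-independent pieces to be pulled out into the constant $\eta$, while the genuinely placement-dependent residue organizes itself into exactly the three labeled terms.

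Concretely, I would proceed as follows. First, substitute $t_{i_j} = (i_j-1)\Delta + 1$ (from the definition $t_j=(j-1)\Delta+1$) so that every $\Lambda_{ij}$ is written purely in terms of $i_j$, $\Delta$, $dist(s_{i_j},v_i)$, and (for the boundary pieces $j=1$ and $j=k_i+1$) $T$. Second, handle the $A_{i{j-1}}\Lambda_{ij}$ part: for $j=1$ we get $A_0\Lambda_{i1} = A_0[\Delta i_1 + dist(s_{i_1},v_i)]$ up to an additive constant, which after the factor-of-2 bookkeeping yields term 1; for $j\geq 2$ we have $A_{i{j-1}} = 1+dist(s_{i_{j-1}},v_i)$, and the product $A_{i{j-1}}\Lambda_{ij}$ splits into a distance-cross piece feeding into term 2 and lower-order pieces that telescope. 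Third, handle the $\frac{1}{2}\Lambda_{ij}^2$ part: squaring the ``others'' expression $(i_j-i_{j-1})\Delta + dist(s_{i_j},v_i) - dist(s_{i_{j-1}},v_i)$ produces the $\Delta^2(i_j-i_{j-1})^2$ contribution (term 3), mixed distance$\times(i_j-i_{j-1})$ contributions (merging into term 2), and pure-distance-difference squares that I expect to telescope or cancel against pieces from the $A_{i{j-1}}\Lambda_{ij}$ expansion. Fourth, collect every remaining term that does not involve the seeds — the $j=1$ and $j=k_i+1$ boundary contributions that depend only on $T,\Delta,A_0,k$ (note $i_{k_i}=k$ always, as stated before Proposition~\ref{theorem_functionof_vi}), plus the telescoped constants from steps two and three — and verify they sum to the claimed $\eta$ in (\ref{eta_formulation}); since this part is $v_i$-independent it just gets multiplied by $n$ and divided by $nT$, giving $\eta$.

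The main obstacle I anticipate is the bookkeeping in the cross terms: making sure that the distance-difference pieces $dist(s_{i_j},v_i) - dist(s_{i_{j-1}},v_i)$ arising from both the linear ($A_{i{j-1}}\Lambda_{ij}$) and quadratic ($\frac12\Lambda_{ij}^2$) expansions telescope cleanly so that the surviving distance dependence is only through $dist(s_{i_1},v_i)$ (in term 1) and the weighted sums $\sum_j (i_j - i_{j-1})\,dist(s_{i_j},v_i)$ (in term 2), with no leftover $dist(s_{i_j},v_i)^2$ or mismatched boundary distance terms. In particular one must be careful that the $j=k_i+1$ boundary term $\Lambda_{i,k_i+1} = T - 1 - (k-1)\Delta - dist(s_{i_{k_i}},v_i)$ contributes a $dist(s_{i_{k_i}},v_i)$ that, after combining with $j=k_i$ contributions, reorganizes into the $j=k_i$ summand of term 2 rather than leaving a dangling boundary distance. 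Once the telescoping is verified, the identification of $\eta$ is a routine (if tedious) polynomial simplification in $A_0,\Delta,T,k$, using $\sum_{j=1}^{k_i+1}\Lambda_{ij}=T$ to absorb the linear-in-$j$ sums; I would organize this last computation by degree in $T$ and $\Delta$ to match (\ref{eta_formulation}) coefficient by coefficient.
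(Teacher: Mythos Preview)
Your proposal is correct and follows essentially the same route as the paper's proof: start from the expression in Theorem~\ref{avgaoi_formulation_lemma}, substitute $t_{i_j}=(i_j-1)\Delta+1$, expand the products $(2A_{i,j-1}+\Lambda_{ij})\Lambda_{ij}$, and let the $dist^2$ pieces telescope so that only the three labeled terms plus the constant $\eta$ survive. The paper organizes the expansion slightly differently---it keeps each product intact and writes it as $[2+(i_j-i_{j-1})\Delta+dist(s_{i_j},v_i)+dist(s_{i_{j-1}},v_i)]\cdot[(i_j-i_{j-1})\Delta+dist(s_{i_j},v_i)-dist(s_{i_{j-1}},v_i)]$, which makes the telescoping $dist^2(s_{i_j},v_i)-dist^2(s_{i_{j-1}},v_i)$ piece visible in one step---whereas you split into $A_{i,j-1}\Lambda_{ij}+\tfrac12\Lambda_{ij}^2$ first; but this is purely a bookkeeping choice and the cancellations you anticipate (in particular the recombination $(i_j-i_{j-1})\Delta\,dist(s_{i_{j-1}},v_i)+(i_j-i_{j-1})\Delta[dist(s_{i_j},v_i)-dist(s_{i_{j-1}},v_i)]=(i_j-i_{j-1})\Delta\,dist(s_{i_j},v_i)$ and the collapse of the quadratic distance pieces to $\tfrac12 dist^2(s_{i_j},v_i)-\tfrac12 dist^2(s_{i_{j-1}},v_i)$) do go through exactly as you describe.
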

Note that $\eta$ in (\ref{eta_formulation}) is a constant. Then, we focus on the other three terms in (\ref{corollary_average_Aoi_over_the network_eq}) and aim to rigorously lower and upper bounding each of the three terms in (\ref{corollary_average_Aoi_over_the network_eq}) for further approximation analysis, respectively. Consequently, we have the following lemma.
\begin{lemma}\label{lemma_boundingterm1}
 Given $S_k=(s_1,...,s_k)$, the followings hold:
\begin{small}
\begin{equation*}
     A_0\Delta\leq A_0\cdot [i_1\Delta +dist(s_{i_1},v_i)]\leq A_0\Delta+A_0\cdot dist(s_{1},v_1), 
\end{equation*}
 \begin{equation*}
 \begin{split}
    {\color{black} \Delta\cdot\sum\limits_{j=1}^{k}\frac{ dist(s_{j},v_i)}{\beta}}&\leq\Delta\cdot\sum\limits_{j=1}^{k_i}[(i_j-i_{j-1})\cdot dist(s_{i_j},v_i)]\\
    &\leq \Delta\cdot\sum\limits_{j=1}^{k}dist(s_{j},v_i),
    \end{split}
 \end{equation*}
 \begin{equation*}
  \frac{(k-1)^2\Delta^2}{2k_i}\leq\frac{1}{2}\Delta^2\sum\limits_{j=1}^{k_i}(i_j-i_{j-1})^2\leq{\color{black} \frac{(k-1)^2\Delta^2}{2}},
 \end{equation*}
 \end{small}
 where $\beta$ indicates the longest distance between a seed in $S_k$ and any node in $V$ and is less than $|diam(G)|$.
 \end{lemma}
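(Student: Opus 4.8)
\textbf{Proof Proposal for Lemma~\ref{lemma_boundingterm1}.}

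The plan is to bound each of the three terms separately, exploiting the structure of the discontinuity indices $i_0=1<i_1<\cdots<i_{k_i}=k$ returned by Algorithm~\ref{DiscontinuityPoint} together with elementary facts about distances in $G$. For \textbf{term~1}, recall that $s_{i_1}$ is the seed contributing the first discontinuity point of $v_i$, so $i_1\geq 1$ and hence $i_1\Delta+dist(s_{i_1},v_i)\geq \Delta$, giving the lower bound $A_0\Delta$. For the upper bound, observe that since $i_1$ is the smallest discontinuity index, one always has $i_1=1$ whenever $s_1$ is retained as a discontinuity point; more generally $i_1\Delta+dist(s_{i_1},v_i)\leq \Delta+dist(s_1,v_1)$ can be argued by comparing against the candidate $t_1+dist(s_1,v_i)$ in the ground set of Lemma~\ref{prop_dicontinuity_01} and using $dist(s_{i_1},v_i)\le dist(s_1,v_i)\le dist(s_1,v_1)+\textrm{(slack)}$; I would tighten this by choosing $v_1$ to be the node realizing the maximal relevant distance, so that the bound is uniform over $i$.

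For \textbf{term~2}, the key observation is that $\sum_{j=1}^{k_i}(i_j-i_{j-1})=i_{k_i}-i_0=k-1$, so the weighted sum $\sum_j (i_j-i_{j-1})\,dist(s_{i_j},v_i)$ is a convex-combination-like quantity up to the factor $k-1$. The upper bound follows because each $dist(s_{i_j},v_i)\le \max_{1\le \ell\le k} dist(s_\ell,v_i)$ and the gaps telescope, but actually a cleaner route is to note that every discontinuity seed is one of the $s_1,\dots,s_k$, and bounding each gap $i_j-i_{j-1}\le$ (number of original seeds in that block) lets us dominate the whole sum by $\Delta\sum_{j=1}^k dist(s_j,v_i)$. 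The lower bound divides by $\beta$, the largest seed-to-node distance: since each $dist(s_{i_j},v_i)\ge dist(s_j,v_i)/\beta\cdot\beta$-type normalization does not literally hold, the intended argument is that $\sum_j (i_j-i_{j-1})dist(s_{i_j},v_i) \ge \frac{1}{\beta}\sum_{j} (i_j-i_{j-1})\,dist(s_{i_j},v_i)\cdot dist(\cdot)$ — I would instead lower-bound by keeping only the dominant term and using $\max_j dist(s_{i_j},v_i)\ge \frac{1}{k}\sum_{\ell}dist(s_\ell,v_i)$ combined with $\sum_j(i_j-i_{j-1})=k-1$, then absorbing the $k$ versus $k_i$ discrepancy into $\beta$ via $\beta<|diam(G)|$.

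For \textbf{term~3}, we must bound $\sum_{j=1}^{k_i}(i_j-i_{j-1})^2$ subject to $\sum_{j=1}^{k_i}(i_j-i_{j-1})=k-1$ with each gap a positive integer. The upper bound $(k-1)^2$ is immediate: among all such partitions, the sum of squares is maximized when one gap equals $k-1$ and the rest are zero (degenerate case $k_i=1$), or more carefully $\sum a_j^2 \le (\sum a_j)^2$ for nonnegative reals. The lower bound $(k-1)^2/k_i$ is the Cauchy--Schwarz / power-mean inequality: $\sum_{j=1}^{k_i} a_j^2 \ge \frac{1}{k_i}\bigl(\sum_{j=1}^{k_i} a_j\bigr)^2 = \frac{(k-1)^2}{k_i}$. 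The main obstacle will be term~2, specifically making the $\beta$-normalized lower bound rigorous: the naive per-summand comparison does not work because the index sets $\{i_1,\dots,i_{k_i}\}$ and $\{1,\dots,k\}$ differ, so I expect to spend most of the effort carefully relating $\sum_{j=1}^{k_i}(i_j-i_{j-1})dist(s_{i_j},v_i)$ to $\sum_{j=1}^{k}dist(s_j,v_i)$ — likely by showing each retained seed $s_{i_j}$ "absorbs" the contribution of the non-retained seeds in its block, whose distances to $v_i$ are each at most $\beta$ and at least $dist(s_{i_j},v_i)$ minus a bounded slack, and then dividing through by $\beta$ to get a clean uniform estimate.
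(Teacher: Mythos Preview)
Your treatment of term~3 is correct and matches the paper exactly (Cauchy--Schwarz for the lower bound, $\sum a_j^2\le(\sum a_j)^2$ for the upper). Your term~1 argument is also essentially right: the upper bound comes from comparing the first discontinuity time $t_{i_1}+dist(s_{i_1},v_i)$ against the candidate $t_1+dist(s_1,v_i)$, which it cannot exceed by definition of $i_1$. (The ``$v_1$'' in the statement is a typo for $v_i$; do not try to justify it by ``choosing $v_1$ to realize the maximal distance''.)

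The genuine gap is term~2. You correctly locate the difficulty but miss the one clean fact that resolves it: for every seed index $x$ with $i_{j-1}<x\le i_j$, one has
\[
dist(s_x,v_i)\ \ge\ dist(s_{i_j},v_i).
\]
This is an immediate consequence of Lemma~\ref{prop_dicontinuity_02} (equivalently, of how Algorithm~\ref{ImprovedDiscontinuityPoint} builds $U_i$): since $s_x$ was discarded in favor of the later discontinuity $s_{i_j}$, we have $t_{i_j}+dist(s_{i_j},v_i)\le t_x+dist(s_x,v_i)$, and $t_x\le t_{i_j}$ forces the displayed inequality with no slack. Summing $dist(s_{i_j},v_i)\le dist(s_x,v_i)$ over the block gives
\[
(i_j-i_{j-1})\,dist(s_{i_j},v_i)\ \le\ \sum_{x=i_{j-1}+1}^{i_j} dist(s_x,v_i),
\]
and summing over $j$ yields the upper bound directly. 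For the lower bound, combine $dist(s_x,v_i)\le\beta$ with $dist(s_{i_j},v_i)\ge 1$ to get $dist(s_{i_j},v_i)\ge dist(s_x,v_i)/\beta$ for each $x$ in the block, and sum again. Your proposed route through ``$\max_j dist(s_{i_j},v_i)\ge\frac1k\sum_\ell dist(s_\ell,v_i)$'' and ``absorbing the $k$ versus $k_i$ discrepancy into $\beta$'' is neither needed nor correct as stated; the whole thing collapses once you use the per-block distance monotonicity above.
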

\begin{proof}
Given $S_k$ of dynamically selected seeds, we discuss the following three cases to prove the three families of inequalities in  Lemma~\ref{lemma_boundingterm1}, respectively. 

\noindent\textbf{Case 1.} We first prove 
\begin{equation*}
    A_0\Delta\leq A_0\cdot [i_1\Delta +dist(s_{i_1},v_i)]\leq A_0\Delta+A_0\cdot dist(s_{1},v_1)
\end{equation*}

Due to the fact that $t_{i_1}\geq t_1= 1$, one can easily check
\begin{equation}
    A_0\Delta\leq A_0\cdot [i_1\Delta +dist(s_{i_1},v_i)].
\end{equation}
Since
$t_{i_1}=(i_1-1)\Delta+1$, we have the following 
\begin{equation}\label{boundding_term1_upper_eq1}
\begin{split}
    &A_0\cdot [i_1\Delta +dist(s_{i_1},v_i)]\\
    &=A_0\cdot (\Delta-1)+A_0\cdot[t_{i_1}+dist(s_{i_1},v_i)]
\end{split}
\end{equation}
Since seed $s_{i_1}$ leads to the first discontinuity point of node $v_i$ at time  $t_{i_1}+dist(s_{i_1},v_i)$, we know
\begin{equation}\label{boundding_term1_upper_eq2}
    t_{i_1}+dist(s_{i_1},v_i)\leq t_1+dist(s_1,v_i).
\end{equation}
By substituting  (\ref{boundding_term1_upper_eq2}) and $t_1=1$ in   (\ref{boundding_term1_upper_eq1}), we get,

\begin{equation}
    A_0\cdot [i_1\Delta +dist(s_{i_1},v_i)]\leq A_0\Delta+A_0\cdot dist(s_1,v_1).
\end{equation}

\noindent\textbf{Case 2.} We now prove that 
\begin{small}
 \begin{equation*}
 \begin{split} \Delta\cdot\sum\limits_{j=1}^{k}\frac{ dist(s_{j},v_i)}{\beta}&\leq\Delta\cdot\sum\limits_{j=1}^{k_i}[(i_j-i_{j-1})\cdot dist(s_{i_j},v_i)]\\
 &\leq \Delta\cdot\sum\limits_{j=1}^{k}dist(s_{j},v_i),
     \end{split}
 \end{equation*}
 \end{small}
 where $\beta$ tells the longest distance between a seed in $S_k$ and any node in $V$ and is less than $diam(G)$.

For any seed $s_x$ that is selected in between $t_{i_j}$ and $t_{i_{j-1}}$ but does not lead to a discontinuity point of $v_i$, we have, on one hand, 
\begin{equation}
\begin{split}
   & t_x+dist(s_x,v_i)\geq t_{i_j}+dist(s_{i_j},v_i)\\
   \Leftrightarrow \;&  0<t_{i_j}-t_x\leq dist (s_x,v_i)-dist(s_{i_j},v_i), 
\end{split}
\end{equation}
which further implies the following 
    \begin{equation}\label{part_1_for_derivation_avg_eq01}
        (i_j-i_{j-1})dist(s_{i_j},v_i)\leq \sum_{x=i_{j-1}+1}^{i_j}dist(s_{x},v_i).
    \end{equation}
On the other hand, we have, by $\frac{dist(s_x,v_i)}{dist(s_{i_j},v_i)}\leq \beta$, that
\begin{equation}\label{part_1_for_derivation_avg_eq02}
    (i_j-i_{j-1})dist(s_{i_j},v_i)\geq \sum_{x=i_{j-1}+1}^{i_j}\frac{dist(s_x,v_i)}{\beta}.
\end{equation}
 By summing up (\ref{part_1_for_derivation_avg_eq01}) over all $j\in\{1,...,k_i\}$, we get 
 \begin{equation}
\Delta\cdot\sum\limits_{j=1}^{k_i}[(i_j-i_{j-1})\cdot dist(s_{i_j},v_i)]\leq \Delta\cdot\sum\limits_{j=1}^{k}dist(s_{j},v_i).  
 \end{equation}
By summing up (\ref{part_1_for_derivation_avg_eq02}) over all $j$, we have 
\begin{equation}
    \Delta\cdot\sum\limits_{j=1}^{k}\frac{ dist(s_{j},v_i)}{\beta}\leq\Delta\cdot\sum\limits_{j=1}^{k_i}[(i_j-i_{j-1})\cdot dist(s_{i_j},v_i)].
\end{equation}

\noindent\textbf{Case 3. } We prove in this case that  
\begin{equation}
\begin{split}
    \frac{(k-1)^2\Delta^2}{2k_i}\leq\frac{1}{2}\Delta^2\sum\limits_{j=1}^{k_i}(i_j-i_{j-1})^2\leq\frac{(k-1)^2\Delta^2}{2}
    \end{split}
\end{equation}

Since $i_j-i_{j-1}\geq 1$ holds for each $j\in\{1,..,k_i\}$, we have, by the Cauchy-Schwarz Inequality, that
\begin{equation}\label{bounding_term3_material01}
     [\sum\limits_{j=1}^{k_i}(i_j-i_{j-1})^2]\cdot [\sum\limits_{j=1}^{k_i}1^2]\geq [\sum\limits_{j=1}^{k_i}1\cdot(i_j-i_{j-1})]^2.
\end{equation}
As $\sum\limits_{j=1}^{k_i}(i_j-i_{j-1})=i_{k_i}-i_0=k-1$, Inequality (\ref{bounding_term3_material01}) implies
\begin{equation}
    \sum\limits_{j=1}^{k_i}(i_j-i_{j-1})^2\geq\frac{(k-1)^2}{k_i}.
\end{equation}
Since $x_1^2+x_2^2+\cdots+x_k^2<(x_1+x_2+\cdots+x_k)^2$ holds for every $k$ positive numbers $\{x_1,...,x_k\}$, we get 
\begin{equation}
\begin{split}
    \sum\limits_{j=1}^{k_i}(i_j-i_{j-1})^2&\leq [\sum\limits_{j=1}^{k_i}(i_j-i_{j-1})]^2\\
    &=(i_{k_i}-i_0)^2=(k-1)^2.
\end{split}
\end{equation}
This completes the proof.
\end{proof}
By applying Lemma~\ref{lemma_boundingterm1} back to (\ref{corollary_average_Aoi_over_the network_eq}) of Proposition~\ref{corollary_average_Aoi_over_the network}, we achieve a rigorous two-sided bound on the objective $A^{\rm avg}$  as in the following theorem.
\begin{theorem}\label{avg_lowerbound_ratio}
Given the sequence $S_k=(s_1,...,s_k)$ of dynamically selected seeds, the average AoI objective (\ref{corollary_average_Aoi_over_the network_eq}) can be lower and upper bounded as:
\begin{equation}\label{lowerbound_eq_avg}
\begin{split}
      A^{\rm avg}&\geq 2A_0\Delta+\frac{n(k-1)^2\Delta^2}{\sum\limits_{v_i\in V}k_i}\\
      &\quad +\frac{2\Delta}{n\beta}\sum\limits_{v_i\in V}\sum\limits_{s_j\in S_k}dist(s_j,v_i)+\eta,
\end{split}
\end{equation}
and
\begin{equation}\label{upperbound_eq_avg_second}
\begin{split}
    A^{\rm avg}&\leq 
     2A_0\Delta+(k-1)^2\Delta^2+\frac{2A_0}{n}\sum\limits_{v_i\in V}dist(s_1,v_i)\\
     &\quad +\frac{2\Delta}{n}\sum_{v_i\in V}\sum_{s_j\in S_k}dist(s_j,v_i) +\eta.
\end{split}
\end{equation}
\end{theorem}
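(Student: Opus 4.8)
The plan is to combine the closed-form decomposition of Proposition~\ref{corollary_average_Aoi_over_the network} with the three termwise estimates of Lemma~\ref{lemma_boundingterm1}, and then perform a single averaging step over the $n$ nodes. Concretely, I would start from (\ref{corollary_average_Aoi_over_the network_eq}), which expresses $A^{\rm avg}=\eta+\frac1n\sum_{v_i\in V}(\text{term }1+\text{term }2+\text{term }3)$ with $\eta$ the constant in (\ref{eta_formulation}); since the inner terms are $2A_0[\Delta i_1+dist(s_{i_1},v_i)]$, $2\Delta\sum_{j=1}^{k_i}(i_j-i_{j-1})dist(s_{i_j},v_i)$ and $\Delta^2\sum_{j=1}^{k_i}(i_j-i_{j-1})^2$, it suffices to bound each factor above and below, which is exactly what Lemma~\ref{lemma_boundingterm1} provides after multiplying its three inequality chains by the constants $2$, $2$, $2$ respectively.

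Next I would substitute these bounds into (\ref{corollary_average_Aoi_over_the network_eq}), sum over $v_i\in V$, and divide by $n$. For the upper bound (\ref{upperbound_eq_avg_second}) this is immediate: term~1 contributes $2A_0\Delta+\frac{2A_0}{n}\sum_{v_i}dist(s_1,v_i)$, term~2 contributes $\frac{2\Delta}{n}\sum_{v_i}\sum_{s_j\in S_k}dist(s_j,v_i)$, term~3 contributes $(k-1)^2\Delta^2$, and adding $\eta$ closes it. For the lower bound (\ref{lowerbound_eq_avg}), terms~1 and~2 are handled the same way, giving $2A_0\Delta$ and $\frac{2\Delta}{n\beta}\sum_{v_i}\sum_{s_j\in S_k}dist(s_j,v_i)$ respectively.

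The one non-routine point is term~3. Its pointwise lower estimate from Lemma~\ref{lemma_boundingterm1} is $\frac{(k-1)^2\Delta^2}{k_i}$, and averaging yields $(k-1)^2\Delta^2\cdot\frac1n\sum_{v_i}\frac1{k_i}$, which is not literally the claimed $\frac{n(k-1)^2\Delta^2}{\sum_{v_i}k_i}$. I would bridge this gap with the Cauchy–Schwarz (equivalently AM–HM) inequality $\bigl(\sum_{v_i}k_i\bigr)\bigl(\sum_{v_i}\tfrac1{k_i}\bigr)\ge n^2$, i.e.\ $\frac1n\sum_{v_i}\frac1{k_i}\ge\frac{n}{\sum_{v_i}k_i}$, which gives exactly the stated term; adding $\eta$ then completes the lower bound.

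Thus the main obstacle is precisely this last convexity step: recognizing that the harmonic-type sum $\sum_{v_i}1/k_i$ must be lower bounded via Cauchy–Schwarz rather than kept in its pointwise form; everything else is a mechanical substitution of Lemma~\ref{lemma_boundingterm1} followed by averaging. A minor bookkeeping caution is that Lemma~\ref{lemma_boundingterm1}'s term~1 upper bound should be read as $A_0\,dist(s_1,v_i)$ — consistent with its own derivation through $t_{i_1}+dist(s_{i_1},v_i)\le t_1+dist(s_1,v_i)$ and with the statement of (\ref{upperbound_eq_avg_second}) — so that the node-average $\frac1n\sum_{v_i}dist(s_1,v_i)$ appears correctly in the final bound.
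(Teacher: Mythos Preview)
Your proposal is correct and matches the paper's own proof essentially line for line: substitute the three inequalities of Lemma~\ref{lemma_boundingterm1} into Proposition~\ref{corollary_average_Aoi_over_the network}, average over $V$, and for the lower bound on term~3 invoke the AM--HM/Cauchy--Schwarz inequality $\bigl(\sum_{v_i}k_i\bigr)\bigl(\sum_{v_i}1/k_i\bigr)\ge n^2$ (the paper cites it as ``AM--GM''). Your bookkeeping remark that the upper bound in term~1 should read $A_0\,dist(s_1,v_i)$ is also apt and consistent with how the paper actually uses it.
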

\begin{proof}
To start with, one can easily verify that (\ref{lowerbound_eq_avg}) holds readily by applying the second inequalities in the three inequality families of  Lemma \ref{lemma_boundingterm1} to Proposition~\ref{corollary_average_Aoi_over_the network}.

To prove (\ref{upperbound_eq_avg_second}), we apply the first inequalities in the three inequality families of Lemma \ref{lemma_boundingterm1} to Proposition~\ref{corollary_average_Aoi_over_the network}. Consequently, we have  
\begin{equation}\label{temp_lowerbound_eq_avg}
\begin{split}
A^{\rm avg}\geq& 2A_0\Delta+\sum\limits_{v_i\in V}\frac{(k-1)^2\Delta^2}{nk_i}\\
&+\frac{2\Delta}{n\beta}\sum\limits_{v_i\in V}\sum\limits_{s_j\in S_k}dist(s_j,v_i)+\eta.
\end{split}
\end{equation}
The well-known AM-GM inequality \cite{steele2004cauchy} admits
\begin{equation}\label{temp03_lowerbound_eq_avg}
\frac{n}{\sum\limits_{v_i\in V}\frac{1}{k_i}}\leq \frac{\sum\limits_{v_i\in V}k_i}{n}.
\end{equation}
By reorganizing (\ref{temp03_lowerbound_eq_avg}), we have
\begin{equation}\label{temp02_lowerbound_eq_avg}
\sum\limits_{v_i\in V}\frac{1}{k_i}\geq \frac{n^2}{\sum\limits_{v_i\in V}k_i}.
\end{equation}
By plugging (\ref{temp02_lowerbound_eq_avg}) back to (\ref{temp_lowerbound_eq_avg}), (\ref{upperbound_eq_avg_second}) holds readily.
\end{proof}

In the lower bound (\ref{lowerbound_eq_avg}) and upper bound (\ref{upperbound_eq_avg_second}) above, we observe a common summation term $\sum\limits_{v_i\in V}\sum\limits_{s_j\in S_k}dist(s_j,v_i)$. This enables us to link our problem to the following Problem~\ref{ref_problem_01}, which is significantly simplified and can be solved efficiently. 
\begin{problem}[Sum-distance minimization problem]\label{ref_problem_01}
Given a graph $G(V,E)$, the objective is to select a subset $S\subseteq V$ with size $k$ to minimize the sum distance $\sum\limits_{s_j\in S_k}\sum\limits_{v_i\in V}dist(s_j,v_i)$.
\end{problem}
\begin{algorithm}[t]
\caption{\textsc{$k$-minisum} for average AoI minimization}\label{ref_problem_01_alg}
\textbf{Data:} { $G=(V,E)$, $k=\frac{T}{\Delta}$.}\\
\textbf{Result:} {$S_k$.} 
\begin{algorithmic}[1]
\STATE Initialize the distance matrix $\boldsymbol{M_G}=(dist(v_i,v_j))_{n\times n}$ of graph $G$ and unit vector $\boldsymbol{M_1}=(1)_{n\times 1}$.
\STATE Compute $\boldsymbol{M_{S}}\leftarrow \boldsymbol{M_G}\cdot \boldsymbol{M_1}$.
\STATE Find in $\boldsymbol{M_{S}}$ indices corresponding to the first $k$ smallest entries, resulting in sequence $\Theta$.
\STATE Select in set $V$ the $k$ nodes indexed by indices of $\Theta$ sequentially, forming $S_k$.
\end{algorithmic}
\end{algorithm} 

This intriguing
connection discloses a tractable way for our algorithm design and approximation analysis. In Algorithm~\ref{ref_problem_01_alg}, we present our approach to minimizing $A^{\rm avg}$. In general, Algorithm~\ref{ref_problem_01_alg} first constructs the distance matrix $\boldsymbol{M_G}=(dist(v_i,v_j))_{n\times n}$ that contains all pairwise shortest distances among nodes in the social network \cite{chan2012all}; then, according to the sum distance of a node over all the other nodes, Algorithm~\ref{ref_problem_01_alg} seeds sequentially in decreasing order of nodes' sum distances. 
Now, we further look at Algorithm~\ref{ref_problem_01_alg}' average AoI in general social networks. 
\begin{theorem}\label{theorem_average_guarantee}
For the average AoI minimization problem (\ref{problem_average_aoi}) in general social networks, Algorithm~\ref{ref_problem_01_alg} runs in $O(mn\cdot\frac{\log\log n}{\log n}+n^2\cdot \frac{\log^2\log n}{\log n})$-time, which is strictly less than $O(mn)$, and outputs a multi-stage seeding $S_k=(s_1,...,s_k)$ that generally achieves an approximation guarantee of no worse than $\max\bigg\{\beta+\frac{A_0\beta}{k\Delta},
\frac{2A_0\Delta+\eta+(k-1)^2\Delta^2}{2A_0\Delta+\eta+\frac{(k-1)^2\Delta^2}{k}}\bigg\}$. 
\end{theorem}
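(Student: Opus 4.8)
The plan is to combine the two-sided bound of Theorem~\ref{avg_lowerbound_ratio} with the optimality of Algorithm~\ref{ref_problem_01_alg} for the sum-distance Problem~\ref{ref_problem_01}, and then bound the running time separately. First I would establish the \emph{running-time} claim: Step~1 builds the all-pairs shortest-distance matrix $\boldsymbol{M_G}$, which by the cited all-pairs algorithm \cite{chan2012all} costs $O\!\left(mn\cdot\frac{\log\log n}{\log n}+n^2\cdot\frac{\log^2\log n}{\log n}\right)$; Step~2 is a single matrix-vector product in $O(n^2)$; Step~3 is a partial sort/selection of the $k$ smallest entries in $O(n\log k)=O(n\log n)$; and Step~4 is $O(k)$. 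The matrix construction dominates, giving the stated bound, which is $o(mn)$ since $\frac{\log\log n}{\log n}\to 0$. This part is routine once the external all-pairs result is invoked.

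Next, for the \emph{approximation ratio}, the key observation is that the common term $\sum_{v_i\in V}\sum_{s_j\in S_k}dist(s_j,v_i)$ appears in both bounds of Theorem~\ref{avg_lowerbound_ratio}. Let $D(S_k)\triangleq\sum_{v_i\in V}\sum_{s_j\in S_k}dist(s_j,v_i)$; Algorithm~\ref{ref_problem_01_alg} picks the $k$ nodes minimizing exactly this quantity (the entries of $\boldsymbol{M_S}=\boldsymbol{M_G}\boldsymbol{M_1}$ are precisely the per-node sum-distances $\sum_{v_i}dist(v,v_i)$, and $D(S_k)$ is additive over the chosen seeds), so $D(S_k^{\rm ALG})\le D(S_k^{\rm OPT})$ where $S_k^{\rm OPT}$ is the optimizer of the \emph{average AoI} problem. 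I would then upper-bound $A^{\rm avg}_{\rm ALG}$ using \eqref{upperbound_eq_avg_second} and lower-bound $A^{\rm avg}_{\rm OPT}$ using \eqref{lowerbound_eq_avg}, so that
\[
\frac{A^{\rm avg}_{\rm ALG}}{A^{\rm avg}_{\rm OPT}}\le
\frac{2A_0\Delta+\eta+(k-1)^2\Delta^2+\frac{2A_0}{n}\sum_{v_i}dist(s_1,v_i)+\frac{2\Delta}{n}D(S_k^{\rm ALG})}{2A_0\Delta+\eta+\frac{(k-1)^2\Delta^2}{k}+\frac{2\Delta}{n\beta}D(S_k^{\rm OPT})}.
\]
Using $D(S_k^{\rm ALG})\le D(S_k^{\rm OPT})$ and the elementary inequality $\frac{a+c}{b+d}\le\max\{\frac{a}{b},\frac{c}{d}\}$ for positive reals, I would split the ratio into the ``distance-free'' part $\frac{2A_0\Delta+\eta+(k-1)^2\Delta^2}{2A_0\Delta+\eta+(k-1)^2\Delta^2/k}$ and the ``distance'' part. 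For the distance part, $\frac{2A_0}{n}\sum_{v_i}dist(s_1,v_i)\le\frac{2A_0}{n}\cdot\frac{1}{k}D(S_k^{\rm ALG})\le\frac{2A_0}{nk}D(S_k^{\rm OPT})$ since $s_1$ is among the $k$ smallest sum-distance nodes (its sum-distance is at most the average of the $k$ selected), so the distance-part numerator is at most $\big(2\Delta+\frac{2A_0}{k}\big)\frac{D(S_k^{\rm OPT})}{n}$ while the denominator is $\frac{2\Delta}{\beta}\frac{D(S_k^{\rm OPT})}{n}$, giving ratio $\le\beta+\frac{A_0\beta}{k\Delta}$. Taking the max of the two parts yields the claimed bound.

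The main obstacle I anticipate is \emph{justifying that $D(S_k^{\rm ALG})\le D(S_k^{\rm OPT})$ can legitimately be plugged in simultaneously with the two distinct $k_i$-dependent bounds} — i.e., that the worst-case instance for the distance term and for the $\sum k_i$ term can be handled uniformly. In particular, \eqref{lowerbound_eq_avg} contains $\frac{n(k-1)^2\Delta^2}{\sum_{v_i}k_i}$ with $\sum_{v_i}k_i$ possibly as large as $nk$, which is exactly why the lower bound degrades to $(k-1)^2\Delta^2/k$ in the denominator of the final ratio; I must make sure this substitution $\sum_{v_i}k_i\le nk$ is valid (it is, since $k_i\le k$ for every node) and that no cross-term coupling between the distance sum and $\sum k_i$ is lost. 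A secondary subtlety is that $\eta$ depends only on the fixed instance parameters $(A_0,\Delta,T,k)$ and is identical for ALG and OPT, so it safely passes through both bounds; I would state this explicitly before applying the $\frac{a+c}{b+d}$ splitting so the reader sees the constant cancels consistently. Once these points are nailed down, the remaining algebra is mechanical.
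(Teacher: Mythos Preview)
Your proposal is correct and follows essentially the same route as the paper's own proof: invoke the APSP bound of \cite{chan2012all} for the running time, then apply the two-sided bounds of Theorem~\ref{avg_lowerbound_ratio} to ALG and OPT respectively, split the resulting ratio via $\frac{a+c}{b+d}\le\max\{\frac{a}{b},\frac{c}{d}\}$, and bound the two pieces using $D(S_k^{\rm ALG})\le D(S_k^{\rm OPT})$, $\sum_{v_i}dist(s_1,v_i)\le \frac{1}{k}D(S_k^{\rm ALG})$, and $\sum_{v_i}k_i^*\le nk$. The only point to tighten is your justification that $\sum_{v_i}dist(s_1,v_i)\le\frac{1}{k}D(S_k^{\rm ALG})$: merely being ``among the $k$ smallest'' does not by itself give ``at most the average of the $k$ selected''; you need (as the paper implicitly uses) that Algorithm~\ref{ref_problem_01_alg} orders the selected seeds so that $s_1$ has the \emph{minimum} sum-distance, whence $\sum_{v_i}dist(s_1,v_i)\le\sum_{v_i}dist(s_j,v_i)$ for every $j$ and the averaging inequality follows.
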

\begin{proof}
The running time of Algorithm~\ref{ref_problem_01_alg} is dominated by its Step 1 in returning all-pairs-shortest-path (APSP) distances of a given graph $G(V,E)$. Thanks to the state-of-art result for APSP \cite{chan2012all}, our Algorithm~\ref{ref_problem_01_alg} totally runs in $O(mn\cdot\frac{\log\log n}{\log n}+n^2\cdot \frac{\log^2\log n}{\log n})$-time.

Next, we prove the approximation guaranteed by Algorithm~\ref{ref_problem_01_alg}.
Denote $S_k=(s_1,...,s_k)$ and $S_k^*=(s_1^*,...,s_k^*)$ as the $k$ seeds that are dynamically-selected by Algorithm~\ref{ref_problem_01_alg} and an optimal solution, respectively. Recall that $A^{\rm avg}_{\rm OPT}$ and $A^{\rm avg}_{\rm ALG}$ denote the average AoI results of an optimal solution and our algorithm, respectively. According to Theorem~\ref{avg_lowerbound_ratio}, we get
\begin{equation}\label{avg_ratio_inequality01}
\begin{split}
   A^{\rm avg}_{\rm OPT}\geq &\underbrace{2A_0\Delta+\eta+\frac{n(k-1)^2\Delta^2}{\sum\limits_{v_i\in V}k_i^*}}_{{\rm denoted,\;for\;short,\;as\;}A^{\rm avg}_{\rm OPT}|_1}\\
   &+\underbrace{\frac{2\Delta}{n\beta}\sum\limits_{v_i\in V}\sum\limits_{s_j^*\in S_k^*}dist(s_j^*,v_i)}_{{\rm denoted,\;for\;short,\;as\;}A^{\rm avg}_{\rm OPT}|_2},
\end{split}
\end{equation}
\begin{equation}\label{avg_ratio_inequality02}
\begin{split}
    & A^{\rm avg}_{\rm ALG}\\
    &\leq 
     \underbrace{2A_0\Delta+(k-1)^2\Delta^2+\eta}_{{\rm denoted,\;for\;short,\;as\;}A^{\rm avg}_{\rm ALG}|_1}\\
     &\quad +\underbrace{\frac{2A_0}{n}\sum\limits_{v_i\in V}dist(s_1,v_i)+\frac{2\Delta}{n}\sum_{v_i\in V}\sum_{s_j\in S_k}dist(s_j,v_i)}_{{\rm denoted,\;for\;short,\;as\;}A^{\rm avg}_{\rm ALG}|_2}.
     \end{split}
\end{equation}
By applying (\ref{avg_ratio_inequality01}) and (\ref{avg_ratio_inequality02}) into (\ref{avg_aoi_def}), we further obtain
\begin{equation}\label{avg_ratio_inequality03}
\begin{split}
\frac{A^{\rm avg}_{\rm ALG}}{A^{\rm avg}_{\rm OPT}}&\leq \frac{A^{\rm avg}_{\rm ALG}|_1+A^{\rm avg}_{\rm ALG}|_2}{A^{\rm avg}_{\rm OPT}|_1+A^{\rm avg}_{\rm OPT}|_2}\\
&\leq \max\{\frac{A^{\rm avg}_{\rm ALG}|_1}{A^{\rm avg}_{\rm OPT}|_1},
\frac{A^{\rm avg}_{\rm ALG}|_2}{A^{\rm avg}_{\rm OPT}|_2}
\},
\end{split}
\end{equation}
where the first inequality holds by 
 (\ref{avg_ratio_inequality01}) and (\ref{avg_ratio_inequality02}), and the second inequality is due to the fact that $\frac{x_1+x_2}{y_1+y_2}\leq \max\{\frac{x_1}{y_1},\frac{x_2}{y_2}\}$ holds for any four positive numbers $x_1,y_1,x_2,y_2$. On one hand, note that the seed set $S_k$ selected by our Algorithm~\ref{ref_problem_01_alg} also minimizes Problem~\ref{ref_problem_01}, i.e., 
\begin{equation}\label{avg_ratio_inequality04}
\sum_{v_i\in V}\sum_{s_j\in S_k}dist(s_j,v_i)\leq \sum_{v_i\in V}\sum_{s_j^*\in S_k^*}dist(s_j^*,v_i).
\end{equation}
Due to Step  3 of our Algorithm~\ref{ref_problem_01_alg}, we have on the other hand that
\begin{equation} \label{avg_ratio_inequality05}
\sum\limits_{s_j\in S_k}\sum\limits_{v_i\in V}dist(s_j,v_i)\geq k\cdot \sum\limits_{v_i\in V}dist(s_1,v_i).
\end{equation}
Considering the fact that $\sum\limits_{v_i\in V}dist(s_1,v_i)\leq \sum\limits_{v_i\in V}dist(s_j,v_i)$ holds for each $s_j\in S_k$, the following which 
stems from (\ref{avg_ratio_inequality01}) and (\ref{avg_ratio_inequality02}) holds:

\begin{equation}\label{avg_ratio_inequality06}
\begin{split}
    \frac{A^{\rm avg}_{\rm ALG}|_2}{A^{\rm avg}_{\rm OPT}|_2}
      &\leq \frac{(\frac{2A_0}{nk}+\frac{2\Delta}{n})\cdot\sum\limits_{v_i\in V}\sum\limits_{s_j\in S_k}dist(s_j,v_i)}{\frac{2\Delta}{n\beta}\sum\limits_{v_i\in V}\sum\limits_{s_j^*\in S_k^*}dist(s_j^*,v_i)}\\
        &\leq \frac{(\frac{2A_0}{nk}+\frac{2\Delta}{n})\cdot\sum\limits_{v_i\in V}\sum\limits_{s_j^*\in S_k^*}dist(s_j^*,v_i)}{\frac{2\Delta}{n\beta}\sum\limits_{v_i\in V}\sum\limits_{s_j^*\in S_k^*}dist(s_j^*,v_i)}\\
        &= \frac{A_0\beta}{k\Delta}+\beta,
\end{split}
\end{equation}
in which the first and the second inequalities are due to
 (\ref{avg_ratio_inequality05}) and (\ref{avg_ratio_inequality04}), respectively. 
 
 Since $\sum\limits_{v_i\in V}k^*_i\leq \sum\limits_{v_i\in V}k=nk$, we thus have 
 \begin{equation}\label{avg_ratio_inequalitylast}
 \begin{split}
     \frac{A^{\rm avg}_{\rm ALG}|_1}{A^{\rm avg}_{\rm OPT}|_1}&\leq \frac{2A_0\Delta+(k-1)^2\Delta^2+\eta}{2A_0\Delta+\eta+\frac{n(k-1)^2\Delta^2}{\sum\limits_{v_i\in V}k_i^*}}\\
     &\leq \frac{2A_0\Delta+\eta+(k-1)^2\Delta^2}{2A_0\Delta+\eta+\frac{(k-1)^2\Delta^2}{k}}
\end{split}
 \end{equation}

By applying (\ref{avg_ratio_inequality06}) and (\ref{avg_ratio_inequalitylast}) to  (\ref{avg_ratio_inequality03}), the proof completes.
\end{proof}

In Theorem~\ref{theorem_average_guarantee} of our average AoI result, the parameter $\beta$ in the first item of the approximation guarantee represents the furthest distance from a selected seed to any other node in the social network. This $\beta$ is influenced by both the size and structure of the network and the seeding budget $k$. The second item in the approximation guarantee corresponds to both the seeding budget $k$ and the parameter $\eta$, which is given in (\ref{eta_formulation}). This $\eta$ is related to the time horizon $T$ and significantly mitigates the effect of $k$ in the second item of the approximation guarantee, as it appears in both the numerator and denominator of the guarantee's second item. On one hand, when the network is small and the seeding budget $k$ is relatively large enough, the worst-case performance guarantee above in Theorem~\ref{theorem_average_guarantee} will be dominated by the second item. This tells that expanding the time horizon $T$ will narrow the gap between our algorithm's solution and the optimum, yielding our solution's better approximation performance. On the other hand, when the network is particularly large with a small seeding budget, the worst-case performance guarantee is dominated by its first item, which approximates the network diameter divided by the seed budget, i.e., $\frac{|diam|}{k}$. This tells our algorithm's decent approximation performance even for large networks.

%
%
%
\section{Experiments}\label{sec_experiments}
\begin{figure}[t]
  \centering
    \includegraphics[width=8cm]{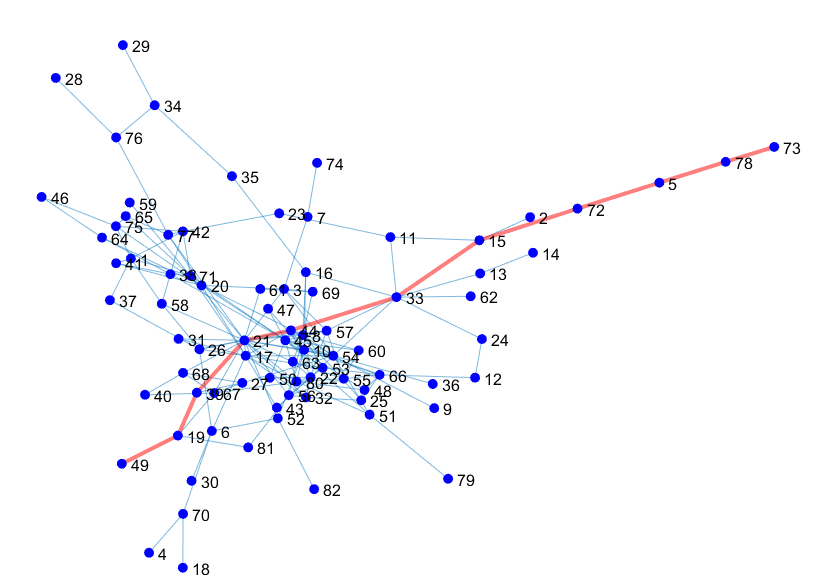}
    \caption{Social network visualization. } \label{our_graph_inmplementation_}
\end{figure}
\begin{figure}[h]
    \centering
\includegraphics[width=8cm]{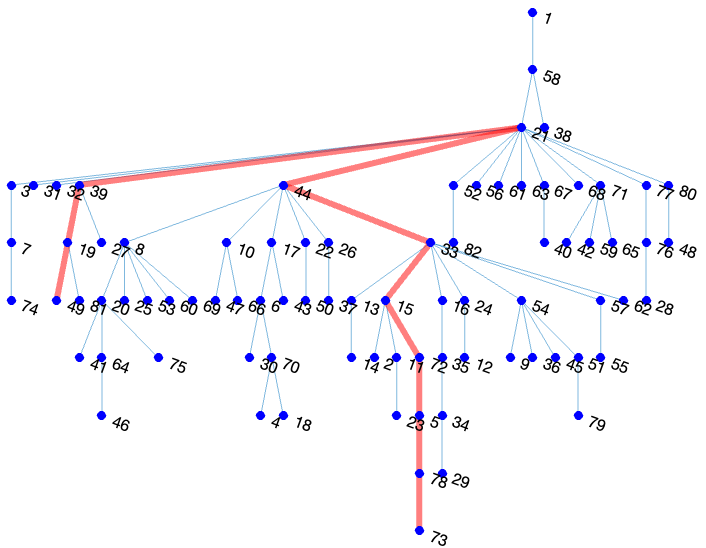}
    \caption{Histogram graph $H(G)$ reduced by Algorithm \ref{graph_reducing_alg} on $G$ in Figure \ref{our_graph_inmplementation_}.}
    \label{reduced_histogram}
\end{figure}
In this section, we empirically evaluate our algorithms by experiments on a 
real data set of Facebook social circles \cite{leskovec2012learning,realsocialgraph}.  
In the following implementation, we consider the typical 100 users with their IDs  $\{1,2,...,100\}$ in the original data set \cite{leskovec2012learning,realsocialgraph}, from which we remove some isolated users that have no bearing on our experimental results. Consequently, our social network  $G(V,E)$ is visualized in Fig.~\ref{our_graph_inmplementation_}, where the diameter path $diam(G)$ is highlighted in thick red. {\color{black} All of our implementations are conducted in Matlab R2022A. 

Since it is NP-hard to find optimal $A^{\rm peak}_{\rm OPT}$ and $A^{\rm avg}_{\rm OPT}$, we are inspired by Theorems~\ref{lemma_aoi_peak_v_i} and \ref{eq_peak_vi} to 
apply the following lower bounds ${\rm LB_{peak}}$ and ${\rm LB_
{avg}}$
 as our algorithms' comparison benchmarks, respectively:
\begin{equation}
\begin{split}
    {\rm LB}_{\rm peak}&=A_{0}+1+\frac{1}{n(n-1)}\sum\limits_{v_j\in V}\sum\limits_{v_i\in V-\{v_j\}} dist(v_j,v_i)\\
    &\leq A_{0}+1+\max\limits_{v_i,v_j\in V}\{dist(v_j,v_i)\}\\
    &\overset{(\ref{eq_peak_vi})}{\leq} A^{\rm peak}_{\rm OPT}.
\end{split}
\end{equation}
\begin{equation}\label{avg_implementation_01}
\begin{split}
    {\rm LB_{avg}} &=  2A_0\Delta+\eta+\frac{(k-1)^2\Delta^2}{k}+\frac{2\Delta
    P_1(G)}{n \cdot |diam(G)|}\\
    &\overset{(\ref{lowerbound_eq_avg})}{\leq}A^{\rm avg}_{\rm OPT},
\end{split}
\end{equation}
where $\eta$ refers to (\ref{eta_formulation}) and 
$$P_1(G)\triangleq\min\limits_{V'\subseteq V,|V'|=k} \sum\limits_{v_i\in V}\sum\limits_{v_j\in V'}dist(v_i,v_j)$$
 indicates the optimum of the sum-distance minimization problem in Problem 1 on graph $G$. It is important to note that both ${\rm LB_{peak}}$ and ${\rm LB_{avg}}$ are independent of how an optimal solution seeds, since they are general lower bounds on our peak and average AoI objectives. 

For peak AoI minimization, we implement our Algorithm~\ref{peak_cycliselection_alg} with different settings of the seeding interval $\Delta = 1$ and $\Delta = 2$, respectively, and vary the time horizon till $T=80$ time slots. To start with, we present, in Figure \ref{reduced_histogram}, the histogram graph that our Algorithm \ref{graph_reducing_alg} reduces from the input social network in Figure \ref{our_graph_inmplementation_}.
Our experimental results are presented in Fig.~\ref{peak_result_fig}.  Recall  (\ref{eq01_prop2}) in our Proposition \ref{prop_def_three_values} that $\underline{\mu}$ decreases as $\Delta$ increases, i.e., our Algorithm \ref{peak_cycliselection_alg} selects fewer candidates on $diam(G)$ as seeding interval $\Delta$ enlarges. This implies that larger $\Delta$ will induce a larger peak AoI approximation in our Algorithm \ref{peak_cycliselection_alg}, as illustrated in Fig.~\ref{peak_result_fig}. The reason for the early drop in Fig.~\ref{peak_result_fig} is two-fold: first, our benchmark ${\rm LB_{peak}}$ is chosen as a constant value that is smaller than $A^{\rm peak}_{\rm OPT}$; second, our algorithm could achieve better performance when the overall seed number $k$ increases in an early stage. As $k$ exceeds a threshold of 20 here, our empirical peak AoI ratio stabilizes at approximately 1.6 for $\Delta =1$ and 1.4 for $\Delta =2$. This validates our theoretical finding as aforementioned in Section \ref{sec_peak_aoi}
that Algorithm \ref{peak_cycliselection_alg} results in a periodic pattern in its peak AoI dynamics. 

To further evaluate our algorithm, in Fig.~\ref{peak_result_fig}, we also compare our Algorithm~3 with a baseline greedy algorithm that always seeds the node with the highest AoI in each seeding round. Our experimental results here demonstrate that Algorithm~\ref{peak_cycliselection_alg} consistently outperforms the greedy approach, as the latter fails to exploit the network connectivity to effectively reduce nodes' AoI in the long run. 

\begin{figure}[t]
    \centering
    \includegraphics[width=8.8cm]{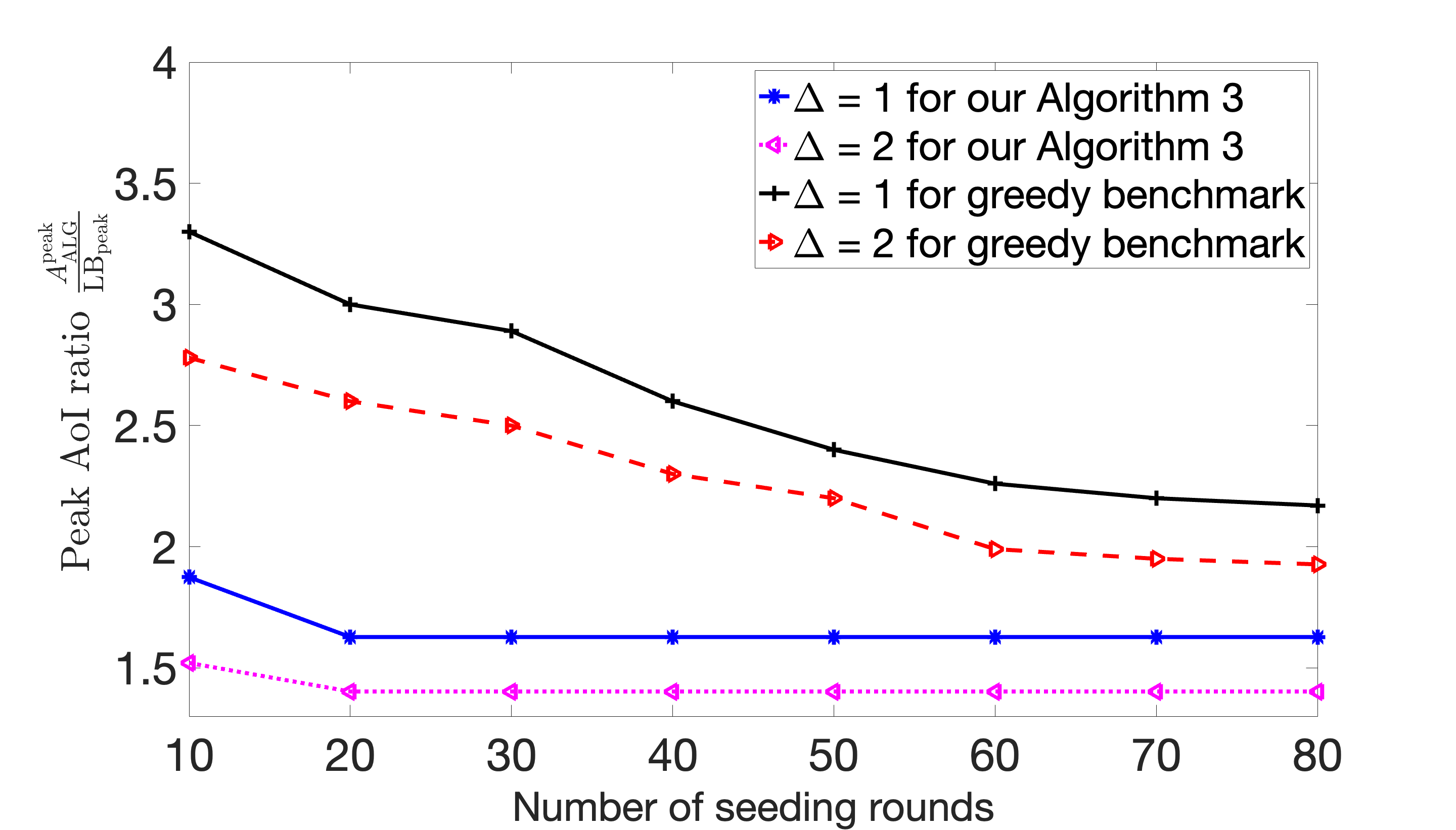}
    \caption{Peak AoI performance ratio $\frac{A^{\rm peak}_{\rm ALG}}{{\rm LB}_{\rm peak}}$.}
    \label{peak_result_fig}
\end{figure}

\begin{figure}[t]
    \centering
    \includegraphics[width=9cm]{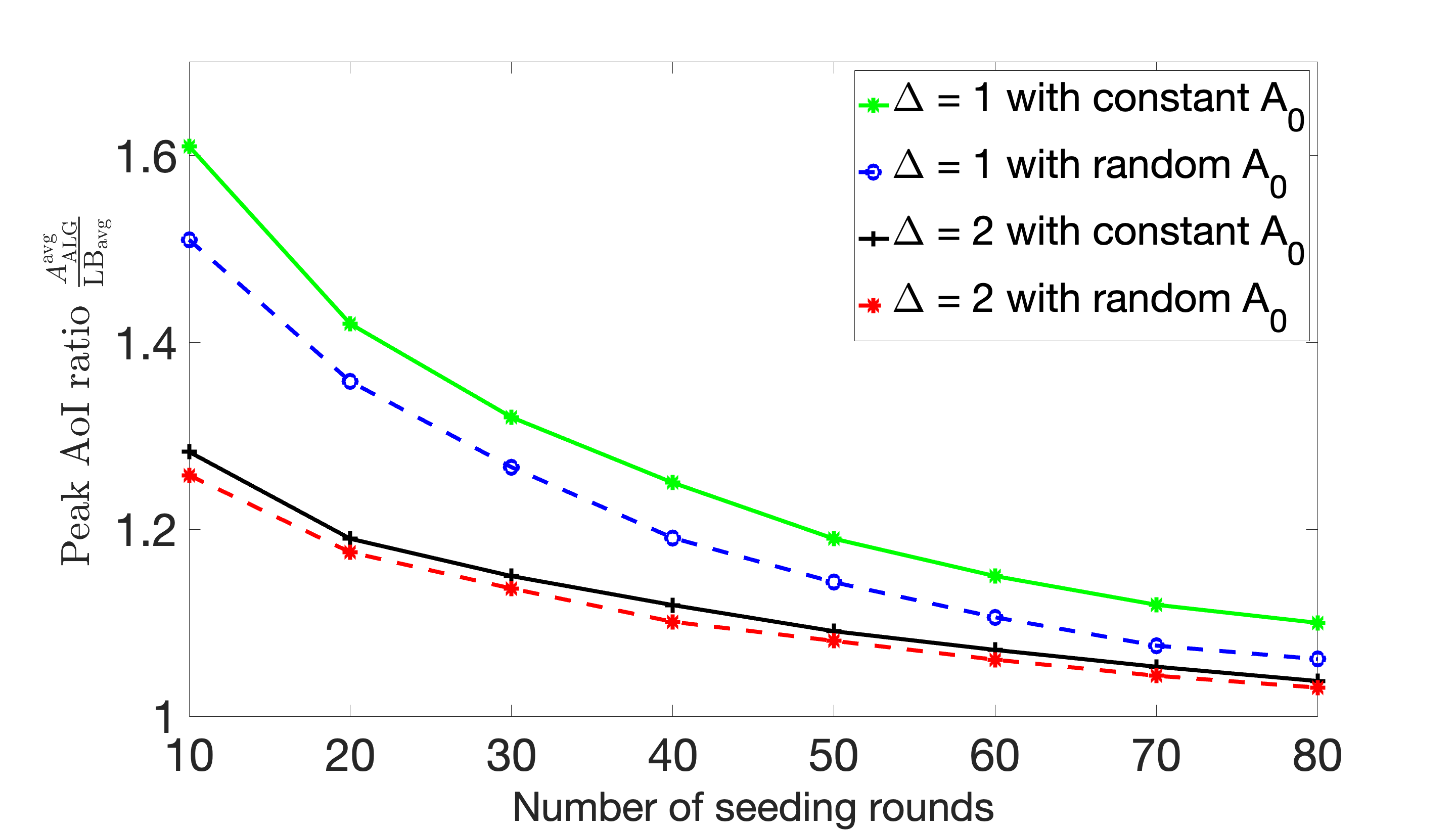}
    \caption{Average AoI performance $\frac{A^{\rm avg}_{\rm ALG}}{{\rm LB}_{\rm avg}}$.}
    \label{average_result_fig}
\end{figure}

For average AoI minimization, we not only evaluate our solutions under constant initial age $A_0$ as modeled in Section~\ref{sec_problem_statement}, but also extend to a more general setting where users' different initial ages are randomly generated with the mean equal to the constant $A_0$. We test our  Algorithm~\ref{ref_problem_01_alg} by varying the time horizon till  $T=80$ time slots and present corresponding experimental results in Fig.~\ref{average_result_fig}, by comparing with the optimum's lower bound (\ref{avg_implementation_01}) in a ratio. 
Regardless of random or deterministic $A_0$ distribution for each user, Fig.~\ref{average_result_fig} shows that our Algorithm~\ref{ref_problem_01_alg}'s empirical average AoI ratio is small and close to one, revealing Algorithm~\ref{ref_problem_01_alg}'s near-optimal performance in practical scenarios. As seed number $k$ or time horizon $T$ increases with more information updates, the performance gap between our Algorithm~\ref{ref_problem_01_alg} and the optimum (bound) further narrows, as indicated by the decreasing ratio. This also corroborates our theoretical findings in Theorem \ref{theorem_average_guarantee} as aforementioned in Section \ref{sec_average_aoi}. Moreover, Fig.~\ref{average_result_fig} also shows that
our algorithm performs even better as  $\Delta$ increases (at least in a small range), since larger $\Delta$ creates more opportunities for us to chase the optimal average AoI (i.e., $A^{\rm avg}_{OPT}$) by disseminating more nodes in each seeding interval.  Overall, Fig.~\ref{average_result_fig} has demonstrated the near-optimal performance of our Algorithm~\ref{ref_problem_01_alg} for average AoI minimization since its empirical performance ratio is only slightly larger than one (which indicates the optimum). In fact, our algorithm's real performance could even be better since the performance ratio in Fig.~\ref{average_result_fig} applies the lower bound (\ref{avg_implementation_01}) of the optimum as its denominator. 

\begin{figure}[t]
    \centering
    \includegraphics[width=9cm]{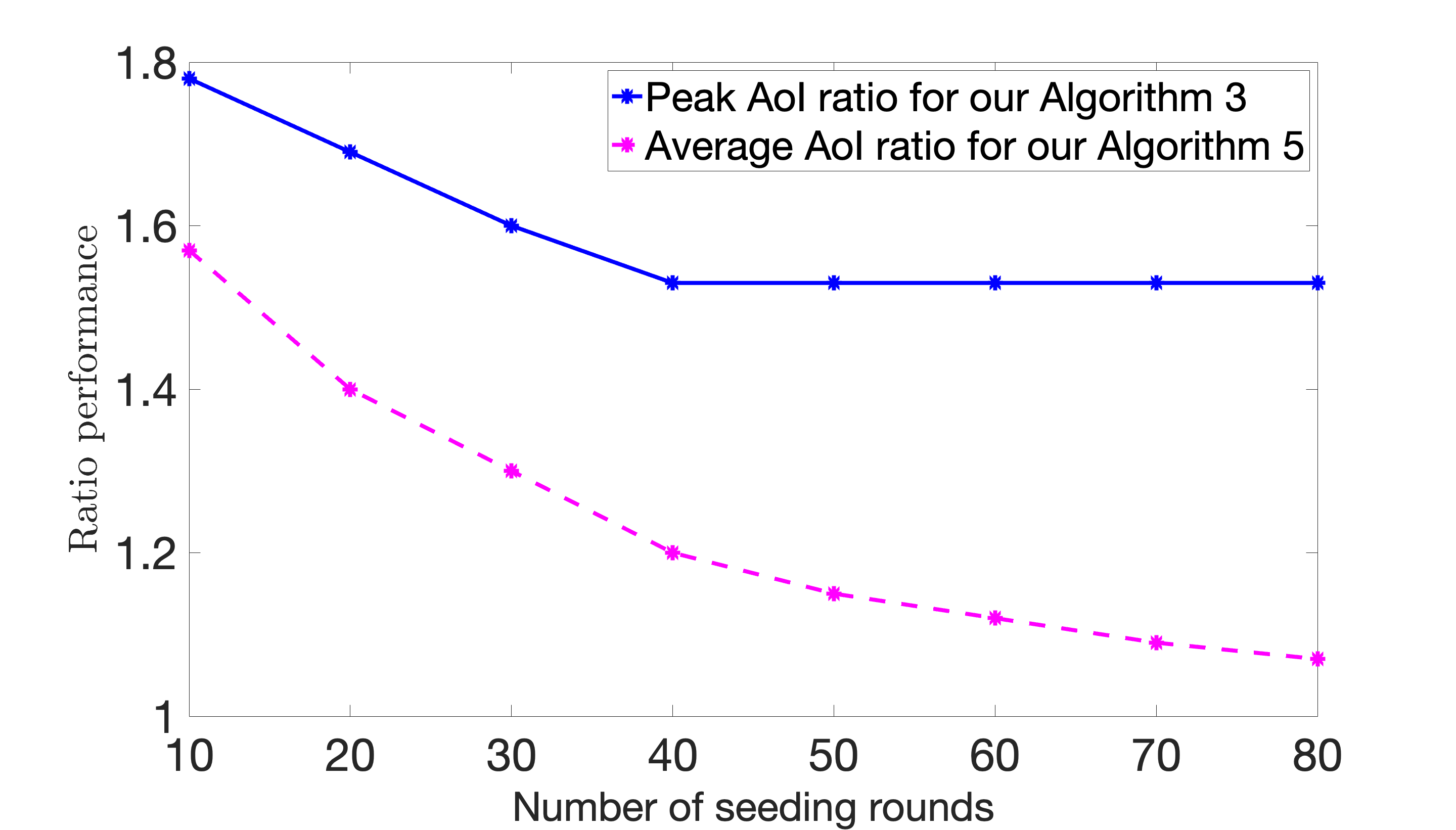}
    \caption{Peak AoI $\frac{A^{\rm peak}_{\rm ALG}}{{\rm LB}_{\rm peak}}$ under Algorithm~3 and average AoI $\frac{A^{\rm avg}_{\rm ALG}}{{\rm LB}_{\rm avg}}$ under Algorithm~5 \textit{v.s.} the total number $k$ of seeding rounds under heterogeneous propagation delays.}
    \label{multinew04}
\end{figure}
\begin{figure}[t]
    \centering
    \includegraphics[width=9cm]{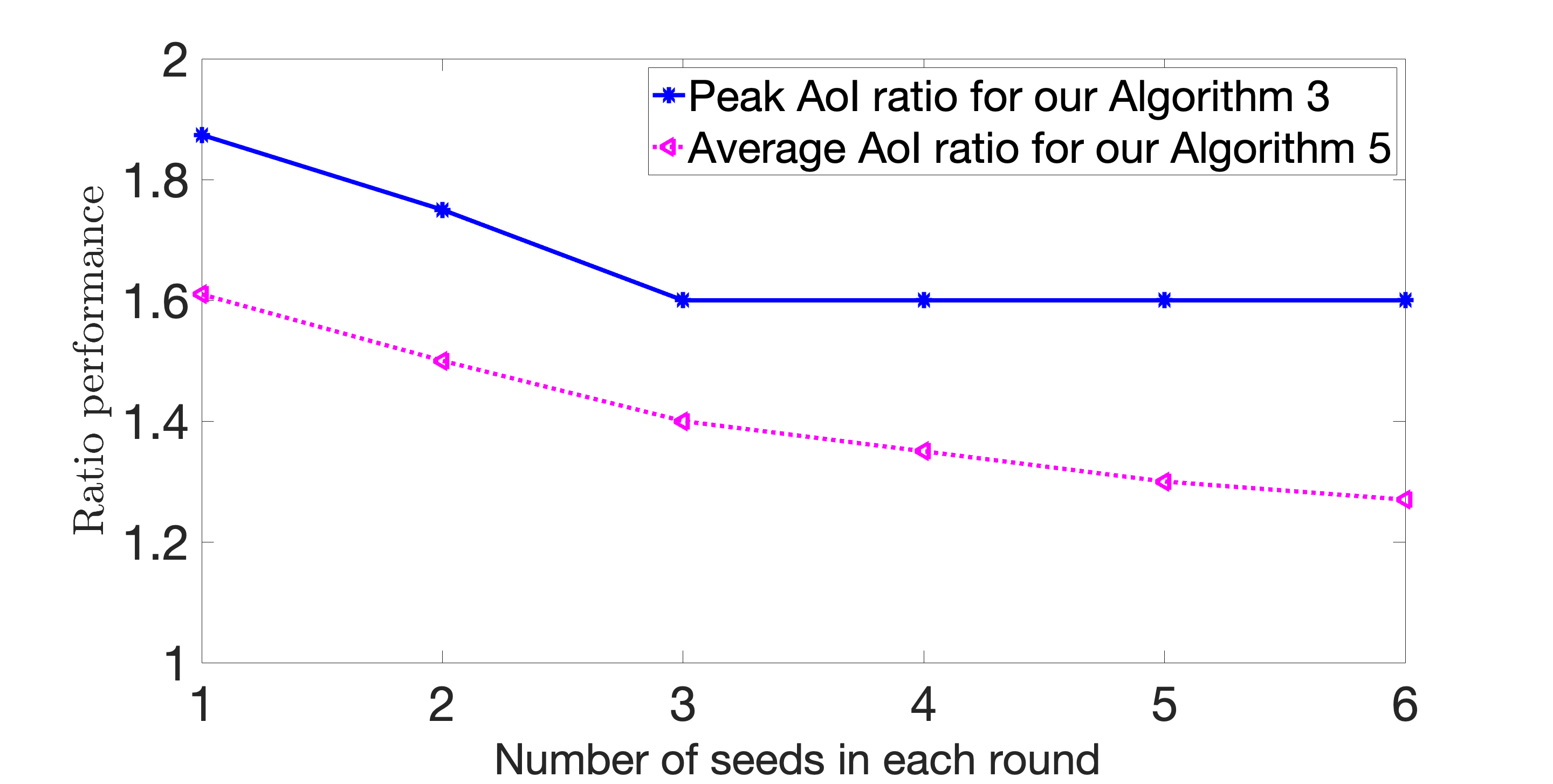}
    \caption{Average AoI $\frac{A^{\rm avg}_{\rm ALG}}{{\rm LB}_{\rm avg}}$ and peak AoI $\frac{A^{\rm peak}_{\rm ALG}}{{\rm LB}_{\rm peak}}$ \textit{v.s.} different number of per-round seeds under a total number of 10 seeding rounds.}
    \label{multinew03}
\end{figure}

Additionally, in Fig.~\ref{multinew04} and Fig.~\ref{multinew03}, we further evaluate the performances of our algorithms (developed under the uniform propagation setting) in generalized settings, including the heterogeneous propagation delays between connected nodes and the varying per-round seed budgets (i.e., the number of nodes seeded per round).

When propagation delays between connected nodes are non-uniform, the social network can be deemed weighted, with edge weights representing the varying propagation delay along each social connection.  In this implementation, edge weights for each round are randomly selected from the set $\{1, 2, 3\}$, reflecting the incomplete knowledge of our algorithms regarding the network topology.
Also, we vary the total number of seeding rounds, $k$, from 10 to 80.  Fig.~\ref{multinew04} presents our results. Compared to Figures~\ref{peak_result_fig} and~\ref{average_result_fig}, which depict the uniform delay setting, Fig.~\ref{multinew04} demonstrates that our algorithms remain robust under varying propagation delays as they continue to exhibit similarly decent ratio performances relative to the optimum.

In another generalized setting where multiple nodes can be seeded in each round of promotion updates, we vary the number of per-round seeds from 1 to 6 and consider 10 seeding rounds in total. Thanks to the round-robin approach our algorithms use for seeding from a well-designed sequence of candidates, they can readily adapt to this generalized setting. Our experimental results are presented in Fig.~\ref{multinew03}. As the number of per-round seeds increases, the total number of seeds after the fixed 10 seeding rounds increases. Similar to our earlier experimental results in Fig.~\ref{peak_result_fig}, which varies the number of seeding rounds (with one seed per round), our Fig.~\ref{multinew03} also demonstrates that both our algorithms' average and peak AoI exhibit improvements as the number of per-round seeds increases. 
Particularly, our peak AoI gradually stays at approximately 1.6 times the optimal value while our average AoI converges even more closely to the optimal value, which is around 1.3. Considering Fig.~\ref{multinew03} and our earlier Fig.~\ref{peak_result_fig} together, we further find that our algorithms perform better with a higher total number of seeds.

\section{Concluding Remarks}
To the best of our knowledge, we initiate the theoretical study of optimizing information diffusion on social networks with a multi-stage seeding process. By considering a coupling metric bridging age of information (AoI) and approximation analysis, we comprehensively study two objectives (i.e., the peak and average AoI of the network) of the problem and prove that both are NP-hard.
As a critical step, we first manage to derive closed-form expressions that trace the AoI dynamics of the network, which is highly non-trivial due to a multi-stage seeding process and the involved topology of a general social network. By focusing on a fine-tuned set of seed candidates on the diameter path, we design a fast algorithm that guarantees decent approximations as compared to the optimum for the peak AoI minimization problem. 
To minimize the average AoI, we develop a new framework that allows for algorithm design and approximation analysis, which benefits from our rigorous two-sided bound analysis on the average AoI objective.  Our framework enables us to achieve a polynomial-time algorithm that guarantees a good approximation. 
Additionally, our theoretical findings are well corroborated by extensive experiments conducted on a real social network. 

\textbf{Acknowledgement.} This work is also supported in part by the Ministry of Education, Singapore, under its Academic Research Fund Tier 2 Grant with Award no. MOE-T2EP20121-0001; in part by SUTD Kickstarter Initiative (SKI) Grant with no. SKI 2021\_04\_07; and in part by the Joint SMU-SUTD Grant with no. 22-LKCSB-SMU-053.

\nocite{langley00}
\bibliographystyle{IEEEtran}
\bibliography{mainfile}
\begin{IEEEbiography}[{\includegraphics[width=1in,height=1.23in,clip,keepaspectratio]{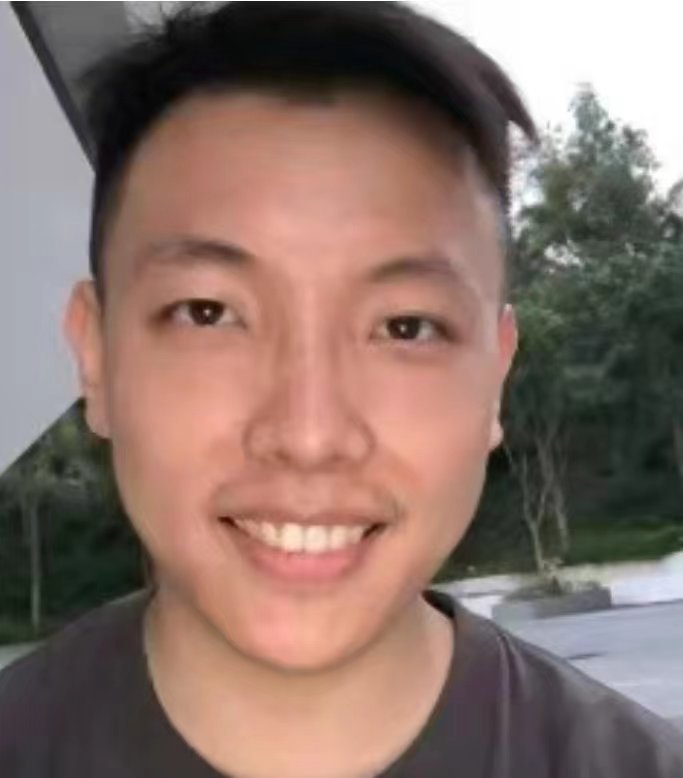}}]{Songhua Li} received
the Ph.D. degree from The City University
of Hong Kong in 2022. He is now a postdoctoral research fellow at 
the Singapore University of Technology and Design (SUTD), where he was a Visiting Scholar in 2019. His research interests include approximation and online algorithm design and analysis, combinatorial optimization, and algorithmic game theory in the context of sharing economy networks.
\end{IEEEbiography}
\begin{IEEEbiography}[{\includegraphics[width=1in,height=1.25in,clip,keepaspectratio]{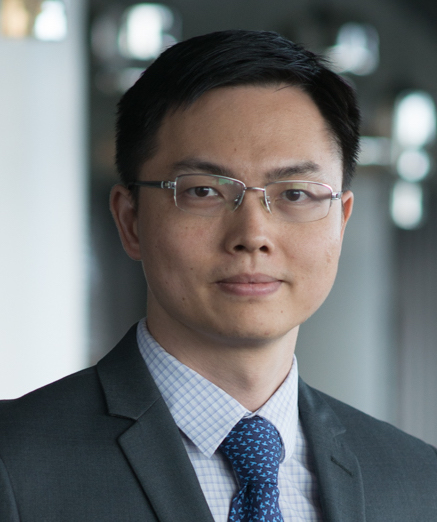}}]{Lingjie Duan} (S’09-M’12-SM’17) received the Ph.D. degree from The Chinese University of Hong Kong in 2012. He is an Associate Professor at the Singapore University of Technology and Design (SUTD) and is an Associate Head of Pillar (AHOP) of Engineering Systems and Design. In 2011, he was a Visiting Scholar at the University of California at Berkeley, Berkeley, CA, USA. His research interests include network economics and game theory, network security and privacy, energy harvesting wireless communications, and mobile crowdsourcing. He is an Associate Editor of IEEE/ACM Transactions on Networking and IEEE Transactions on Mobile Computing. He was an Editor of IEEE Transactions on Wireless Communications and IEEE Communications Surveys and Tutorials. He also served as a Guest Editor of the IEEE Journal on Selected Areas in Communications Special Issue on Human-in-the-Loop Mobile Networks, as well as IEEE Wireless Communications Magazine. He is a General Chair of WiOpt 2023 Conference and is a regular TPC member of some other top conferences (e.g., INFOCOM, MobiHoc, SECON). He received the SUTD Excellence in Research Award in 2016 and the 10th IEEE ComSoc Asia-Pacific Outstanding Young Researcher Award in 2015. 
\end{IEEEbiography}


\clearpage

\newpage
\appendix

\section{Missed Proofs in Section~\ref{sec_aoi_formulation}}\label{appendix_sec_aoi_formulation}
\subsection{Proof of Lemma~\ref{lemma_aoiof_vi}} \label{formulation_aoiof_vi_appendix}
\begin{proof}
Recall that the AoI of each node $v_i$ will increase linearly in each time slot $[t,t+1)$ for $t=\{0,..., T-1\}$, see our model of AoI evolution in Section~\ref{sec_problem_statement}. Given the set $S_x$ of dynamically selected seed nodes,  we now discuss in the following the AoI of a node $v_i$ at an integer time point $\overline{t}$, i.e., $A(v_i,\overline{t})$. Since $A(v_i,0)=A_{0}$, we only need to discuss the AoI of $v_i$ at time $\overline{t}\in\{1,...,T-1\}$. Accordingly, we discuss the following two cases. 

\noindent \textbf{Case 1.} $\Omega(v_i,\overline{t})$ is empty.

Since no seed is available to update $v_i$ at time $\overline{t}$, the AoI of $v_i$ will increase linearly based on $A(v_i,\overline{t}-1)$, i.e., 
\begin{equation}
A(v_i,\overline{t})=A(v_i,\overline{t}-1)+1.
\end{equation}

\noindent \textbf{Case 2.}  $\Omega(v_i,\overline{t})$ is not empty.

There exist seeds from set $\Omega(v_i,\overline{t})$ that update de facto the AoI of $v_i$ at time $\overline{t}$. Further, $A(v_i,\overline{t})$ will be updated by the freshest information up to time $\overline{t}$. Accordingly, we have
\begin{equation}
A(v_i,\overline{t})=\min\limits_{s_x\in\Omega}\{1+\overline{t}-t_x\}=\min\limits_{s_x\in\Omega}\{1+dist(v_i,s_x)\},
\end{equation}
where the second equation is due to the fact that $\overline{t}=t_x+dist(s_x,v_i)$. In summary, we have the AoI of $v_i$ at an arbitrary integer time point $\overline{t}$ as follows:
\begin{small}
\begin{equation}\label{evolution04}
\begin{split}
&A(v_i,\overline{t})\\
&=\left\{\begin{matrix}
 A_{0},&{\rm\;if\;} \overline{t}=0,\\   
A(v_i,\overline{t}-1)+1, &{\rm\;if\;}\overline{t}\geq 1 {\rm\;and\;}\Omega(v_i,\overline{t})=\varnothing, \\ 
\min\limits_{s_x\in\Omega(v_i,\overline{t})}\{1+dist(v_i,s_x)\}, &{\rm\;if\;} \overline{t}\geq 1 {\rm\;and\;}\Omega(v_i,\overline{t})\neq \varnothing.
\end{matrix}\right.
\end{split}
\end{equation}
\end{small}
Furthermore, at an arbitrary time $t$, $A(v_i,t)$ (i.e., the  AoI of an arbitrary node $v_i$) can be written as shown in Lemma~\ref{lemma_aoiof_vi}.
\end{proof}
\subsection{Missed Proof in Lemma~\ref{prop_dicontinuity_01}}\label{appendix_prop_dicontinuity_01}
\begin{proof}
Note that, within the time horizon $[0, T]$, $A(v_i,t)$ drops only when $v_i$ is updated its AoI by some seed.
Since each seed $s_x\in S_k$ is selected at timestamp $t_x$ and the AoI of $v_i$ takes unit time to disseminate along each edge, $v_i$ would be updated by $s_x$ at time $t_x+dist(s_x,v_i)$. This further implies that any discontinuity point of $v_i$ can be found in the following set 
\begin{equation}
    \{t_x+dist(s_x,v_i)|x\in {1,...,k}\}.
\end{equation}
This concludes the proof. 
\end{proof}

\subsection{Missed Proof in Lemma~\ref{prop_dicontinuity_02}}\label{appendix_prop_dicontinuity_02}
\begin{proof}
According to Definition~\ref{DiscontinuityPoint}, we know that a time point $t_j+dist(s_j,v_i)$ is not a discontinuity point if 
\begin{equation}  \Omega(v_i,t_j+dist(s_j,v_i))=\varnothing.
\end{equation}
Since $1\leq t_j< t_j+dist(s_j,v_i)$, we further have

\begin{equation}\label{derivation_removal_01}
\begin{split}
    &\Omega(v_i,t_j+dist(s_j,v_i))=\varnothing\\
    &\Rightarrow dist(v_i,s_j)\geq A(v_i,t_j+dist(s_j,v_i)-1)\\
    &\Rightarrow \underbrace{1+dist(v_i,s_j)}_{{\rm AoI\; of \;}s_j{\rm\;at\;}t_j+dist(s_j,v_i)}\\
    &\geq \underbrace{A(v_i,t_j+dist(s_j,v_i)-1)+1}_{{\rm AoI\;of\;some\;other\;source\;at\;}t_j+dist(s_j,v_i)}.
    \end{split}
\end{equation}
implying that there exists a seed node, denoted as $s_y$, that is selected later than $s_j$ but diffuses its information to $v_i$ no later than time $t_j+dist(s_j,v_i)$, i.e., 
\begin{equation}
    t_y>t_j  {\rm \;and\;} t_y+dist(s_y,v_i)\leq t_j+dist(s_j,v_i).
\end{equation}
This completes our proof.
\end{proof}
%

%
%
\subsection{Missed Proof in Theorem~\ref{lemma_aoi_peak_v_i}}\label{appendix_lemma_aoi_peak_v_i}
\begin{proof} 
Given $S_k$ of selected seeds, we first discuss the peak AoI of an arbitrary node $v_i\in V$. 
By Proposition~\ref{theorem_functionof_vi}, we know that the $A(v_i,t)$ achieves a supremum at the right-endpoint in each of the following time intervals, respectively.
\begin{equation}
\begin{split}
    &[0, t_{i_1}+dist(s_{i_1},v_i)),\\
    &[t_{i_1}+dist(s_{i_1},v_i),t_{i_2}+dist(s_{i_2},v_i))\\
    &,...,\\
    &[t_{i_{k_i-1}}+dist(s_{i_{k_i-1}},v_i),t_{i_{k_i}}+dist(s_{i_{k_i}},v_i)),\\
&[t_{i_{k_i}}++dist(s_{i_{k_i}},v_i),T].
\end{split}
\end{equation}
With this insight, we further get 
\begin{equation}\label{appendix_theorem1_eq01}
\begin{split}
    &A_i^{\rm peak}\\
    &=\sup\limits_{ t\in[0,T]}\{A(v_i,t)\}\\
    & = \max\{\max\limits_{j\in\{1,...,k_i\}}\{A(v_i,t_{i_j}+dist(s_{i_j},v_i))\},A(v_i,T)\}.
    \end{split}
\end{equation}
By substituting (\ref{appendix_theorem1_eq01}) back to (\ref{def_peak_aoi_network}), the $A^{\rm peak}$ over the time horizon $[0,T]$ can be expressed as follows
\begin{equation}
\begin{split}
A^{\rm peak}=\max\limits_{v_i\in V}\{A^{\rm peak}_i\}=(\ref{eq_peak_vi}),
\end{split}
\end{equation}
in which the second equation is due to Equation (\ref{eq_functionof_vi}).
\end{proof}
\subsection{Missed Proof in Theorem~\ref{avgaoi_formulation_lemma}}\label{appendix_avgaoi_formulation_lemma}
\begin{proof}
According to (\ref{def_avg_aoi_network}), we have 
\begin{equation}\label{averageaoi_network_formulation}
    A^{\rm avg}=\frac{1}{nT}\sum\limits_{v_i\in V}\int_{t=0}^T A(v_i,t)dt.
\end{equation}
Note that  $A(v_i,t)$ is a non-negative piece-wise function over the whole time horizon $[0,T]$. Given $S_k$ of dynamically selected seeds, we first look at the average AoI of a node $v_i\in V$. To this end, we partition the horizon $[0,T]$ into the following three intervals 
\begin{equation}\label{appendix_theorem2_intervals}
\begin{split}
    &[0,t_{i_1}+dist(s_{i_1},v_i)),\\
    &[t_{i_1}+dist(s_{i_1},v_i),t_{i_{k_i}}+dist(s_{i_{k_i}},v_i)),\\
    &[t_{i_{k_i}}+dist(s_{i_{k_i}},v_i),T].
\end{split}
\end{equation}
Due to Proposition~\ref{theorem_functionof_vi}, we now take
integrals of function $A(v_i,t)$ over the three intervals above in the above (\ref{appendix_theorem2_intervals}), respectively:
\begin{equation}\label{term1_oftheaverageaoi_formulation}
\begin{split}
&\int\limits_{t=0}^{t_{i_1}+dist(s_{i_1},v_i)} A(v_i,t) dt\\
&=\frac{(t_{i_1}+dist(s_{i_1},v_i))\cdot (2A_{0}+t_{i_1}+dist(s_{i_1},v_i))}{2},
\end{split}
\end{equation}

and 
\begin{equation}\label{term2_oftheaverageaoi_formulation}
\begin{split}
&\int\limits_{t=t_{i_1}+dist(s_{i_1},v_i)}^{t_{i_{k_i}}+dist(s_{i_k},v_i)}A(v_i,t)  dt\\
&=\sum\limits_{j=2}^{i_{k_i}}\frac{(2+2\cdot dist(s_{i_{j-1}},v_i)+\Lambda_{ij})\cdot \Lambda_{ij}}{2},
\end{split}
\end{equation}
where for $ \Lambda_{ij}=t_{i_j}-t_{i_{j-1}}+dist(s_{i_j},v_i)-dist(s_{i_{j-1}},v_i)$ holds for each $j\in\{2,...,k_i\}$, and
\begin{equation}\label{term3_oftheaverageaoi_formulation} 
\begin{split}
&\int_{t=t_{i_{k_i}}+dist(s_{i_{k_i}},v_i)}^{T}A(v_i,t) dt \\&=\frac{(T-t_{i_{k_i}}-dist(s_{i_{k_i}},v_i))\cdot (3+2\cdot dist(s_{i_{k_i}},v_i))}{2}.
\end{split}
\end{equation}

By applying  (\ref{term1_oftheaverageaoi_formulation}),(\ref{term2_oftheaverageaoi_formulation}), and (\ref{term3_oftheaverageaoi_formulation}) to Equation (\ref{averageaoi_network_formulation}), the average AoI of the network follows
\begin{equation}
\begin{split}
&A^{\rm avg}\\&=\frac{1}{nT}\sum_{v_i\in V} \int_{t=0}^{T}A(v_i,t)  dt\\&=\frac{1}{nT}\sum_{v_i\in V}\sum\limits_{j=1}^{k_i+1} \frac{(2A_{i{j-1}}+\Lambda_{ij})\cdot\Lambda_{ij}}{2},
    \end{split}
\end{equation}
in which  $A_{ij}$ and  $\Lambda_{ij}$ refer to Equation (\ref{formulation_Aij}) and Theorem \ref{avgaoi_formulation_lemma}, respectively. This completes the proof.
\end{proof}

%
\subsection{Missed Proof in Theorem~\ref{theorem_nphardness_avg}}\label{appendix_NP_hardness_theorem_Avg}
\begin{proof}
The NP-hardness of the average AoI minimization problem is shown by a reduction from the well-known NP-hard set cover problem. Given a ground set $U=\{u_1,...,u_q\}$ of $q$ elements and a collection $\mathcal{U}=\{U_1,...,U_p\}$ of $p$ subsets of $U$, the set cover problem aims to find $k$ ($< p$) subsets from $\mathcal{U}$ such that their union is equal to $U$. Given an instance of the set cover problem, we construct a graph of our problem by the following steps, where we set $T=3$ and $\Delta\rightarrow 0$.
\begin{itemize}
    \item \textit{Step 1.} Construct a set $V_1$ of $q$ nodes that correspond to those $q$ elements in the ground set $U$, respectively. Please refer to the top row of circles in Fig.~\ref{fig_nphardness_avg}. 
    \item \textit{Step 2.} Construct a set  $V_2$ of $p$ nodes that correspond to those $p$ subsets in $\mathcal{U}$, respectively. Please refer to those rectangles in the middle of Fig.~\ref{fig_nphardness_avg}. 
    \item \textit{Step 3.} For each  $u_x\in V_1$ and each $U_y\in V_2$, an edge $(u_x,U_y)$ is constructed if $u_x\in U_y$. Accordingly, we obtain our first edge set as $E_1$.
    \item \textit{Step 4.} Find, in the current graph $G(V_1\cup V_2,E_1)$, the maximum degree of a node of $V_1$, which is denoted as 
  \begin{equation}
      \alpha =\max\limits_{u_x\in V_1}\{deg(u_x)\}.
  \end{equation}  
     For each $U_y\in V_2$, we construct a set $\{v^y_1,...,v^y_{\alpha+1}\}$ of ($\alpha+1$) dummy nodes that correspond to $U_y$\footnote{Intuitively, these dummy nodes ensures that an optimal solution of our problem only seeds from $V_2$, which further ensures our solution to be one to the set cover problem. A formal discussion regarding this is given later in Proposition~\ref{claim_condition_ofoursolultion_nphardness}.}. Accordingly, we obtain another node set $V_3$ consisting of all the dummy nodes, i.e., $$V_3\triangleq\mathop{\cup} \limits_{j\in [p]}\{v^j_1,...,v^j_{\alpha+1}\}$$.
    \item \textit{Step 5.} For each $U_y\in V_2$ and each $v\in V_3$, we construct an edge $(U_y,v)$. Accordingly, we have another edge set $E_2$.
\end{itemize}
By the five steps above, we obtain our graph as $G(V_1\cup V_2\cup V_3,E_1\cup E_2)$.  Fig.~\ref{fig_nphardness_avg} illustrates an example of our graph construction that is converted from the set cover problem (where $U=\{u_1,u_2,....,u_5\}$, $\mathcal{U}=\{\{u_1,u_3,u_5\},\{u_2,u_4\},\{u_3,u_5\}\}$, and $k=2$).
\begin{figure}
    \centering \includegraphics[width=8cm]{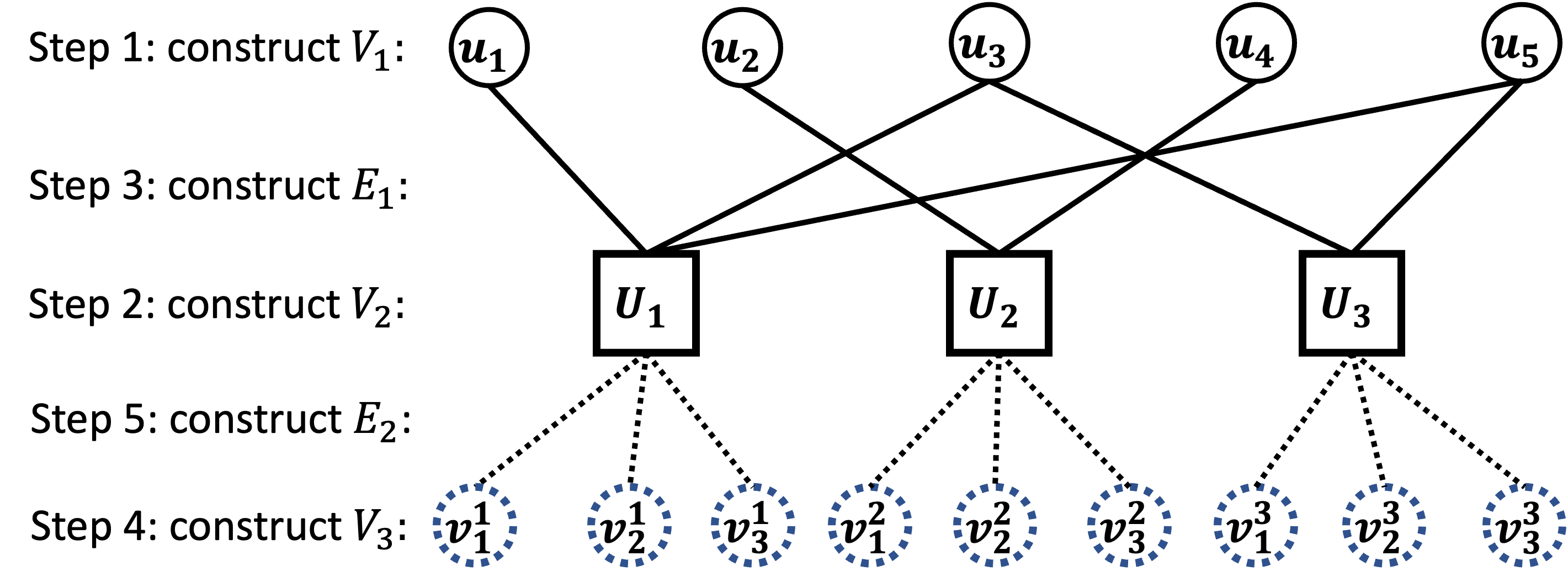}
    \caption{The graph converted from the set cover problem.}
    \label{fig_nphardness_avg}
\end{figure}

Now, we consider the average AoI of our graph $G(V_1\cup V_2\cup V_3, E_1\cup E_2)$. Since $\Delta\rightarrow 0$, the $k$ seeds could be regarded to be selected together,  implying by Lemma~\ref{prop_dicontinuity_01} that the AoI of each node in $G(V_1\cup V_2\cup V_3, E_1\cup E_2)$ drops at most once in the time horizon $[0,3)$. For ease of exposition, we partition nodes in the graph into the following types according to their average AoI (as per Theorem~\ref{lemma_aoi_peak_v_i}).
\begin{itemize}
    \item Nodes of type one refer to those seed nodes that share a common average AoI as $\frac{A_0+A_0+1}{2}+\frac{(1+3)\cdot 2}{2}=A_0+4.5$. Clearly, there are $k$ nodes of type one.
    \item Nodes of type two refer to those non-seed nodes that have at least one neighbor node selected as a seed, i.e., type-two nodes share a common average AoI as $2A_0+4.5$. Totally, there are are $|N_G(S_k)|$ nodes of type two.
    \item Nodes of type three refer to those non-seed nodes that have no neighbor node selected as a seed. That is, type-three nodes share a common average AoI as $3A_0+4.5$. Totally, the number of type three nodes is  
    \begin{equation}
        q+(\alpha+2)\cdot p-k-|N_G(S_k)|.
    \end{equation}
\end{itemize}
By Theorem~\ref{avgaoi_formulation_lemma}, we further have 
\begin{small}
\begin{equation}\label{eq_nphardness_exaple_formulate_aoi}
\begin{split}
    &A^{\rm avg}\\&=\frac{4.5 (2p+p\alpha+q) +A_0 (6p+3\alpha p+3q-2k-|N_G(S_k)|)}{3q+3p\alpha+6p}\\
    &=\frac{(9+4.5\alpha+6A_0+3\alpha A_0) p+(4.5+3A_0)q}{3q+3p\alpha+6p}\\
    &\quad-\frac{2A_0 k+A_0\cdot |N_G(S_k)|}{3q+3p\alpha+6p}\\
    &= C_1-C_2\cdot |N_G(S_k)|,
\end{split}
\end{equation}
\end{small}
in which
\begin{equation*}
\begin{split}
C_1=\frac{(9+4.5\alpha+6A_0+3\alpha A_0) p+(4.5+3A_0)q-2A_0 k}{3q+3p\alpha+6p}
    \end{split}
\end{equation*}
 and $C_2=\frac{A_0}{3q+3p\alpha+6p}>0$ are both constant. The following proposition characterizes an optimal solution of ours.
\begin{proposition}\label{claim_condition_ofoursolultion_nphardness}
In our average AoI minimization problem (that is converted from the set cover problem), an optimal solution only seeds in $V_2$. 
\end{proposition}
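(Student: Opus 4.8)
The plan is to first reduce the claim to a purely combinatorial coverage statement and then establish it by a local exchange argument. By the closed-form identity (\ref{eq_nphardness_exaple_formulate_aoi}), the average AoI of the constructed graph equals $C_1-C_2\,|N_G(S_k)|$ with $C_2=\tfrac{A_0}{3q+3p\alpha+6p}>0$ a constant, where $|N_G(S_k)|$ counts the non-seed nodes adjacent to some seed. Hence minimizing $A^{\rm avg}$ is equivalent to maximizing the closed-neighborhood size $|N_G[S_k]|=k+|N_G(S_k)|$ over all $k$-node seed sets. Throughout one may assume without loss of generality that every element $u_x$ lies in at least one subset (otherwise the set cover instance is trivially infeasible), so the maximum degree of a $V_1$-node satisfies $\alpha\ge 1$; one may also assume an optimal $S_k$ has no repeated seeds, and we will use that $k<p=|V_2|$, so $V_2\setminus S'\neq\varnothing$ for any $S'$ with $|S'|\le k$. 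The goal is to show that any optimal $S_k$ can be transformed, without decreasing $|N_G[S_k]|$, into an optimal one contained in $V_2$.

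The core is an exchange step applied repeatedly. Take any seed $w\notin V_2$ in a current optimal $S_k$, and let $U_y\in V_2$ be the unique neighbor of $w$ if $w\in V_3$, or any $V_2$-neighbor of $w$ if $w\in V_1$ (which exists since $\alpha\ge 1$). If $U_y\notin S_k$, replace $w$ by $U_y$; if $U_y\in S_k$ already, delete $w$ and add an arbitrary node of $V_2\setminus S_k$. In both cases the result is again a valid $k$-seed set, and the content of the argument is that $|N_G[S_k]|$ does not decrease: deleting $w$ can only drop the coverage of nodes in $N_G[w]$, and all of these are either recovered (because $w$'s neighbor $U_y$ is now a seed, which covers $w$ and, when $w\in V_1$, also that one neighbor $U_y$) or lie in $V_2$ and number at most $\deg(w)-1\le\alpha-1$ when $w\in V_1$, and at most $0$ when $w\in V_3$; meanwhile, adding any fresh $V_2$-node $U_z$ brings in all $\alpha+1$ of its dummy neighbors $v^z_1,\dots,v^z_{\alpha+1}$, and — since the $V_3$-seeds are processed first and no $V_3$-node is ever re-added — a dummy is covered only when its $V_2$-parent is a seed, so all $\alpha+1$ of them (minus the ones still held as seeds) are newly covered. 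Comparing gain against loss, the calibration $|\{v^y_1,\dots,v^y_{\alpha+1}\}|=\alpha+1>\alpha-1$ — which is exactly why the dummy gadget has $\alpha+1$ nodes — yields $|N_G[S_k]|$ non-decreasing, hence the new set is still optimal. The potential ``number of seeds outside $V_2$'' strictly decreases at each step and is bounded below by $0$, so the process terminates at an optimal $S_k\subseteq V_2$, which is the claim.

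I expect the main obstacle to be the bookkeeping of overlaps in the exchange step: showing precisely that (i) when $w\in V_1$ the only coverage that can be permanently lost is that of at most $\deg(w)-1$ nodes of $V_2$, and (ii) the dummy neighbors of a newly added $V_2$-node were genuinely uncovered beforehand, which relies on first disposing of all $V_3$-seeds and never reintroducing one. The subcase where the intended replacement $U_y$ is already a seed (so we instead delete $w$ and add a far-away $V_2$-node) needs only monotonicity of $S\mapsto|N_G[S]|$ together with $k<p$, which is routine but must be stated; likewise the reductions ruling out repeated seeds and degenerate set-cover instances are elementary and should be noted so the reduction is well-defined.
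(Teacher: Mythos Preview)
Your proposal is correct and follows essentially the same exchange argument as the paper: both swap a non-$V_2$ seed for a $V_2$ node and use the fact that a fresh $V_2$ node brings in its $\alpha+1$ dummy neighbors, which outnumbers the at most $\alpha$ neighbors that a $V_1\cup V_3$ seed can cover. Your version is in fact tidier than the paper's---by processing $V_3$-seeds first you guarantee the dummies of any newly added $V_2$ node are genuinely uncovered, whereas the paper handles this via a separate pigeonhole step, and your loss bound of $\deg(w)-1\le\alpha-1$ for $w\in V_1$ is more explicit than the paper's Case~1 justification.
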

\begin{proof} [\textbf{Proof of Proposition \ref{claim_condition_ofoursolultion_nphardness}}]
For the sake of contradiction, suppose that there exists an optimal solution that seeds some $s_j$ which is not in $V_2$, i.e.,  $s_j\in V_1\cup V_3$. For analytical tractability, we denote $\overline{V_2}\triangleq V_2-V_2\cap S_k$ as the set of nodes in $V_2$ that are not selected as seeds. Note that $p> k$, which tells 
$\overline{V_2}\neq\varnothing$.  Denote  the set of nodes in $V_2$ that are associated with $v_j$ as 
\begin{equation}
    \widetilde{V_2}(v_j)\triangleq \{U_x\in V_2| (U_x,v_j)\in  E_1\cup E_2\}.
\end{equation}
Below, we discuss two cases. 

\noindent\textbf{Case 1.}
(There exists a node in the intersection $\widetilde{V_2}(v_j)\cap\overline{V_2}$, say $U_x$, that is associated with $v_j$ but not selected as a seed)

By seeding $U_x$ as the $j$-th seed $s_j$ instead, we know $|N_G(S_k)|$ does not decrease since the $U_x$ is the only neighbor of the old $s_j$.

\noindent\textbf{Case 2.} ($\widetilde{V_2}(v_j)\cap\overline{V_2}=\varnothing$)

 According to the pigeonhole principle, we know that there exists at least one node in $\overline{V}_2$, say $U_y$, such that none of its corresponding nodes in $\{v^y_1,...,v^y_{\alpha+1}\}$ is selected as a seed. By seeding $U_y$ as $s_j$, Further, we know that $|N_G(S_k)|$ increase by  at least one, which is because 
 \begin{equation*}
     deg(v_j)\leq \alpha< \alpha+1=deg(U_y)
 \end{equation*}
where $deg(v)$ indicates the number of edges that are incident to a vertex $v$. Hence,  according to (\ref{eq_nphardness_exaple_formulate_aoi}), we know that the average AoI of the network will increase when replacing a seed outside $V_2$ by someone in $V_2$.

This concludes the proof.
\end{proof}
\noindent 
With Proposition~\ref{claim_condition_ofoursolultion_nphardness}, Equation (\ref{eq_nphardness_exaple_formulate_aoi}) is equivalent to  
\begin{equation}\label{nphardness_average_eq01}
    A^{\rm avg}= C_1-C_2\cdot[(\alpha+1)\cdot k+|N_G(S_k)\cup V_1|]
\end{equation}
Since $C_2=C_1-C_2\cdot k\cdot (\alpha+1)$ in (\ref{nphardness_average_eq01}) is a constant, $ A^{\rm avg}$ decreases linearly with $|N_G(S_k)\cup V_1|$ (which indicates the number of nodes in $V_1$ that are associated with at least one selected node in $V_2$). Hence,  
\begin{equation}\label{eq1_nphard_average}
\begin{split}
&\arg\min\limits_{S_k\subseteq V_1\cup V_2\cup V_3, |S_k|=k}  A^{\rm avg}(S_k)\\
&=\arg\max\limits_{S_k\subseteq V_1\cup V_2\cup V_3, |S_k|=k}  |N_G(S_k)\cup V_1|,
\end{split}
\end{equation}
indicating that an optimal solution to our problem is exactly an optimal solution to the given set cover problem. 
\end{proof}

%
%
\section{Missed Proof in Section \ref{sec_peak_aoi}}
\subsection{Missed Proof in Proposition \ref{prop_def_three_values} }\label{appendix_prop_def_three_values} 
\begin{proof}
   Intuitively, Problem (\ref{obj_peak_mu})-(\ref{integerconstraint}) in Proposition \ref{prop_def_three_values} tries to minimize the earliest time (denoted as $T_1$) when all nodes on the diameter path $diam(G)$ is updated.

Suppose, \textit{w.l.o.g.}, that $T_1=1+(\underline{\mu}'-1)\Delta+\overline{\varsigma}'$, where $\underline{\mu}'$ and $\overline{\varsigma}'$ are integers and $0\leq \overline{\varsigma}'\leq \Delta-1$. This also implies that there are $\underline{\mu}'$ seeds selected before time $T_1$ to diffuse promotion information. 

We note that, by fine-tuning the allocations of seed candidates on the diameter path $diam(G)$, it could be guaranteed that different seeds (among those selected by time $T_1$) will diffuse their information to different user nodes on $diam(G)$ by time $T_1$. That is, we can manage to ensure that nodes on $diam(G)$ that are updated by different seeds by time $T_1$ are disjoint, which is discussed in Proposition \ref{prop_def_three_values} and later in this proof.

Next, we proceed with showing the feasibility of Problem (\ref{obj_peak_mu})-(\ref{integerconstraint}), followed by which we will show the optimality of our solution ($\underline{\mu},\overline{\varsigma}$) as presented in (\ref{eq01_prop2}) and (\ref{eq02_prop2}), respectively. 

\textbf{Problem Feasibility.}
  Since our algorithm only seeds on $diam(G)$, it is clear that each seed could update nodes on its both sides simultaneously. Thereby, by time $T_1$, each of the first $\underline{\mu}'$ seeds, say $s_i$, could update the following number of nodes. 
\begin{equation}\label{prop_2_eq01_temp}
    \underbrace{1}_{\rm i.e.,\; itself}+\underbrace{2[(\underline{\mu}'-i)\Delta+\overline{\varsigma}']}_{\rm i.e.,\; new\; nodes\; on\; diameter}
\end{equation}
By summing up (\ref{prop_2_eq01_temp}) over those $\underline{\mu}'$ seeds (that are selected by time $T_1$), we have the overall number of nodes updated by those $\underline{\mu}'$ seeds as follows

\begin{equation}
\begin{split}
&\sum\limits_{i=1}^{\underline{\mu}'} \bigg(1+2[(\underline{\mu}'-i)\Delta+\overline{\varsigma}']\bigg)\\&=\sum\limits_{j=1}^{\underline{\mu}'} \bigg(1+2(j-1)\Delta+2\overline{\varsigma}'\bigg)\\&=\underline{\mu}'+\Delta(\underline{\mu}'-1)\underline{\mu}'+2\overline{\varsigma}'\underline{\mu}' 
\end{split}
\end{equation}
in which the first equation is by letting $\underline{\mu}'-i=j-1$. This implies our constraint (\ref{enough_updates_constraint}), which guarantees that all nodes on $diam(G)$ are updated by time $T_1$.

\textbf{Solution Optimality}:
by relaxing constraint (\ref{integerconstraint}) to allow $\underline{\mu}'$ and $\overline{\varsigma}$ to be real number,
one can easily find an optimal solution as  
\begin{equation}\label{appendix_an_optimum_of_relaxted}
(\underline{\mu}^*,\overline{\varsigma}^*)=(\frac{\Delta-1+\sqrt{\Delta^2+2\Delta+4|diam(G)|\Delta+1}}{2\Delta},0)
\end{equation}
Further, by applying a rounding technique on (\ref{appendix_an_optimum_of_relaxted}), we get an optimal solution to the original Problem (\ref{obj_peak_mu})-(\ref{integerconstraint}) as in (\ref{eq01_prop2}) and (\ref{eq02_prop2}).

Finally, by carefully select seed candidate as  $\omega_{x}=v_{\zeta(x)}$ where sub-index of each $\omega_{x}$ follows
\begin{equation*}
    \zeta(x)=\Delta(\underline{\mu}-x+1)^2+\underline{\mu}-x+1+(2\underline{\mu}-2x+1)\overline{\varsigma}.
\end{equation*}
we could guarantee that nodes updated by different seeds by time $T_1= 1+(\underline{\mu}-1)\Delta+\overline{\sigma}$ do not overlap.  This completes the proof. 
\end{proof}
\subsection{Missed Proof in Lemma~\ref{theorem_peak_opt}}\label{appendix_theorem_peak_opt}
\begin{proof}
We will be showing that in a line-type social network, the peak AoI $A^{\rm peak}$ of any solution (including the optimal solution) follows
\begin{equation*}
    A^{\rm peak}\geq \max\{A_0+\overline{\varsigma},\underline{\xi}\Delta\}+1+\underline{\mu}\Delta,
\end{equation*}
where 
\begin{equation*}
    \xi=\frac{-(1+3\Delta)+\sqrt{((1+3\Delta))^2+4\Delta(1+2\underline{\mu}\Delta)}}{2\Delta}.
\end{equation*}

To this end, we first prove the following Lemma \ref{lemma_peak_linetype_01}.
\begin{lemma}\label{lemma_peak_linetype_01}
A line-type social network yields $A^{\rm peak}_{\rm OPT}\geq 1+A_0+\underline{\mu}\Delta+\overline{\varsigma}$.
\end{lemma}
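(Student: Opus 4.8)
The plan is to show that no seeding sequence can bring \emph{every} node of the line down to a refreshed state early enough, so that the staleness the last-served node accumulates before its first refresh already forces the claimed bound. Fix any admissible seeding $S_k=(s_1,\dots,s_k)$, the $j$-th seed being fired at $t_j=(j-1)\Delta+1$. By the first branch of the closed-form expression (\ref{eq_peak_vi}) in Theorem~\ref{lemma_aoi_peak_v_i}, $A^{\rm peak}(S_k)\ge \max_{v_i\in V}\{A_0+t_{i_1}+dist(s_{i_1},v_i)\}$, and by Lemma~\ref{prop_dicontinuity_01} the quantity $t_{i_1}+dist(s_{i_1},v_i)$ is exactly the first time instant at which node $v_i$ is refreshed. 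Hence it suffices to lower-bound $\tau^\star:=\max_{v_i\in V}\bigl(\text{first-refresh time of }v_i\bigr)$, the earliest time by which all nodes of the line have been reached at least once, and then $A^{\rm peak}_{\rm OPT}\ge A_0+\tau^\star$.

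First I would establish a packing bound on the line: a seed fired at $t_j$ can have reached, by time $\tau$, only nodes within graph-distance $\tau-t_j$ of it, and on the line topology (\ref{def_linear_network}) a ball of radius $\rho$ contains at most $2\rho+1$ nodes. Writing $\mu'$ for the number of seeds fired by time $\tau$ and $\tau=1+(\mu'-1)\Delta+r$ with $0\le r<\Delta$, the number of distinct refreshed nodes is at most $\sum_{j=1}^{\mu'}\bigl(1+2(\tau-t_j)\bigr)=\mu'+\Delta(\mu'-1)\mu'+2r\mu'$, which is exactly the left-hand side of the feasibility constraint (\ref{enough_updates_constraint}) of the program in Proposition~\ref{prop_def_three_values}. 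Hence, if all $|diam(G)|+1$ nodes are already refreshed by time $\tau$, the pair $(\mu',r)$ is feasible for that program, so its optimal value cannot exceed the objective value $1+(\mu'-1)\Delta+r=\tau$. Comparing with the closed form (\ref{eq01_prop2})--(\ref{eq02_prop2}) of the optimizer $(\underline\mu,\overline\varsigma)$ then turns this into a lower bound on $\tau^\star$, and feeding it back into (\ref{eq_peak_vi}) gives $A^{\rm peak}_{\rm OPT}\ge A_0+\tau^\star$.

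The delicate step — which I expect to be the main obstacle — is squeezing the packing count so that the resulting offset is exactly $\underline\mu\Delta+\overline\varsigma$ rather than the slightly smaller offset that a bare ball-volume count produces: the estimate $2\rho+1$ is tight at both extreme nodes $v_1$ and $v_{|diam(G)|+1}$ only if a single seed already covers the entire path, so for a line one must split the path at the last-served node, double-count the two half-lines, and argue that the freshest contributing seed loses its ``$+1$'' on one side. This has to be reconciled with the ceiling in the definition (\ref{eq02_prop2}) of $\overline\varsigma$ and with the integrality of the first-refresh times; everything else — reading off the first branch of (\ref{eq_peak_vi}) and quoting the optimality of $(\underline\mu,\overline\varsigma)$ from Proposition~\ref{prop_def_three_values} — is routine bookkeeping.
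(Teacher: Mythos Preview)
Your approach is essentially the paper's: both reduce to lower-bounding $\tau^\star$, the earliest time every node on the line has been refreshed once, and both obtain this via a ball-packing count of how many line nodes the first several seeds can reach. The paper carries out the count directly in two cases (according to whether $\mu$ is an integer), using that by time $1+\underline{\mu}\Delta+\overline{\varsigma}$ there are $\underline{\mu}+1$ seeds in play, summing $\sum_{i=1}^{\underline{\mu}+1}\bigl(1+2[(\underline{\mu}+1-i)\Delta+\overline{\varsigma}]\bigr)$, and comparing to $n$; it never routes through Proposition~\ref{prop_def_three_values}.

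Your ``delicate step'' worry arises because Proposition~\ref{prop_def_three_values}'s optimal value is $1+(\underline{\mu}-1)\Delta+\overline{\varsigma}$ while the lemma needs $1+\underline{\mu}\Delta+\overline{\varsigma}$, an apparent off-by-$\Delta$. But this is not a genuine geometric obstruction to be cured by an endpoint/half-line argument: it reflects that your parametrisation counts $\mu'$ seeds by time $\tau=1+(\mu'-1)\Delta+r$, whereas the paper's proof counts $\underline{\mu}+1$ seeds by time $1+\underline{\mu}\Delta+\overline{\varsigma}$. Writing the packing with one extra seed (the one fired at exactly $1+\underline{\mu}\Delta$) and comparing to $n$ directly, as the paper does, closes the gap without any boundary surgery. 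Your proposed fix is therefore more elaborate than what is actually needed.
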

\begin{proof} [\textbf{Proof of Lemma \ref{lemma_peak_linetype_01}}]
To prove Lemma \ref{lemma_peak_linetype_01}, it suffices to show that, in a line-type network, the time when every node is updated at least once is not earlier than ($1+\underline{\mu}\Delta+\overline{\varsigma}$). Denote $T_1$ as the earliest time when each node on a line-type social network is updated at least once. In a line-type social network, each seed could update nodes on both sides simultaneously. Before time $1+\underline{\mu}\Delta$, there are at most $\underline{\mu}$ seeds $S_{\underline{\mu}}=\{s_1,...,s_{\underline{\mu}}\}$ selected to disseminate information updates. By time $1+\underline{\mu}\Delta$, each seed $s_i\in S_{\underline{\mu}}$ could update at most $1+2\Delta\cdot(\underline{\mu}+1-i)$ nodes including $s_i$ itself. Totally, the number of nodes that seeds in $S_{\underline{\mu}}$ could update by time $1+\underline{\mu}\Delta$ can be written as
\begin{equation}\label{eq01_earliesttime_peak_si1}
\begin{split}
&\sum\limits_{i=1}^{\underline{\mu}} 1+2\Delta(\underline{\mu}+1-i)\\&=\underline{\mu}+\Delta(\underline{\mu}+1)\underline{\mu}\\
&\leq \mu+\Delta(\mu+1)\mu\\
&=n.
    \end{split}
\end{equation}
in which the last equation is due to (\ref{eq01_prop2}).

On the other hand, we know that there are $\overline{\mu}$ seeds in $S_{\overline{\mu}}=\{s_1,...,s_{\overline\mu}\}$ selected by time $1+\overline{\mu}\Delta$. Accordingly, the number of nodes that seeds in $S_{\overline{\mu}}$ update by time $1+\overline{\mu}\Delta$ follows
\begin{equation}\label{eq02_earliesttime_peak_si1}
    \begin{split}
&\sum\limits_{i=1}^{\overline{\mu}} 1+2\Delta(\overline{\mu}+1-i)\\
&=\overline{\mu}+\Delta(\overline{\mu}+1)\overline{\mu}\\
&\geq \mu+\Delta(\mu+1)\mu\\
&=n.
    \end{split}
\end{equation}
By (\ref{eq01_earliesttime_peak_si1}) and (\ref{eq02_earliesttime_peak_si1}), we know that 
\begin{equation}
T_1\in[1+\underline{\mu}\Delta,1+\overline{\mu}\Delta].
\end{equation}
To further figure out the exact $T_1$, we discuss two cases. 

\noindent
 \textbf{Case 1.} $\mu$ is an integer. 
 
Then, $\underline{\mu}=\overline{\mu}=\mu$, yielding that 
$T_1=1+\underline{\mu}\Delta$. Since 
$\mu+\Delta(\mu+1)\mu=n$, we have 
\begin{equation}
    \overline{\varsigma}=\left \lceil \frac{n-(\underline{\mu}\Delta +1)(\underline{\mu}+1)}{2(\underline{\mu}+1)} \right \rceil=0.
\end{equation}
In other words, $T_1=1+\underline{\mu}\Delta+\overline{\varsigma}$.

\noindent
\textbf{Case 2.} $\mu$ is not an integer.

We have $\sum\limits_{i=1}^{\underline{\mu}} 1+2\Delta(\underline{\mu}+1-i)<n$, implying that $T_1>1+\underline{\mu}\Delta$. This tells that there are $\underline{\mu}+1$ seeds $S_{\underline{\mu}+1}$ selected by time $T_1$. By time $1+\underline{\mu}\Delta+\underline{\varsigma}$, each seed $s_i\in S_{\underline{\mu}+1}$ could update at most $1+2[(\underline{\mu}+1-i)\Delta+\underline{\varsigma}]$ nodes, including $s_i$ itself. Then, we know, on one hand, that the number of nodes that seeds in $S_{\underline{\mu}+1}$ could update by time $1+\underline{\mu}\Delta+\underline{\varsigma}$ can be written as
\begin{equation}\label{eq03_earliesttime_peak_si1}
\begin{split}
&\sum\limits_{i=1}^{\underline{\mu}+1}1+2[(\underline{\mu}+1-i)\Delta+\underline{\varsigma}]\\
&=\underline{\mu}+1+2\underline{\varsigma}(1+\underline{\mu})+\Delta(\underline{\mu}+1)\underline{\mu}\\
&\leq \underline{\mu}+1+2\varsigma(1+\underline{\mu})+\Delta(\underline{\mu}+1)\underline{\mu}=n.
\end{split}
\end{equation}
On the other hand, by time $1+\underline{\mu}\Delta+\overline{\varsigma}$, the number of nodes that seeds in  $S_{\underline{\mu}+1}$ could update can be written as
\begin{equation}\label{eq04_earliesttime_peak_si1}
    \begin{split}
&\sum\limits_{i=1}^{\underline{\mu}+1}1+2[(\underline{\mu}+1-i)\Delta+\overline{\varsigma}]\\
&=\underline{\mu}+1+2\overline{\varsigma}(1+\underline{\mu})+\Delta(\underline{\mu}+1)\underline{\mu}\\
&\geq \underline{\mu}+1+2\varsigma(1+\underline{\mu})+\Delta(\underline{\mu}+1)\underline{\mu}=n.
\end{split}
\end{equation}
To update every node in a line-type network at least once,  (\ref{eq03_earliesttime_peak_si1}) and (\ref{eq04_earliesttime_peak_si1}) tells the earliest time is 
\begin{equation}  T_1=1+\underline{\mu}\Delta+\overline{\varsigma}.
\end{equation}
Therefore, we have
\begin{equation}
\begin{split}
A^{\rm peak}\geq \max\limits_{v_i\in V}\{ A_0+T_1\}\geq A_0+1+\underline{\mu}\Delta+\overline{\varsigma},
\end{split}
\end{equation}
in which the first inequality holds by Theorem~\ref{lemma_aoi_peak_v_i}.  Lemma  \ref{lemma_peak_linetype_01} holds readily.
\end{proof}

Further, we will be showing that $A^{\rm peak}\geq 1+\underline{\mu}\Delta+\underline{\xi}\Delta$. Denote, for ease of exposition, $num_{t}^{\geq x}$ and $num_{t}^{\leq x}$ as the number of nodes whose AoI at time $t$ are no less than and no more than $x$, respectively. Particularly, $num_{t}^{x}$ denotes the number of nodes whose AoI at time $t$ are exactly $x$. To prove Lemma \ref{theorem_peak_opt}, we have the following lemma to serve as a key ingredient.
\begin{lemma}\label{lemma_peak_linetype_02}
At time $1+\underline{\mu}\Delta$, any solution to our peak AoI minimization problem admits
\begin{itemize}
    \item $num_{1+\underline{\mu}\Delta}^{1+i\Delta}\leq 1+2i\Delta$ holds for any $i\in\{1,...,\underline{\mu}\}$,
    \item $num_{1+\underline{\mu}\Delta}^{\geq 1+(\underline{\mu}-j)\Delta}\geq (j+1)(1+2\Delta\underline{\mu}-j\Delta )$ holds for any $j\in\{0,...,\underline{\mu}-1\}$.
\end{itemize}
\end{lemma}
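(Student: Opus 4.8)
The plan is to exploit the fact that, at time $1+\underline{\mu}\Delta$, only the seeds $s_1,\dots,s_{\underline{\mu}+1}$ have been released (at timestamps $1,1+\Delta,\dots,1+\underline{\mu}\Delta$), which quantizes the attainable AoI values. First I would establish the structural observation that, at time $1+\underline{\mu}\Delta$, a node whose freshest received update originates from seed $s_x$ has AoI exactly $1+(\underline{\mu}-x+1)\Delta$ (the $x$-th update is born at $t_x=1+(x-1)\Delta$ and ages to $1+(1+\underline{\mu}\Delta-t_x)=1+(\underline{\mu}-x+1)\Delta$), whereas a node never reached by any seed has AoI $A_0+1+\underline{\mu}\Delta$. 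Consequently, for $i\in\{0,1,\dots,\underline{\mu}\}$ the nodes with AoI exactly $1+i\Delta$ are precisely those whose freshest update stems from the \emph{single} seed $s_{\underline{\mu}-i+1}$, and every node with AoI $<1+(\underline{\mu}-j)\Delta$ is one whose freshest update stems from one of the $\underline{\mu}-j$ most recent seeds $s_{j+2},\dots,s_{\underline{\mu}+1}$ (a never-updated node cannot qualify, since $A_0+1+\underline{\mu}\Delta\ge 1+(\underline{\mu}-j)\Delta$).

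For the first bullet, by time $1+\underline{\mu}\Delta$ the seed $s_{\underline{\mu}-i+1}$ has been propagating for $1+\underline{\mu}\Delta-t_{\underline{\mu}-i+1}=i\Delta$ slots, and in a line-type network a source can have reached at most $i\Delta$ nodes on each side plus itself, i.e.\ at most $1+2i\Delta$ nodes in total; since the nodes with AoI exactly $1+i\Delta$ form a subset of that reachable set, $num_{1+\underline{\mu}\Delta}^{1+i\Delta}\le 1+2i\Delta$ (and the case $i=0$ is trivial, only $s_{\underline{\mu}+1}$ itself having AoI $1$).

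For the second bullet I would argue by complementary counting. The $\underline{\mu}-j$ most recent seeds $s_{j+2},\dots,s_{\underline{\mu}+1}$ have, at time $1+\underline{\mu}\Delta$, been propagating for $(\underline{\mu}-j-1)\Delta,(\underline{\mu}-j-2)\Delta,\dots,\Delta,0$ slots, so summing the per-seed reach bound $1+2\ell\Delta$ over $\ell=0,\dots,\underline{\mu}-j-1$ yields $num_{1+\underline{\mu}\Delta}^{<1+(\underline{\mu}-j)\Delta}\le(\underline{\mu}-j)\bigl(1+\Delta(\underline{\mu}-j-1)\bigr)$, and hence
\[
num_{1+\underline{\mu}\Delta}^{\ge 1+(\underline{\mu}-j)\Delta}\ \ge\ n-(\underline{\mu}-j)\bigl(1+\Delta(\underline{\mu}-j-1)\bigr).
\]
It then remains to verify that this lower bound is at least $(j+1)(1+2\Delta\underline{\mu}-j\Delta)$. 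A short algebraic identity shows that $(\underline{\mu}-j)\bigl(1+\Delta(\underline{\mu}-j-1)\bigr)+(j+1)(1+2\Delta\underline{\mu}-j\Delta)$ collapses to $(\underline{\mu}+1)(1+\Delta\underline{\mu})$, so the whole second bullet reduces to the single inequality $n\ge(\underline{\mu}+1)(1+\Delta\underline{\mu})$, which I would close using the defining relation for $\underline{\mu}$ in Proposition~\ref{prop_def_three_values} (equation~(\ref{eq01_prop2})), i.e.\ the same estimate $\underline{\mu}+\Delta(\underline{\mu}+1)\underline{\mu}\le n$ already used in the proof of Lemma~\ref{lemma_peak_linetype_01}.

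The step I expect to be the main obstacle is precisely this last reduction: the counting bound in the second bullet is essentially tight, so the argument must use the exact choice of $\underline{\mu}$ from~(\ref{eq01_prop2}) (and, where $\mu$ fails to be integral, the rounding term $\overline{\varsigma}$ from~(\ref{eq02_prop2})), and the borderline case where $\underline{\mu}$ nearly saturates the feasibility constraint~(\ref{enough_updates_constraint}) must be handled carefully. The first bullet, by contrast, is a routine line-network reachability estimate.
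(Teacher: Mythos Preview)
Your approach is essentially the paper's: the first bullet via the per-seed reach bound in a line network, the second via complementary counting and then the defining relation for $\underline{\mu}$.

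One point is worth flagging. In the complementary count you include the seed $s_{\underline{\mu}+1}$ released at the instant $1+\underline{\mu}\Delta$ (your sum runs over $\ell=0,\dots,\underline{\mu}-j-1$), whereas the paper omits it (their sum is over $i=1,\dots,\underline{\mu}-j-1$, with the stated list of possible AoI values $\{1+\Delta,\dots,1+\underline{\mu}\Delta,A_0+1+\underline{\mu}\Delta\}$ excluding the value $1$). Your algebraic collapse to $(\underline{\mu}+1)(1+\Delta\underline{\mu})$ is correct, but the resulting requirement $n\ge(\underline{\mu}+1)(1+\Delta\underline{\mu})$ is exactly one unit stronger than the paper's $n\ge\underline{\mu}+\Delta\underline{\mu}(\underline{\mu}+1)$, and in the borderline case where the real root $\mu$ is an integer (so $\underline{\mu}=\mu$ and $n=\mu+\Delta\mu(\mu+1)$) your inequality fails. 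The paper avoids this by not working directly with $\underline{\mu}$ at the last step: it inserts $\underline{\mu}\le\mu$, uses the exact identity $n=\mu+\Delta\mu(\mu+1)$ to cancel, and then uses $\mu\ge\underline{\mu}$ once more to land on $(j+1)(1+2\Delta\underline{\mu}-j\Delta)$. Either drop the $\ell=0$ term or route the final step through the real $\mu$ as the paper does, and your argument closes.
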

\begin{proof} [\textbf{Proof of Lemma \ref{lemma_peak_linetype_02}}]
To begin with, one can easily find that, at time $1+\underline{\mu}\Delta$,  a node with AoI ($1+i\Delta$) is updated by the ($\underline{\mu}+1-i$)th selected seed instead of any of the following set 
\begin{equation}
    \{S_{\underline{\mu}},...,S_{\underline{\mu}+1-i}\}.
\end{equation}
Up to time $1+\underline{\mu}\Delta$, note that the ($\underline{\mu}+1-i$)th selected seed could update at most $1+2i\Delta$ nodes, including the seed itself. This implies, for any $i\in\{1,...,\underline{\mu}\}$, that 
\begin{equation}\label{element_peak_lemmaproof_01}
num_{1+\underline{\mu}\Delta}^{1+i\Delta}\leq 1+2i\Delta.
\end{equation}
Due to Lemma~\ref{lemma_peak_linetype_01},  there are still some nodes that are not updated by time $1+\underline{\mu}\Delta$, yielding   
\begin{equation}
num_{1+\underline{\mu}\Delta}^{A_0+1+\underline{\mu}\Delta}>0,
\end{equation}
which tells that the AoI of those nodes at time $1+\underline{\mu}\Delta$ becomes $A_0+1+\underline{\mu}\Delta$.

For each $j\in\{0,...,\underline{\mu}-1\}$, we have
\begin{equation}
\begin{split}
&num_{1+\underline{\mu}\Delta}^{\geq 1+(\underline{\mu}-j)\Delta}\\
&=n-num_{1+\underline{\mu}\Delta}^{\leq 1+(\underline{\mu}-j-1)\Delta}\\
&=n-\sum\limits_{i=1}^{\underline{\mu}-j-1}num_ {1+\underline{\mu}\Delta}^{1+i\Delta}\\
&\geq n-[\underline{\mu}-j-1+\Delta(\underline{\mu}-j-1)(\underline{\mu}-j)]\\
&\geq n-[\mu-j-1+\Delta(\mu-j-1)(\mu-j]\\
&=j+1+2\Delta[(j+1)\mu-\frac{j(j+1)}{2}]\\
&\geq j+1+2\Delta[(j+1)\underline{\mu}-\frac{j(j+1)}{2}]\\
&=(j+1)(1+2\Delta\underline{\mu}-j\Delta ),
\end{split}
\end{equation}
in which the first two equations hold because, at time $1+\underline{\mu}\Delta$, the AoI of any node can be found in $\{1+\Delta,1+2\Delta,...,1+\underline{\mu}\Delta,A_0+1+\underline{\mu}\Delta\}$, the third equation holds by $\mu+(\mu^2+\mu)\Delta=n$, the first inequality holds by (\ref{element_peak_lemmaproof_01}), and the last two inequalities hold since $\mu\geq \underline{\mu}$. 
This concludes Lemma \ref{lemma_peak_linetype_02}.
\end{proof}
Now, we proceed with proving Lemma 
~\ref{theorem_peak_opt}. Denote $\Phi$ as the set of nodes whose AoI is no less than $1+\underline{\mu}\Delta$ at time $1+\underline{\mu}\Delta$. We look at the earliest time (denoted as $T_2$)  when each node is updated at least twice. Clearly, $T_2$ is strictly larger than time $1+\underline{\mu}\Delta$.  
By time $T_2$, suppose there are $y$ new seeds that are selected after time $T_1$, implying by our model that $T_2\in [1+(y+\underline{\mu})\Delta,1+(1+y+\underline{\mu})\Delta)$. In other words, the peak AoI in time period $[1+\underline{\mu}\Delta,T_2]$ follows 
\begin{equation}
    A^{\rm peak}\geq 1+(\underline{\mu}+y)\Delta.
\end{equation}
Due to Lemma~\ref{lemma_peak_linetype_02}, we have \begin{equation}\label{theorem_peak_opt_eq01}
|\Phi|=num_{1+\underline{\mu}\Delta}^{\geq 1+(\underline{\mu})\Delta}\geq 1+2\Delta\underline{\mu}.
\end{equation}
Observe, in $\Phi$, that there are at least $1+2\Delta\underline{\mu}$ nodes that are head-to-tail connected. Besides being updated by those new seeds selected in the time window $[1+\underline{\mu}\Delta,T_2]$, nodes in $\Phi$ could also be updated by other nodes with smaller AoI than ($1+\underline{\mu}\Delta$) at the time ($1+\underline{\mu}\Delta$). This implies
\begin{equation}
\begin{split}
     &2\overline{\xi}\Delta +\sum\limits_{i=1}^{\overline{\xi}} (1+2i\Delta)\\
     &= (2\Delta+1)\overline{\xi}+\Delta (\overline{\xi}+1)\overline{\xi}\\
     &\geq (2\Delta+1)\xi+\Delta (\xi+1)\xi = 1+2\underline{\mu}\Delta,
\end{split}
\end{equation}
and 
\begin{equation}
\begin{split}
     &2\underline{\xi}\Delta +\sum\limits_{i=1}^{\underline{\xi}} (1+2i\Delta) \\
     &= (2\Delta+1)\underline{\xi}+\Delta (\underline{\xi}+1)\underline{\xi}\\
     &\leq (2\Delta+1)\xi+\Delta (\xi+1)\xi = 1+2\underline{\mu}\Delta.
\end{split}
\end{equation}
Hence, at time $1+(\underline{\mu}+\underline{\xi})\Delta$, there are still some node in $\Phi$ that has not been updated since time $1+\underline{\mu}\Delta$, i.e., 
\begin{equation}
    A^{\rm peak}_{\rm OPT}\geq 1+(\underline{\mu}+\underline{\xi})\Delta.
\end{equation}
Together with Lemma~\ref{lemma_peak_linetype_01}, we have 
\begin{equation}
    A^{\rm peak}_{\rm OPT}\geq \max\{A_0+\overline{\varsigma},\underline{\xi}\Delta\}+1+\underline{\mu}\Delta.
\end{equation}
This completes proving Lemma~\ref{theorem_peak_opt}.
\end{proof}
\subsection{Missed Proof in Lemma \ref{reduce_generagraph_lemma}}\label{apppendix_reduce_generagraph_lemma}
\begin{proof}
    To prove this lemma, it suffices to show that
Algorithm~\ref{peak_cycliselection_alg} performance on any given graph $G$ is no better than the performance on its corresponding histogram-type graph $H(G)$.   In view of Theorem \ref{general_peak_approximation} and Figure \ref{histogram_social_graph}, Algorithm~\ref{peak_cycliselection_alg}  indeed reduces edges that connect those nodes outside the diameter $diam(G)$, which highly decelerates the information diffusion since Algorithm~\ref{peak_cycliselection_alg} only seeds on its diameter $diam(G)$. This concludes the proof readily. 
\end{proof}

\subsection{Missed Proof in Theorem  \ref{general_peak_approximation}}\label{appendix_general_peak_approximation}

We first prove one part of the theorem as summarized in the following lemma.
\begin{lemma}\label{lemma_histogram_feature}
Algorithm \ref{graph_reducing_alg} reduces a general graph $G(V,E)$ to a histogram-type $H(G)=(V_H,E_H)$.
\end{lemma}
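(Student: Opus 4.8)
The plan is to establish the lemma in two short moves: first observe that Algorithm~\ref{graph_reducing_alg} always returns a tree, and then observe that \emph{every} tree is histogram-type with respect to (any choice of) its own diameter path, so that inequality~(\ref{eq_lemma_histogram_feature}) holds automatically at every vertex.

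For the first move, I would note that the algorithm initializes $(V_H,E_H)$ with $diam(G)$, which is a shortest path between two vertices of $G$ and hence a \emph{simple} path ($|diam(G)|$ edges on $|diam(G)|+1$ vertices, no repeated vertex). Each iteration of the loop then adds exactly one vertex $v_i\in W$ together with exactly one new edge $(v_i,v_i^{*})$, where $v_i^{*}$ is already in $V_H$; so connectivity is preserved and the new vertex has degree~$1$, creating no cycle. Consequently the final graph satisfies $|E_H|=|V_H|-1$ and is connected, i.e. $H(G)$ is a tree. (The tie-breaking rule and the specific choice of $v_i^{*}$ are irrelevant to this step; they only determine \emph{which} tree is produced.)

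For the second move, fix a diameter path $diam(H)$ of $H(G)$ with left- and right-most vertices $v_\vdash,v_\dashv$, take an arbitrary $v\in V_H$, and let $v_\perp\in diam(H)$ be the vertex of $diam(H)$ closest to $v$, so that $\min_{v'\in diam(H)}dist(v',v)=dist(v,v_\perp)$. Since $H(G)$ is a tree, $v_\perp$ is the first vertex of $diam(H)$ met on the (unique) path from $v$ to any vertex of $diam(H)$; in particular the $v$--$v_\dashv$ path passes through $v_\perp$, whence $dist(v,v_\dashv)=dist(v,v_\perp)+dist(v_\perp,v_\dashv)$, and likewise $dist(v,v_\vdash)=dist(v,v_\perp)+dist(v_\perp,v_\vdash)$. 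Combining $dist(v,v_\dashv)\le diam(H)=dist(v_\vdash,v_\dashv)=dist(v_\vdash,v_\perp)+dist(v_\perp,v_\dashv)$ gives $dist(v,v_\perp)\le dist(v_\vdash,v_\perp)$, and symmetrically (using $dist(v,v_\vdash)\le diam(H)$) $dist(v,v_\perp)\le dist(v_\perp,v_\dashv)$; together these are exactly~(\ref{eq_lemma_histogram_feature}). I should also note that this same argument, applied to both endpoints of a tree-path joining two arbitrary vertices of $H(G)$, bounds that path length by $diam(H)$, which is consistent with $diam(H)$ being a genuine diameter even though it may differ from $diam(G)$ once hanging branches are absorbed.

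There is no serious obstacle here; the only care needed is in the two elementary tree facts — that Algorithm~\ref{graph_reducing_alg} preserves acyclicity and connectivity, and that in a tree the projection of a vertex onto a geodesic lies on the geodesic from that vertex to either endpoint. The conceptual point worth stating explicitly is that ``histogram-type'' is not an artefact of the reduction procedure but a structural feature enjoyed by every tree relative to its diameter path, so the \emph{only} thing one really has to verify about Algorithm~\ref{graph_reducing_alg} is that its output is a tree.
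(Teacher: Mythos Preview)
Your proposal is correct and takes a cleaner route than the paper. The paper's own proof is terse: it asserts that $diam(G)$ remains a diameter path of $H(G)$, notes $V_H=V$, and then says inequality~(\ref{eq_lemma_histogram_feature}) must hold lest it contradict that $diam(G)$ is the diameter of $H(G)$. Your approach instead (i) shows that the output is necessarily a tree --- immediate from ``start with a path, then add one new vertex and one new edge per iteration'' --- and (ii) proves the general structural fact that \emph{every} tree satisfies~(\ref{eq_lemma_histogram_feature}) relative to any of its own diameter paths, via the standard tree-projection argument. This is more explicit than the paper (which leaves both the tree structure and the projection step implicit in its one-line contradiction) and also more robust: you do not rely on the claim that $diam(G)$ and $diam(H)$ coincide, which can in fact fail (e.g.\ on a $4$-cycle the algorithm attaches the fourth vertex as a pendant and returns a path of length~$3$, not~$2$). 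Isolating ``histogram-type is a property of all trees'' as you do is a genuine conceptual gain over the paper's argument.
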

\begin{proof}[Proof of Lemma \ref{lemma_histogram_feature}]
    To start with, the initialization step of Algorithm \ref{graph_reducing_alg} tells us that $diam(G)$ still remains a diameter path in the histogram-type graph $H(G)$. Further, with 
 the for-loop of Algorithm \ref{graph_reducing_alg},  it is easy to verify that every node in the given graph $G(V,E)$ is included in the node set of the histogram-type graph $H(G)$, i.e., $V_H=V$.

    Denote $v_{\vdash}$ and $v_{\dashv}$ denote the left- and the right-most nodes on the path $diam(G)$, respectively. For each node $v\in V$, denote $v_{\perp}$ as the node on $diam(G)$ closest to $v$. Then, Inequality (\ref{eq_lemma_histogram_feature}) holds readily as otherwise, it will contradict the fact that $diam(G)$ is the diameter of both $G$ and $H(G)$.
    This proves the lemma. 
\end{proof}

Then, our approximation holds steadily with Lemmas \ref{theorem_peak_opt} and~\ref{reduce_generagraph_lemma}. For formatting purposes, our remaining appendices can be found on the following page.

\section{Missed Proofs in Section~\ref{sec_average_aoi}}
\subsection{Missed Proof in Proposition~\ref{corollary_average_Aoi_over_the network}}\label{appendix_proof_for_reformulated_avgaoi}
\begin{proof}
To prove Proposition~\ref{corollary_average_Aoi_over_the network}, we first give the following lemma
\begin{lemma}\label{corollary_average_aoi}
Given $S_k$ of selected seeds, we have 
\begin{equation}\label{corollary_average_aoi_eq}
\begin{split}
       A^{\rm avg}_i=&\eta+\frac{A_0}{T}[\Delta i_1+dist(s_{i_1},v_i)]\\
       &+\frac{1}{T}\sum\limits_{j=1}^{k_i}[\Delta \cdot dist(s_{i_j},v_i)\cdot (i_j-i_{j-1})]\\
       &+\frac{1}{T}\sum\limits_{j=1}^{k_i}[0.5\Delta^2\cdot (i_j-i_{j-1})^2]
       \end{split}
\end{equation}
where $U_i=\{t_{i_j}+dist(s_{i_j},v_i)|i_j\in\{i_1,...,i_{k_i}\}\}$ can be obtained by Algorithm~\ref{DiscontinuityPoint}, $i_0=1$, and
%
\begin{equation}\label{eq_constant_averageaoi}
    \begin{split}
    \eta =& \frac{1}{2T}[2A_0-2A_0\Delta +\Delta^2-2\Delta+2k\Delta-2k\Delta^2+k^2\Delta^2]\\
&+\frac{T}{2}+(1-k)\Delta
    \end{split}
\end{equation}
\end{lemma}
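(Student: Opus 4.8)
The plan is to derive Lemma~\ref{corollary_average_aoi} as a purely algebraic rewriting of the single-node form of Theorem~\ref{avgaoi_formulation_lemma}, namely $A_i^{\rm avg}=\frac{1}{T}\sum_{j=1}^{k_i+1}\frac{(2A_{i,j-1}+\Lambda_{ij})\Lambda_{ij}}{2}$, after plugging in the explicit seeding timestamps $t_{i_j}=(i_j-1)\Delta+1$. The device that makes the computation short is the identity $\frac{(2A+\Lambda)\Lambda}{2}=\frac{1}{2}\big[(A+\Lambda)^2-A^2\big]$, which turns the objective into a telescoping-friendly shape,
\begin{equation*}
2T\,A_i^{\rm avg}=\sum_{j=1}^{k_i+1}\Big[(A_{i,j-1}+\Lambda_{ij})^2-A_{i,j-1}^2\Big],
\end{equation*}
where $A_{i,j-1}+\Lambda_{ij}$ is simply the AoI value of $v_i$ at the right end of its $j$-th linear piece in (\ref{eq_functionof_vi}).

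Next I would fix the shorthand $d_j:=dist(s_{i_j},v_i)$ and $\delta_j:=i_j-i_{j-1}$ with the stated convention $i_0=1$ (so $\delta_1=i_1-1$ and, since $i_{k_i}=k$, $\sum_{j=1}^{k_i}\delta_j=k-1$). Substituting $t_{i_j}=(i_j-1)\Delta+1$ into the piecewise definition of $\Lambda_{ij}$ in Theorem~\ref{avgaoi_formulation_lemma} and using $A_{ij}=1+d_j$ from (\ref{formulation_Aij}), a direct check gives the right-endpoint values $B_j:=A_{i,j-1}+\Lambda_{ij}$ as $B_1=A_0+(1+d_1+\delta_1\Delta)$, $B_j=1+d_j+\delta_j\Delta$ for $2\le j\le k_i$, and $B_{k_i+1}=T-(k-1)\Delta$. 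Writing $C_j:=1+d_j+\delta_j\Delta$, so that $B_1=A_0+C_1$ and $B_j=C_j$ for the middle indices, and using $\sum_{j=1}^{k_i+1}A_{i,j-1}^2=A_0^2+\sum_{j=1}^{k_i}(1+d_j)^2$, the $A_0^2$ contributions cancel and one is left with
\begin{equation*}
2T\,A_i^{\rm avg}=2A_0C_1+\sum_{j=1}^{k_i}\big[C_j^2-(1+d_j)^2\big]+\big(T-(k-1)\Delta\big)^2.
\end{equation*}

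Finally I would expand $C_j^2-(1+d_j)^2=\delta_j\Delta\big[2(1+d_j)+\delta_j\Delta\big]=2\Delta\delta_j+2\Delta\delta_jd_j+\Delta^2\delta_j^2$, sum over $j$ using $\sum_j\delta_j=k-1$, and expand $2A_0C_1=2A_0(\Delta i_1+d_1)+2A_0(1-\Delta)$ via $\delta_1\Delta=(i_1-1)\Delta$. Dividing by $2T$, the part depending on $S_k$ collapses to exactly the three terms $\frac{A_0}{T}(\Delta i_1+d_1)$, $\frac{\Delta}{T}\sum_{j=1}^{k_i}\delta_jd_j$ and $\frac{\Delta^2}{2T}\sum_{j=1}^{k_i}\delta_j^2$ of (\ref{corollary_average_aoi_eq}), while the instance-constant remainder $\frac{1}{2T}\big[2A_0(1-\Delta)+2\Delta(k-1)+(T-(k-1)\Delta)^2\big]$ equals the claimed $\eta$ of (\ref{eq_constant_averageaoi}) (equivalently (\ref{eta_formulation})) after expanding $(T-(k-1)\Delta)^2=T^2-2T(k-1)\Delta+(k-1)^2\Delta^2$ and regrouping. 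I expect the only real work to be this bookkeeping; the point that needs care is that the first piece ($j=1$, the one carrying $A_0$, with $\Lambda_{i1}=1+(i_1-1)\Delta+d_1$) and the last piece ($j=k_i+1$, with $\Lambda_{i,k_i+1}=T-1-(k-1)\Delta-d_{k_i}$ and ending at $T$) follow formulas different from the generic middle pieces, while the degenerate case $k_i=1$ (empty middle sum, $\delta_1=k-1$) is subsumed by the same identities.
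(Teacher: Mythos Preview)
Your proof is correct, and in fact cleaner than the paper's own derivation. The paper (Appendix~C.1, equations (\ref{large_01eq})--(\ref{large_03eq})) proceeds by brute-force expansion: it writes out each product $(2A_{i,j-1}+\Lambda_{ij})\Lambda_{ij}$ for $j=2,\dots,k_i$ as three separate pieces $P_{j1},P_{j2},P_{j3}$, sums them, identifies the telescoping parts (e.g.\ the $dist^2(s_{i_j},v_i)-dist^2(s_{i_{j-1}},v_i)$ terms) by hand, and then separately tacks on the $j=1$ and $j=k_i+1$ boundary contributions before regrouping into the labelled terms $Q_{1x},Q_{2x}$. Your route via the identity $(2A+\Lambda)\Lambda=(A+\Lambda)^2-A^2$ collapses all of this into a single step: the quantity $A_{i,j-1}+\Lambda_{ij}$ is exactly the right-endpoint AoI value $B_j$ of the $j$-th piece in (\ref{eq_functionof_vi}), so the entire sum becomes $\sum_j B_j^2-\sum_j A_{i,j-1}^2$, from which the $A_0^2$ cancels and the remaining $\sum_j[C_j^2-(1+d_j)^2]$ factors uniformly. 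What the paper's longer computation buys is that one can read off intermediate identities (such as the telescoped $\sum P_{j3}=dist^2(s_{i_{k_i}},v_i)-dist^2(s_{i_1},v_i)$) directly; what your approach buys is a one-line structural reason for why everything telescopes, and no special-casing of the middle block versus the boundary terms beyond the single observation $B_1=A_0+C_1$.
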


In the following, we first prove Lemma~\ref{corollary_average_aoi}, after which we will show that Proposition~\ref{corollary_average_Aoi_over_the network} holds readily.
We first consider the $\sum\limits_{j=2}^{k_i} (2A_{i{j-1}}+\Lambda_{ij})\cdot \Lambda_{ij}$ as analyzed in (\ref{large_01eq}). Then, we can obtain $\sum\limits_{j=1}^{k_i+1} (2A_{i{j-1}}+\Lambda_{ij})\cdot \Lambda_{ij}$ as analyzed in (\ref{corollary_aoi_formulation_avg_eq01}), which further implies $A_i^{\rm avg}$ as discussed in (\ref{large_03eq}).
Due to formatting reasons, (\ref{corollary_aoi_formulation_avg_eq01}) and (\ref{large_03eq}) are given on the following page. This concludes the proof of Lemma~\ref{corollary_average_aoi}. In the following, we continue to prove Proposition~\ref{corollary_average_Aoi_over_the network}.

With Lemma~\ref{corollary_average_aoi} at hand, we have Proposition~\ref{corollary_average_Aoi_over_the network} by summing up (\ref{corollary_average_aoi_eq}) over user nodes in $V$, 
which inspires the social commerce network to select its best $\Delta$ as discussed in the following remark.
\begin{remark}
Note that $\eta$ is independent of $S_k$ and $v_i$. Accordingly, $2T\cdot\eta$ is a constant AoI among each node $v_i\in V$, which is also independent of the selection of $S_k$. Due to Equation (\ref{eq_constant_averageaoi}), we have 
\begin{equation*}
    \eta\propto (1-k)^2\Delta^2+2\Delta (T-A_0-1+k-Tk)
\end{equation*}
Thereby, we have $\Delta =\frac{A_0}{(k-1)^2}+\frac{T-1}{(k-1)}$ minimizes the common area of $\eta$, given $A_0$, $k$ and $T$.  
\end{remark}

\begin{small}
\begin{table*}
\begin{equation}\label{large_01eq}
    \begin{split}
      &\sum\limits_{j=2}^{k_i} (2A_{i{j-1}}+\Lambda_{ij})\cdot \Lambda_{ij}\\
&=\underbrace{[2+(i_2-i_1)\Delta+dist(s_{i_2},v_i)+dist(s_{i_1},v_i)]\cdot [(i_2-i_1)\Delta+dist(s_{i_2},v_i)-dist(s_{i_1},v_i)]}_{{\rm i.e.,\;}(2A_{i{1}}+\Lambda_{i2})\cdot \Lambda_{i2}}\\
&\;\;\;+\underbrace{[2+(i_3-i_2)\Delta+dist(s_{i_3},v_i)+dist(s_{i_2},v_i)]\cdot [(i_3-i_2)\Delta+dist(s_{i_3},v_i)-dist(s_{i_2},v_i)]}_{{\rm i.e.,\;}(2A_{i{2}}+\Lambda_{i3})\cdot \Lambda_{i3}}+\\
&\;\;\;\;\vdots\\
&\;\;\;+\underbrace{[2+(i_{k_i}-i_{k_i-1})\Delta+dist(s_{i_{k_i}},v_i)+dist(s_{i_{k_i-1}},v_i)]\cdot [(i_{k_i}-i_{k_i-1})\Delta+dist(s_{i_{k_i}},v_i)-dist(s_{i_{k_i-1}},v_i)]}_{{\rm i.e.,\;}(2A_{i{k_i-1}}+\Lambda_{ik_i})\cdot \Lambda_{ik_i}}\\
&=\underbrace{2[(i_2-i_1)\Delta+dist(s_{i_2},v_i)-dist(s_{i_1},v_i)]}_{\rm denoted\;as \;P_{21}}+\underbrace{[(i_2-i_1)^2\Delta^2+2(i_2-i_1)dist(s_{i_2},v_i)\Delta]}_{\rm denoted \;as \;P_{22}}+\underbrace{dist^2(s_{i_2},v_i)-dist^2(s_{i_1},v_i)}_{\rm denoted\;as \;P_{23}}\\
&\;\;\;
+\underbrace{2[(i_3-i_2)\Delta+dist(s_{i_3},v_i)-dist(s_{i_2},v_i)]}_{\rm denoted\;as \;P_{31}}+\underbrace{[(i_3-i_2)^2\Delta^2+2(i_3-i_2)dist(s_{i_3},v_i)\Delta]}_{\rm denoted \;as \;P_{32}}+\underbrace{dist^2(s_{i_3},v_i)-dist^2(s_{i_2},v_i)}_{\rm denoted\;as \;P_{33}}\\
&\;\;\;\;\vdots\\
&\;\;\;+\underbrace{2[(i_{k_i}-i_{k_i-1})\Delta+dist(s_{i_{k_i}},v_i)-dist(s_{i_{k_i}-1},v_i)]}_{\rm denoted\;as \;P_{k_i1}}+\underbrace{[(i_{k_i}-i_{k_i-1})^2\Delta^2+2(i_{k_i}-i_{k_i-1})dist(s_{i_{k_i}},v_i)\Delta]}_{\rm denoted \;as \;P_{k_i2}}+\underbrace{dist^2(s_{i_{k_i}},v_i)-dist^2(s_{i_{k_i-1}},v_i)}_{\rm denoted\;as \;P_{k_i3}}\\
&=\underbrace{2[(i_{k_i}-i_1)\Delta+dist(s_{i_{k_i}},v_i)-dist(s_{i_1},v_i)]}_{=\sum_{j=2}^{k_i} P_{j1}}+\underbrace{\sum\limits_{j=2}^{k_i}[(i_j-i_{j-1})^2\Delta^2+2\Delta(i_j-i_{j-1})dist(s_{i_j},v_i)]}_{=\sum_{j=2}^{k_i} P_{j2}}+\underbrace{dist^2(s_{i_{k_i}},v_i)-dist^2(s_{i_1},v_i)}_{=\sum_{j=2}^{k_i} P_{j3}}.
    \end{split}
\end{equation}
\end{table*}
\begin{table*}
\begin{equation}\label{corollary_aoi_formulation_avg_eq01}
    \begin{split}
      &\sum\limits_{j=1}^{k_i+1} (2A_{i{j-1}}+\Lambda_{ij})\cdot \Lambda_{ij}\\
       &= \underbrace{[2A_0+1+(i_1-1)\Delta+dist(s_{i_1},v_i)][1+(i_1-1)\Delta+dist(s_{i_1},v_i)]}_{{\rm i.e.,\;}(2A_{i0}+\Lambda_{i1})\cdot \Lambda_{i1}}\\
&\;\;\;+\underbrace{2[(i_{k_i}-i_1)\Delta+dist(s_{i_{k_i}},v_i)-dist(s_{i_1},v_i)]+\sum\limits_{j=2}^{k_i}[(i_j-i_{j-1})^2\Delta^2+2\Delta(i_j-i_{j-1})dist(s_{i_j},v_i)]+dist^2(s_{i_{k_i}},v_i)-dist^2(s_{i_1},v_i)}_{{\rm i.e.,\;}\sum_{j=2}^{k_i} (2A_{ij-1}+\Lambda_{ij})\cdot \Lambda_{ij}}\\
&\;\;\;+\underbrace{[T+1-(i_{k_i}-1)\Delta+dist(s_{i_{k_i}},v_i)]\cdot [T-1-(i_{k_i}-1)\Delta-dist(s_{i_{k_i}},v_i)]}_{{\rm i.e.,\;}(2A_{ik_i}+\Lambda_{ik_i+1})\cdot \Lambda_{ik_i+1}}\\
&=2A_0+2A_0\Delta i_1-2A_0\Delta +\underbrace{2A_0 dist(s_{i_1},v_i)}_{\rm denoted\;as\;Q_{11}}+[1+(i_1-1)\Delta ]^2+\underbrace{2(1+i_1\Delta-\Delta) dist(s_{i_1},v_i)}_{\rm denoted \;as\;Q_{12}}+\underbrace{dist^2(s_{i_1},v_i)}_{\rm denoted\;as\;Q_{21}}\\
&\;\;\;+2(i_{k_i}-i_1)\Delta+\underbrace{2[dist(s_{i_{k_i}},v_i)-dist(s_{i_1},v_i)]}_{\rm denoted\;as\;Q_{13}}+\sum\limits_{j=2}^{k_i}(i_j-i_{j-1})^2\Delta^2+\underbrace{\sum\limits_{j=2}^{k_i}2\Delta(i_j-i_{j-1})dist(s_{i_j},v_i)}_{\rm denoted \;as\;Q_{14}}+\underbrace{dist^2(s_{i_{k_i}},v_i)-dist^2(s_{i_1},v_i)}_{\rm denoted\;as\;Q_{22}}\\
&\;\;\;+T^2-2T(i_{k_i}-1)\Delta +(i_{k_i}-1)^2\Delta^2-1\underbrace{-2dist(s_{i_{k_i}},v_i)}_{\rm denoted\;as\;Q_{15}}\underbrace{-dist^2(s_{i_{k_i}},v_i)}_{\rm denoted \;as\;Q_{23}}\\
&=\underbrace{2A_0\cdot dist(s_{i_1},v_i)+\sum\limits_{j=1}^{k_i}2\Delta(i_j-i_{j-1})\cdot dist(s_{i_j},v_i)}_{\rm i.e.,\;\sum_{x=1}^5 Q_{1x}}+\underbrace{\sum\limits_{j=1}^3Q_{2j}}_{\rm equal\;to\;0}+
\underbrace{2A_0-2\Delta A_0+T^2+2\Delta T+\Delta^2-2\Delta}_{\rm i.e.,\;a\;constant\;}\\
&\;\;\;+2A_0\Delta i_1+2(\Delta-T\Delta-\Delta^2)i_{k_i}+\Delta^2 i_{k_i}^2+\Delta^2\sum\limits_{j=1}^{k_i}(i_j-i_{j-1})^2
    \end{split}
\end{equation}
\end{table*}

\begin{table*}
\begin{equation}\label{large_03eq}
    \begin{split}
A^{\rm avg}_i
&=\frac{1}{2T}\sum\limits_{j=1}^{k_i+1} (2A_{i{j-1}}+\Lambda_{ij})\cdot \Lambda_{ij}\\
&=\underbrace{\frac{1}{2T}[2A_0-2\Delta A_0+T^2+2\Delta T+\Delta^2-2\Delta+2(\Delta-T\Delta-\Delta^2)k+\Delta^2 k^2]}_{\rm i.e.,\;a\;constant}\\
&\;\;\;+\frac{1}{2T}\left\{2A_0\cdot [\Delta i_1+dist(s_{i_1},v_i)]+2\Delta\cdot\sum\limits_{j=1}^{k_i}[(i_j-i_{j-1})\cdot dist(s_{i_j},v_i)]+\Delta^2\sum\limits_{j=1}^{k_i}(i_j-i_{j-1})^2\right\}.
    \end{split}
\end{equation}
\end{table*}
\end{small}

\end{proof}

\end{document}